\documentclass[11pt]{article}

\usepackage[ruled]{algorithm2e}
\SetAlFnt{\small}
\SetAlCapFnt{\small}
\SetAlCapNameFnt{\small}
\SetAlCapHSkip{0pt}
\IncMargin{-\parindent}

\usepackage{algorithmic}
\usepackage{nicefrac}

\usepackage{graphicx}
\usepackage{xcolor}
\usepackage[margin	=1in]{geometry}
\usepackage{amssymb}
\usepackage{amsmath}
\usepackage[square]{natbib}
\setcitestyle{numbers}
\usepackage{amsthm}
\usepackage{bbm}	

\usepackage{listings}

%

\theoremstyle{plain}
\newtheorem{theorem}{Theorem}[section]
\newtheorem{corollary}[theorem]{Corollary}
\newtheorem{lemma}[theorem]{Lemma}
\newtheorem{proposition}[theorem]{Proposition}
\newtheorem{claim}[theorem]{Claim}
\newtheorem{observation}[theorem]{Observation}

\theoremstyle{definition}
\newtheorem{definition}[theorem]{Definition}
\newtheorem{example}[theorem]{Example}

\newcommand{\reals}{\mathbb{R}}

\newcommand{\vals}{\textbf{v}}
\newcommand{{\val}}{v}

\newcommand{\allocs}{\textbf{S}}
\newcommand{{\alloc}}{S}
\newcommand{\alloci}[1]{S_{#1}}

\newcommand{\prices}{\textbf{p}}
\newcommand{{\price}}{p}
\newcommand{\pricei}[1]{p_{#1}}

\newcommand{\lowprices}{\mathbf{\check{p}}}
\newcommand{\highprices}{\mathbf{\hat{p}}}
\newcommand{\lowprice}{\check{p}}
\newcommand{\highprice}{\hat{p}}
\newcommand{\lowpricei}[1]{\check{p}_{#1}}
\newcommand{\highpricei}[1]{\hat{p}_{#1}}

\newcommand{\set}[1]{{\left\{#1\right\}}}
\newcommand{\red}[1]{{\textcolor{red}{#1}}}

\definecolor{awgreen}{RGB}{0, 150, 0}

\newsavebox{\savepar}


\begin{document}

\title{Two-Price Equilibrium
\thanks{This work was partially supported by the European Research Council (ERC) under the European Union's Horizon 2020 research and innovation program (grant agreement No. 866132), by the Israel Science Foundation (grant number 317/17), and by the NSF-BSF (grant number 2020788).}
}

\author{Michal Feldman\\
	Tel Aviv University\\
	\text{michal.feldman@cs.tau.ac.il}
	\and
	Galia Shabtai\\
	Tel Aviv University\\
	\text{galiashabtai@gmail.com}
	\and 
	Aner Wolfenfeld\\ 
	Tel Aviv University\\ 
	\text{anerwolf@gmail.com}
	}

\date{}

\maketitle

\begin{abstract}
Walrasian equilibrium is a prominent market equilibrium notion, but rarely exists in markets with indivisible items.
We introduce a new market equilibrium notion, called two-price equilibrium (2PE). A 2PE is a relaxation of Walrasian equilibrium, where instead of a single price per item, every item has two prices: one for the item's owner and a (possibly) higher one for all other buyers. 
Thus, a 2PE is given by a tuple $(\textbf{S},\mathbf{\hat{p}},\mathbf{\check{p}})$ of an allocation $\allocs$ and two price vectors $\mathbf{\hat{p}},\mathbf{\check{p}}$, where every buyer $i$ is maximally happy with her bundle $S_i$, given prices $\mathbf{\check{p}}$ for items in $S_i$ and prices $\mathbf{\hat{p}}$ for all other items. 
2PE generalizes previous market equilibrium notions, such as conditional equilibrium, and is related to relaxed equilibrium notions like endowment equilibrium. 
We define the {\em discrepancy} of a 2PE --- a measure of distance from Walrasian equilibrium --- as the sum of differences $\hat{p}_j-\check{p}_j$ over all items (normalized by social welfare).
We show that the social welfare degrades gracefully with the discrepancy; namely, the social welfare of a 2PE with discrepancy $d$ is at least a fraction $\frac{1}{d+1}$ of the optimal welfare.
We use this to establish welfare guarantees for markets with subadditive valuations over identical items.
In particular, we show that every such market admits a 2PE with at least $1/7$ of the optimal welfare.
This is in contrast to Walrasian equilibrium or conditional equilibrium which may not even exist.
Our techniques provide new insights regarding valuation functions over identical items, which we also use to characterize instances that admit a WE. 

\end{abstract}

\section{Introduction}

We consider a combinatorial market setting with $m$ items and $n$ buyers. 
Every buyer $i$ has a valuation function, $v_i: 2^{[m]} \rightarrow \reals^+$, which maps every subset of items to a non-negative real number.
A valuation profile is given by a vector $\vals = (v_1, \ldots, v_n)$. 
As standard, we assume that valuation functions are monotone and normalized, i.e., for every $S \subseteq T \subseteq [m], v_i(S) \le v_i(T)$ and  $v_i(\emptyset)=0$ for every $i$.

An allocation is a partition of the items among the buyers; i.e., a vector $\allocs = (\alloci{1}, \ldots, \alloci{n})$ of disjoint sets, where $\alloci{i}$ denotes the bundle allocated to buyer $i$. 
The social welfare (SW) of an allocation $\allocs$ under valuation profile $\vals$ is the sum of the buyers' valuations for their bundles, that is, $SW(\allocs,\vals) = \sum_{i \in [n]}v_i(S_i)$. 
The optimal (welfare-maximizing) allocation is denoted by $OPT(\vals)$.

Suppose every item $j$ has some price $p_j \in \reals^+$. Given a vector of prices $p_1, \ldots, p_m$, and an allocation $\allocs$, the (quasi-linear) utility of buyer $i$ is $u_i(S_i,\prices) = v_i(S_i) - \sum_{j \in S_i} p_j$.

Walrasian equilibrium (WE) is a classical and appealing market equilibrium notion that dates back to the 70's (\citet{W74}). In a WE, despite competition among buyers, every buyer is maximally happy with her bundle and the market clears. 
That is, a WE is given by a tuple ($\allocs,\prices)$ satisfying: (i) Utility maximization: $u_i(\alloci{i},\prices)\geq u_i(T,\prices)$ for every bundle $T \subseteq [m]$, and (ii) Market clearance: all items are sold.
Moreover, by the first welfare theorem (\citet{BSMJ97}), any allocation supported in a WE has optimal social welfare. 

This appealing notion, however, comes with a serious downside, namely, it rarely exists in markets. In particular, it is known to exist for a strict subclass of submodular valuations, known as \emph{gross substitutes} (\citet{KC82}), and in some precise technical sense, gross substitutes is a maximal class for WE existence (\citet{GS99}).

As a result, different relaxations of WE have been introduced and studied. A notable one is the notion of {\em conditional equilibrium} (CE) (\citet{FKL12}), which is a tuple $(\allocs,\prices)$ satisfying: (i) individual rationality: $u_i(\alloci{i},\prices)\geq 0$, (ii) outward stability:  $u_i(\alloci{i},\prices)\geq u_i(T \cup \alloci{i},\prices)$ for every bundle $T \subseteq [m]$ and (iii) market clearance (all items are sold).
That is, the difference between a WE and a CE is that it only requires that buyers do not wish to add items to their bundle, whereas a WE requires that buyers don't wish to change their bundle with any other bundle.
A CE is guaranteed to exist for every market with submodular valuations (or even a superclass of submodular, called XOS). In addition, the CE notion admits an approximate version of the first welfare theorem; namely, any allocation supported in a CE has social welfare of at least half of the optimal social welfare. However, the notion of CE has its limitations --- it may not exist even in a market with two subadditive buyers (see Example \ref{ex:no_ce_2p}).

\textbf{Two-price equilibrium}.
We introduce a new notion of equilibrium that is based on the idea that an item may be assigned more than a single price. 
Indeed, item prices often have different prices based on different buyer characteristics, such as location, time, and history.

The new notion, termed {\em two-price equilibrium} (2PE), utilizes two prices per item. A 2PE is a relaxation of Walrasian equilibrium, and generalizes other WE relaxations (e.g., conditional equilibrium).
Like WE, it is a tuple of allocation and prices that clears the market (every buyer is maximally happy and all items are sold). 
However, in contrast to WE, where every item has a single price, 2PE specifies two prices for each item: one price for the item's owner and (a possibly higher) one for all other buyers.
The utility maximization condition then states that every buyer is maximally happy with her bundle, given that she pays the low price for items in her possession, and the high price for all other items.

Formally, a 2PE is given by a tuple $(\textbf{S},\mathbf{\hat{p}},\mathbf{\check{p}})$ where $\mathbf{\hat{p}},\mathbf{\check{p}} \in \reals^{[m]}$ are the high and low prices, respectively ($\highpricei{j} \geq \lowpricei{j}$ for every item $j$), and where 
(i) Utility maximization: $v_i(S_i) - \sum_{j \in S_i }\check{p}_j \geq v_i(T) - \sum_{j \in T \cap S_i}\check{p}_j - \sum_{j \in T \setminus S_i}\hat{p}_j$ for every bundle $T \subseteq [m]$, and (ii) all items are sold.
We note that Condition (i) of 2PE can be also written as 
$v_i(S_i) - \sum_{j \in S_i \setminus T}\check{p}_j \geq v_i(T) - \sum_{j \in T \setminus S_i}\hat{p}_j$ for every bundle $T \subseteq [m]$.

A 2PE for which $\highpricei{j}=\lowpricei{j}$ for every item $j$ is a Walrasian equilibrium. 
Furthermore, one can show that $(\allocs,\prices)$ is a conditional equilibrium iff $(\allocs,\prices,0)$ is a 2PE (see Proposition \ref{prop:2pe-ce}).
The 2PE notion is related to other relaxations of WE, such as the {\em endowment equilibrium} (\cite{BDO18}, \cite{EFF19}), named after the endowment effect, discovered by Nobel laureate Richard Thaler (\cite{KKT90}, \cite{KKT91}, \cite{KTTR01}), stating that buyers tend to inflate the value of items they own. 
Moreover, as we show in Section \ref{sec:2pe_s2pa}, 2PE is also related to Nash equilibria of simultaneous item auctions --- a simple auction format that attracted much research in the last decade (\cite{BR11}, \cite{CKS16}, \cite{FFGL13}, \cite{FS20}, \cite{CKST16}, \cite{CP14}).

Clearly, a 2PE is guaranteed to exist for every market instance. Moreover, every allocation can be supported in a 2PE. Indeed, for every allocation $\allocs$, the tuple $(\allocs,\highprices,\lowprices)$ where $\highpricei{j}=\infty$ and $\lowpricei{j}=0$ for every item $j$ is a 2PE. 
Thus, arbitrarily bad allocations can be supported in a 2PE. 
This is in stark contrast to Walrasian equilibrium or conditional equilibrium, where supported allocations have optimal welfare (for WE \cite{BSMJ97}) or at least half of the optimal welfare (for CE \cite{FKL12}). 
Moreover, 2PE's in which the high and low prices of items admit a large difference seem to be far from the notion of Walrasian equilibrium.

To study 2PE's that are ``close" to WE, we 
define a new metric, called the \emph{discrepancy} of a 2PE, defined as the sum of price differences over all items, $\sum_{j \in [m]}\left(\highpricei{j}-\lowpricei{j}\right)$, normalized by the social welfare. The discrepancy of a 2PE can be viewed as a measure of the distance between a given 2PE and a Walrasian equilibrium. Indeed, a 2PE with discrepancy 0 is a WE. Thus, every 2PE with discrepancy 0 has optimal welfare. 
We then ask whether there are instances that do not admit WE, or WE relaxations (such as CE), but do admit 2PE with low discrepancy and high welfare.

A particularly interesting class of valuations is the class of {\em subadditive} valuations --- where $v(S)+v(T)\geq v(S \cup T)$ for every sets $S,T \subseteq [m]$.
This is a natural class of valuations, known to be the frontier of ``complement-free" valuations \cite{LLN06}.
Markets with subadditive valuations may not admit any WE or CE, even in cases where all the items are identical. The following question arises: 

{\bf Question:} Do markets with subadditive valuations admit 2PE's with low discrepancy and high welfare?


\subsection{Our Results}

We first show that the social welfare of a 2PE degrades gracefully with its discrepancy.
Namely, the social welfare of a 2PE with discrepancy $d$ is at least a fraction $\frac{1}{d+1}$ of the optimal social welfare.
Armed with this welfare guarantee, our goal is to show the existence of 2PE's with low discrepancy.
We establish such results for markets with subadditive valuations over identical items. 

It should be noted that the problem of efficiently allocating identical items among multiple buyers has played a starring role in classical and algorithmic mechanism design.
Identical item settings are of particular interest in our context, where a WE is guaranteed to exist for submodular valuations, but beyond submodular, even  simple instances may not admit a WE, or even a relaxed equilibrium notion, such as conditional euilibrium.

We first establish a low discrepancy result for markets with two identical subadditive valuations over identical items.

\noindent {\bf Theorem 1:} (see Theorem \ref{thm:sa_n2_q2}) Every market with 2 identical subadditive valuations over identical items admits a 2PE with discrepancy of at most 2, thus welfare of at least $1/3$ of the optimal welfare.

Moreover, we show an instance with 2 identical subadditive valuations, where the minimum discrepancy for any 2PE is $1.3895$ (see Theorem \ref{ex:d_138}).

For an arbitrary number of identical valuations over identical items we show the following: 

\noindent {\bf Theorem 2:} (see Theorem \ref{thm:sa_n_q25}) Every market with (any number of) identical subadditive valuations over identical items admits a 2PE with discrepancy of at most $2.5$, thus welfare of at least $2/7$ of the optimal welfare.

Our main result establishes a constant factor guarantee for markets with heterogeneous subadditve valuations over identical items.

\noindent {\bf Main Theorem:} (see Theorem \ref{thm:sa_n_q6}) Every market with (any number of) subadditive valuations over identical items admits a 2PE with discrepancy of at most $6$, thus welfare of at least $1/7$ of the optimal welfare.


Furthermore, we find an interesting connection between 2PE and pure Nash equilibria (PNE) of simultaneous item auctions \cite{CKS08}. In these auctions every bidder submits a bid for every item, and items are sold simultaneously, each one in a separate auction given its own bids. For example, a simultaneous second price auction (S2PA) is one where every item is sold in a 2nd price auction.

We show a correspondence between 2PEs of a market and PNE of S2PA for the corresponding market (see Proposition \ref{prop:2pe-s2pa_pne}). 
Similar correspondences have been shown for WE and PNE of simultaneous first price auctions \cite{HKMN11} and for conditional equilibria and PNE of S2PA under the no-overbidding assumption \cite{FKL12}. 

Combined with our welfare guarantees for 2PEs in markets with subadditive valuations over identical items, this correspondence implies that S2PA for such markets admit PNE (without no-overbidding) with a constant fraction of the optimal welfare. Note that S2PAs for such markets do not necessarily admit PNE with no-overbidding \cite{BR11}.



To obtain our results, we provide new tools for the analysis of valuation functions over identical items. 
Using these tools, we also establish a necessary and sufficient condition for the existence of WE given an arbitrary valuation profile over identical items (see Theorem~\ref{thm:we_over_ident}).

\paragraph{Open Problems:} 

Our model and results constitute a first step in the analysis of 2PE, and leave some open problems for future work. 
Most immediately, it would be interesting to close the gaps between the upper and lower bounds on the discrepancy of the markets we study.
In addition, it would be interesting to conduct a similar analysis for markets with {\em heterogeneous} items. 
Specifically, do markets with subadditive valuations over heterogeneous items admit a 2PE with constant discrepancy? (This is true for XOS valuations.) If the answer to this question is affirmative, then it implies that every S2PA admits a PNE with constant approximation to the optimal welfare. 
%
Finally, in Section \ref{sec:2pe_s2pa} we show that every PNE of a S2PA has a corresponding 2PE with the same allocation (see Proposition \ref{prop:2pe-s2pa_pne}). \citet{FS20} establish bounds on the price of anarchy of S2PA under a ``no underbidding" assumption for different valuation classes. It would be interesting to study whether a PNE satisfying no underbidding corresponds to a 2PE with bounded discrepancy.

\subsection{Additional Related Work} 
Our work belongs to the line of research proposing relaxed market equilibrium notions that exist quite broadly and gives good welfare guarantees. Obvious examples include the conditional equilibrium notion of \citet{FKL12} discussed above and the combinatorial Walrasian equilibrium notion introduced by \citet{FGL13}. \citet{FKL12} show that a market admits a conditional equilibrium if and only if a S2PA for the corresponding market admits a PNE with no overbidding. 
A related notion is local equilibrium, introduced by \citet{L18}, which generalizes conditional equilibrium by relaxing individual rationality and outward stability.
The endowment equilibrium notion was proposed by \citet{BDO18} to capture the endowment effect discovered by \cite{KKT90}. \citet{BDO18} showed that every market with submodular valuations admits an endowment equilibrium with at least a half of the optimal welfare. \citet{EFF19} introduced a general framework that captures a wide range of formulations for the endowment effect, and showed that stronger endowment effects can lead to existence of endowment equilibrium also in XOS markets. 
We show conditions under which one can transform an endowment equilibrium to a 2PE and vice versa. 
\citet{EFRS20} provide welfare guarantees via pricing for markets with identical items.


\section{Preliminaries}




Recall that we consider a combinatorial market setting with $n$ buyers and $m$ items, where every buyer $i$ has a valuation function $v_i:2^{[m]\rightarrow \reals^+}$ that maps every subset of items $S$ into a real number (denoted by $v_i(S)$).
In this paper we consider mainly valuations over {\em identical} items, where  $v_i:[m]\rightarrow \reals^+$, specifies the value of buyer $i$ for every number of items $k$ between 0 and $m$ (denoted by $v_i(k)$). Such valuations are also called {\em symmetric} valuations. We consider the following symmetric valuation classes\footnote{The definitions of these valuation classes for heterogeneous items appear in Appendix \ref{sec:hetro_val}.}.

\begin{itemize}
    \item Unit demand: there exist a value $a$, s.t. $v(k) = a$ for every $0 < k \le m$
	\item Additive: there exist  a value $a$, s.t. $v(k) = a \cdot k$ for every $0 < k \le m$
	\item Submodular: $v(k) - v(k-1) \ge v(k + 1) - v(k)$ for every $0 < k \le m$
	\item XOS: $v(k) \ge \frac{k}{t}\cdot v(t)$ for any $0 < k < t \le m$
	\item Subadditive: $v(k) + v(t) \ge v(k+t)$ for any $0 < k,t \le m$ s.t. $k + t \le m$
\end{itemize}

\subsection{Walrasian Equilibrium and Relaxations}

In this section we present the definitions of Walrasian equilibrium and conditional equilibrium (for general valuations). The definition of endowment equilibrium is deferred to Section \ref{sec:EE}.

\begin{definition} [\textbf{Walrasian equilibrium (WE) \cite{W74}}]
\label{def:we}
A pair  $(\mathbf{S}, \mathbf{p})$ of an allocation $\allocs = (\alloci{1}, \ldots, \alloci{n})$ and item prices $\mathbf{p} = (p_1, \ldots, p_m)$, is a Walrasian equilibrium if:
\begin{itemize}
	\item [1.] \textbf{Utility maximization}: Every buyer receives an allocation that maximizes her utility given the item prices, i.e., $v_i(S_i) - \sum_{j \in S_i}p_j \geq v_i(T) - \sum_{j \in T}p_j$ for every $i \in [n]$ and bundle $T \subseteq [m]$.
	\item [2.] \textbf{Market clearance}: All items are allocated.\footnote{More precisely, if an item $j$ is not allocated, then $\pricei{j}=0$. One can easily verify that every such unallocated item can be allocated to an arbitrary buyer, and the resulting allocation, together with the original price vector, is also a Walrasian equilibrium. For simplicity of presentation, we assume throughout the paper that all items are allocated.}
\end{itemize}
\end{definition}

\begin{definition} [\textbf{Conditional equilibrium (CE) \cite{FKL12}}]
\label{def:ce}
A pair  $(\mathbf{S}, \mathbf{p})$ of an allocation $\allocs = (\alloci{1}, \ldots, \alloci{n})$ and item prices $\mathbf{p} = (p_1, \ldots, p_m)$, is a Conditional equilibrium if:
\begin{itemize}
	\item [1.] \textbf{Individual rationality}: Every buyer has a non-negative utility, i.e., $v_i(S_i) - \sum_{j \in S_i}p_j \geq 0$ for every $i \in [n]$.
	\item [2.] \textbf{Outward Stability}: No buyer wishes to add items to her bundle, i.e., $v_i(S_i) - \sum_{j \in S_i}p_j \geq v_i(S_i\cup T) - \sum_{j \in S_i\cup T}p_j$ for every $i \in [n]$ and bundle $T \subseteq [m]$.
	\item [3.] \textbf{Market clearance}: All items are allocated.
\end{itemize}
\end{definition}

An additional interesting relaxation of WE that attracted some attention recently is the notion of {\em endowment equilibrium} \cite{BDO18,EFF19}, called after the endowment effect \cite{KKT90,KKT91,KTTR01}.
An endowment equilibrium is a Walrasian equilibrium with respect to {\em endowed valuations}, which inflate the value of items owned by the buyer. In Section~\ref{sec:EE} we discuss the relation between an endowment equilibrium and 2PE.

\section{Two-Price Equilibrium (2PE)}
\label{sec:2pe}
In this section we introduce a new equilibrium notion termed \emph{Two Price Equilibrium (2PE)}. 
As we shall see, 2PE generalizes some market equilibrium notions considered in the literature. 

A 2PE resembles a Walrasian equilibrium, but instead of one price per item, it has two prices per item: high and low. It requires that every buyer receives the bundle that maximizes her utility, given that she pays the low price on items in her bundle, and would have to pay the high price for items not in her bundle.
The formal definition follow.

\begin{definition} [\textbf{Two Price Equilibrium (2PE)}]
\label{def:2pe}
Given a valuation profile $\vals$, a triplet, $(\allocs, \highprices, \lowprices)$, of an allocation $\allocs$, and high and low price vectors $\highprices, \lowprices$, s.t. $\hat{p}_j \ge \check{p}_j \ge 0$ for every item $j \in [m]$, is called a two price equilibrium (2PE) if the following hold: 


\begin{enumerate}
	\item \textbf{Utility maximization:}
	For every bundle $T \subseteq [m]$ and every buyer $i \in [n]$: 
	$$
	v_i(S_i) - \left(\sum_{j \in S_i \cap T}\check{p}_j + \sum_{j \in S_i \setminus T}\check{p}_j \right) \geq v_i(T) - \left(\sum_{j \in S_i \cap T}\check{p}_j + \sum_{j \in T \setminus S_i}\hat{p}_j \right)
	$$
	This is equivalent to
	\begin{equation}
	\label{eqn:2pe}
	v_i(S_i) - \sum_{j \in S_i \setminus T}\check{p}_j \geq v_i(T) - \sum_{j \in T \setminus S_i}\hat{p}_j
	\end{equation}

	\item \textbf{Market clearance:} All items are allocated.
\end{enumerate}
\end{definition}

2PE generalizes both Walrasian equilibrium and conditional equilibrium.
We next present a market that admits no Walrasian equilibrium nor conditional equilibrium, and yet, the optimal allocation can be supported in a 2PE.

\begin{example}
\label{ex:no_ce_2p}
Consider a market with 2 buyers and an item set $M = \{x,y,z,w\}$. Suppose buyer 1 has the following subadditive valuations
\begin{eqnarray*}
v_1(S)
& = &
	\begin{cases}
		0 & S = \emptyset \\
		1 & 1 \leq |S| \leq 3 \\
		2 & S = M
	\end{cases}	
\end{eqnarray*}
and buyer 2 has a unit-demand valuation, where $v_2(S) = 0.9$ for every non-empty bundle. 
We claim that this market has no conditional equilibrium (CE).
To see this, consider two cases. 
Case 1: all items are allocated to buyer 1. For this allocation to be supported by a CE, $p_j \geq 0.9$ for every item $j$. However, this violates individual rationality for buyer 1. 
Case 2: buyer 2 receives a non-empty bundle. 
To satisfy individual rationality, the sum of prices in buyer 2's bundle cannot exceed $0.9$. This, however, violates outward stability for buyer 1. 
We conclude that no CE exists for this market.
The optimal allocation gives all items to buyer 1. 
One can verify that this allocation is supported by a 2PE with $\hat{p}_j = 0.9$ and $\check{p}_j = \frac{1}{3}$ for every item $j$.
Indeed, buyer 1 is maximally happy with the grand bundle, since dropping any item (or both) would give her a lower utility. Similarly, buyer 2 cannot increase her utility, since in order to obtain any item $j$, she would need to pay $\hat{p}_j=0.9$, for a utility of 0.
\end{example}



\paragraph{Relation between 2PE and other market equilibrium notions.}

Clearly, every 2PE in which $\hat{p}_j = \check{p}_j$ for every item $j$ is a WE.
That is, 
for every valuation profile $\vals$, $(\mathbf{S},\mathbf{p})$ is a WE if and only if $(\mathbf{S},\mathbf{p},\mathbf{p})$ is a 2PE for $\vals$. 




The following proposition shows that CE is a special case of 2PE as well. 
\begin{proposition}
\label{prop:2pe-ce}
For every valuation profile $\vals$, $(\mathbf{S},\mathbf{p})$ is a CE if and only if $(\mathbf{S},\mathbf{p},\mathbf{0})$ is a 2PE.
\end{proposition}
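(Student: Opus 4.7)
The plan is to prove both directions of the equivalence by direct manipulation of the defining inequalities, using monotonicity of $v_i$ to bridge the small syntactic gap between the 2PE formulation (which compares $S_i$ against an arbitrary bundle $T$) and CE's outward stability (which compares $S_i$ against $T \cup S_i$). Market clearance is syntactically identical in both notions, so the substantive content lies in matching utility maximization with outward stability under the substitution $\lowprices = \mathbf{0}$, plus a separate check for individual rationality in the reverse direction.

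For the forward direction (CE $\Rightarrow$ 2PE), fix a buyer $i$ and a bundle $T$. With $\check{p}_j = 0$ for every item, the 2PE utility maximization inequality reduces to $v_i(S_i) \geq v_i(T) - \sum_{j \in T \setminus S_i} p_j$. To derive this, I would invoke CE's outward stability at $T$, namely $v_i(S_i) - \sum_{j \in S_i} p_j \geq v_i(S_i \cup T) - \sum_{j \in S_i \cup T} p_j$. Cancelling the common term $\sum_{j \in S_i} p_j$ and using $(S_i \cup T) \setminus S_i = T \setminus S_i$ yields $v_i(S_i) \geq v_i(S_i \cup T) - \sum_{j \in T \setminus S_i} p_j$, and combining with monotonicity $v_i(S_i \cup T) \geq v_i(T)$ produces the required 2PE inequality.

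For the reverse direction (2PE $\Rightarrow$ CE), I would instantiate the 2PE inequality at the bundle $S_i \cup T$ in place of $T$. Since $(S_i \cup T) \setminus S_i = T \setminus S_i$, this gives $v_i(S_i) \geq v_i(S_i \cup T) - \sum_{j \in T \setminus S_i} p_j$, and subtracting $\sum_{j \in S_i} p_j$ from both sides recovers CE's outward stability verbatim. The main obstacle in this direction is individual rationality $v_i(S_i) - \sum_{j \in S_i} p_j \geq 0$, which is not produced by any single instantiation of the 2PE condition with $\lowprices = \mathbf{0}$ (the obvious choice $T = \emptyset$ only yields $v_i(S_i) \geq 0$). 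I would therefore verify IR by a separate argument, combining the outward stability already derived with a suitable comparison bundle and the non-negativity of prices and valuations; this is the most delicate step in the proof, whereas the rest is essentially bookkeeping around the two inequalities.
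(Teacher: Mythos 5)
Your forward direction (CE $\Rightarrow$ 2PE) is correct and mirrors the paper's argument: outward stability applied at $T' = T \setminus S_i$, together with monotonicity, yields the 2PE utility-maximization inequality, and market clearance carries over directly.

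The gap is in the reverse direction, at exactly the point you flagged but did not resolve. You correctly recover outward stability by instantiating the 2PE inequality at $S_i \cup T$, but the ``separate argument'' you promise for individual rationality does not exist. Outward stability $v_i(T' \mid S_i) \le \sum_{j \in T'} p_j$ only bounds marginal values from above; it gives no lower bound on $v_i(S_i)$ in terms of $\sum_{j \in S_i} p_j$, and with $\mathbf{\check{p}} = \mathbf{0}$ the quantity $\sum_{j \in S_i} p_j$ simply never appears in any instantiation of the 2PE condition, so no choice of comparison bundle $T$ can produce it. Concretely, take one buyer, one item, $v_1(\{1\}) = 1$, $S_1 = \{1\}$, $p_1 = 2$. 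Every 2PE inequality reduces to $v_1(\{1\}) \ge v_1(T)$ because $T \setminus S_1 = \emptyset$ for all $T$, so $(\mathbf{S},\mathbf{p},\mathbf{0})$ is a 2PE; yet $v_1(S_1) - p_1 = -1 < 0$, so $(\mathbf{S},\mathbf{p})$ violates CE's individual rationality. Your instinct that IR is the delicate step was correct, but the issue is not merely delicacy: the implication as stated does not hold, and a promissory note cannot close it. (The paper's own proof of this direction asserts that ``individual rationality holds by the definition of 2PE'' with no supporting argument, and the same example shows that assertion is false, so the gap in your proposal coincides with a gap in the paper.)
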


\begin{proof}
Assume that $(\mathbf{S},\mathbf{p},\mathbf{0})$ is a 2PE. We show that $(\mathbf{S},\mathbf{p})$ is a CE.
Market clearance and individual rationality holds by the definition of 2PE. For outward stability, we show that no buyer can gain from adding items to $S_i$. By utility maximization of 2PE, we have that 
for every buyer $i$ and every set $T \subseteq [m]$, $v_i(S_i) \geq v_i(T) - \sum_{j \in T \setminus S_i} p_j$. Specifically, this inequality is true for every set $T = S_i \cup T'$, where $T' \subseteq [m] \setminus S_i$. Thus, for every $T' \subseteq [m] \setminus S_i$, $v_i(S_i) \geq v_i(S_i \cup T') - \sum_{j \in T'} p_j$, or $v_i(T'|S_i) \leq \sum_{j \in T'} p_j$ which is precisely outward stability. It follows that $(\mathbf{S},\mathbf{p})$ is a CE

Now assume that $(\mathbf{S},\mathbf{p})$ is a CE. We  show that $(\mathbf{S},\mathbf{p},\mathbf{0})$ is a 2PE. Market clearance holds by the definition of CE. By outward stability, for every buyer $i \in [n]$ and every set $T \in [m] \setminus S_i$, 
$$v_i(S_i) - \sum_{j \in S_i} p_j \geq v_i(S_i \cup T) - \sum_{j \in S_i} p_j - \sum_{j \in T} p_j$$ 
i.e.,
\begin{eqnarray*}
v_i(S_i) 
& \geq & 
v_i(S_i \cup T) - \sum_{j \in T} p_j
\end{eqnarray*}
For every set $T' \subseteq [m]$, let $T = T' \setminus S_i$. Then, by monotonicity $v_i(S_i) \ge v_i(T') - \sum_{j \in T' \setminus S_i} p_j$. Thus, utility maximization follows, and $(\mathbf{S},\mathbf{p},\mathbf{0})$ is a 2PE.
\end{proof}

In Section~\ref{sec:EE} we show a strong connection between endowment equilibrium and 2PE; namely, we show how a 2PE can be transformed into an endowment equilibrium and vice versa.

\subsection{Relation Between 2PE and Simultaneous Second Price Auctions}
\label{sec:2pe_s2pa}

A simultaneous second price auction (S2PA) is a simple auction format, where, despite the complex valuations of the bidders, every bidder submits a bid on every item, and every item is sold separately in a 2nd price auction; i.e., every item is sold to the bider who submitted the highest bid for that item, and the winner pays the second highest bid for that item. 

A bid profile in a S2PE is denoted by $\mathbf{b} = (\mathbf{b}_1, \ldots, \mathbf{b}_n)$,
where $b_i = (b_{i1}, \ldots, b_{im})$ is the bid vector of bidder $i$; $b_{ij}$ being bidder $i$'s bid on item $j$, for $j=1,\ldots,m$. 
Let $S_i(\mathbf{b})$ denotes the set of items won by buyer $i$, and let $\mathbf{S}(\mathbf{b})=(S_1(\mathbf{b}),\ldots,S_n(\mathbf{b}))$ denote the obtained allocation. 
Finally, let $p_j(\mathbf{b})$ denote the price paid by the winner of item $j$ (i.e., the second highest bid on item $j$). 

A S2PA is not a truthful auction, and its performance is often measured in equilibrium. 
A bid profile is said to be a pure Nash equilibrium in a S2PA if the following holds. 


\begin{definition} 
\label{def:pne}
A bid profile 
$\mathbf{b}$ in a S2PA is a pure Nash equilibrium (PNE) if for any $i \in [n]$ and for any $b_i^{'}$,
$v_i(S_i(\mathbf{b})) - \sum_{j \in S_i(\mathbf{b})} p_j(\mathbf{b}) \ge v_i(S_i(b_i^{'},\mathbf{b}_{-i}) - \sum_{j \in S_i(b_i^{'},\mathbf{b}_{-i})} p_j(b_i^{'},\mathbf{b}_{-i})$.
\end{definition}


The following proposition shows that 
a pure Nash equilibrium of S2PA corresponds to a 2PE of the corresponding market.

\begin{proposition}
\label{prop:2pe-s2pa_pne}
Consider a valuation profile $\vals$. 
The triplet $(\mathbf{S}, \mathbf{\hat{p}}, \mathbf{\check{p}})$ is a 2PE for $\vals$ if and only if there exists a bid profile, $\mathbf{b}$, which is a PNE of the S2PA for $\mathbf{v}$ (under some tie breaking rule),
such that $\mathbf{S}(\mathbf{b}) = \mathbf{S}$, and for every item $j$, $\hat{p}_j = max_{k \in [n]} b_{kj}$ and $\check{p}_j = max2_{k \in [n]} b_{kj}$.
\end{proposition}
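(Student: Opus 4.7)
My plan is to prove the two implications separately, exploiting a natural mapping between the pricing components of a 2PE and the bids in the corresponding S2PA: the high price $\hat{p}_j$ plays the role of the highest bid on item $j$, and the low price $\check{p}_j$ plays the role of the second-highest bid.

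For the direction from 2PE to PNE, given a 2PE $(\mathbf{S}, \hat{\mathbf{p}}, \check{\mathbf{p}})$ I will construct the bid profile $\mathbf{b}$ in which every buyer $i$ bids $\hat{p}_j$ on each item $j \in S_i$ and $\check{p}_j$ on each item $j \notin S_i$. On every item $j$ the owner is the unique highest bidder at $\hat{p}_j$ while every other buyer bids $\check{p}_j$, so the highest and second-highest bids equal $\hat{p}_j$ and $\check{p}_j$ as required, and under a tie-breaking rule that favors the designated owner we obtain $\mathbf{S}(\mathbf{b}) = \mathbf{S}$. To verify the PNE condition, I fix a buyer $i$ and any deviation $b_i'$, letting $T \subseteq [m]$ denote the resulting winning set. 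For each $j \in T$ the second-highest bid (which is the price $i$ pays) equals $\check{p}_j$ when $j \in S_i$ (the remaining bidders all bid $\check{p}_j$) and is at least $\hat{p}_j$ when $j \notin S_i$ (the true owner still bids $\hat{p}_j$), so $i$'s utility from the deviation is at most $v_i(T) - \sum_{j \in T \cap S_i} \check{p}_j - \sum_{j \in T \setminus S_i} \hat{p}_j$. The 2PE utility-maximization condition \eqref{eqn:2pe} is exactly the statement that this quantity does not exceed $i$'s original utility $v_i(S_i) - \sum_{j \in S_i} \check{p}_j$.

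For the reverse direction, given a PNE $\mathbf{b}$ I take $\mathbf{S} = \mathbf{S}(\mathbf{b})$ and let $\hat{p}_j$ and $\check{p}_j$ be the highest and second-highest bids on item $j$; the inequality $\hat{p}_j \ge \check{p}_j$ and market clearance follow immediately. To verify utility maximization, for each buyer $i$ and bundle $T \subseteq [m]$ I consider the explicit deviation in which $i$ bids a very large value on every $j \in T$ and $0$ elsewhere. This deviation wins a superset of $T$, and on each $j \in T$ the new second-highest bid equals $\check{p}_j$ when $j \in S_i$ (removing $i$'s own bid leaves the originally second-highest bid from the other buyers) and $\hat{p}_j$ when $j \notin S_i$ (the original highest bid came from a buyer other than $i$); on any additional item $i$ happens to win, the second-highest bid is $0$. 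Using monotonicity of $v_i$ to replace the winning set by $T$ and plugging into the PNE inequality yields exactly \eqref{eqn:2pe}.

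The most delicate step is the pricing bookkeeping, where the identity of the second-highest bid after $i$'s hypothetical deviation depends on whether $i$ was the original winner of each item. Splitting the analysis into the $j \in S_i$ and $j \notin S_i$ cases, which mirrors exactly the structure of the 2PE utility-maximization condition, is what makes the correspondence go through cleanly, and the tie-breaking freedom in the statement absorbs the edge cases where $\hat{p}_j = \check{p}_j$.
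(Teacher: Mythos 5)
Your proof is correct and follows the same blueprint as the paper's: use the natural bid construction $b_{ij}=\hat p_j$ for $j\in S_i$ and $b_{ij}=\check p_j$ otherwise, with tie-breaking favoring the owner, and observe that the payments under any deviation are $\check p_j$ on items owned by the deviator and $\hat p_j$ on items owned by others. The one spot where you go slightly beyond the paper's write-up is in the PNE~$\Rightarrow$~2PE direction. The paper simply considers an arbitrary deviation $b_i'$, sets $T = S_i(b_i',\mathbf{b}_{-i})$, and derives the 2PE inequality for that $T$ --- implicitly treating every bundle $T$ as achievable by some deviation. You instead fix an arbitrary target $T$, deviate by bidding high on $T$ and zero elsewhere, observe that the winning set is some $T'\supseteq T$ with zero payments on $T'\setminus T$, and use monotonicity of $v_i$ to pass from $T'$ down to $T$. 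This cleanly handles the edge case where the deviation overshoots (e.g.\ items with all-zero bids and an indifferent tie-break), a point the paper elides. The substance is identical, but your treatment of the achievability issue is the more careful one.
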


\begin{proof}
Assume that $\mathbf{b}$ is a S2PA PNE for $\mathbf{v}$. We show that $(\mathbf{S}(\mathbf{b}), \mathbf{\hat{p}}, \mathbf{\check{p}})$ is a 2PE. Notice that for every $j \in [m]$, $\hat{p}_j = max_{k \in [n]} b_{kj} \ge max2_{k \in [n]} b_{kj} = \check{p}_j$.
Since $\mathbf{b}$ is a PNE, for every buyer $i$, $u_i(S_i(\mathbf{b}), v_i) \ge u_i(S_i(b'_i, \mathbf{b}_{-i}), v_i)$ for every bid $b'_i$. Let $T = S_i(b'_i, \mathbf{b}_{-i})$ and substitute $\hat{p}_j = max_{k \in [n]} b_{kj}$ and $\check{p}_j = max2_{k \in [n]} b_{kj}$, then:

\begin{eqnarray*}
v_i(S_i(\mathbf{b})) - \sum_{j \in S_i(\mathbf{b})} max2_{k \in [n]} b_{kj} 
& \ge &
v_i(T) - \sum_{j \in T \cap S_i(\mathbf{b})} max2_{k \in [n]} b_{kj} - \sum_{j \in T \setminus S_i(\mathbf{b})} max_{k \in [n]} b_{kj}
\end{eqnarray*}
By subtracting $\sum_{j \in T \cap S_i(\mathbf{b})} max2_{k \in [n]} b_{kj}$ from both sides and rearranging, we get:
\begin{eqnarray*}
v_i(S_i(\mathbf{b})) - \sum_{j \in S_i(\mathbf{b}) \setminus T} max2_{k \in [n]} b_{kj} 
& \ge &
v_i(T) - \sum_{j \in T \setminus S_i(\mathbf{b})} max_{k \in [n]} b_{kj},
\end{eqnarray*}
which is precisely the utility maximization condition of 2PE.
Moreover, all items are allocated in $\mathbf{S}(\mathbf{b})$, and therefore the market clearance condition is also satisfied.
Hence, the triplet $(\mathbf{S}(\mathbf{b}), \mathbf{\hat{p}}, \mathbf{\check{p}})$ is a 2PE.

Now assume that $(\mathbf{S}, \mathbf{\hat{p}}, \mathbf{\check{p}})$ is a 2PE. We need to show that there exists a bid profile, $\mathbf{b}$, which is a S2PA PNE for $\mathbf{v}$ (under some tie breaking rule),
such that $\mathbf{S}(\mathbf{b}) = \mathbf{S}$, $\hat{p}_j = max_{k \in [n]} b_{kj}$ and $\check{p}_j = max2_{k \in [n]} b_{kj}$.
Let $\mathbf{b}$ be a bid profile, such that for every buyer $i$ and every item $j$:
$$b_{ij} = 
	\begin{cases}
		\hat{p}_j & j \in S_i \\
		\check{p}_j & j \notin S_i
	\end{cases}$$
As $\hat{p}_j \ge \check{p}$, if tie breaking is done according to $\mathbf{S} = (S_1, S_2, \ldots, S_n)$, then $\mathbf{S}(\mathbf{b}) = \mathbf{S}$. We show that $\mathbf{b}$ is a PNE. Let $b'_i$ be some bid vector of buyer $i$, and let $T = S(b'_i, \mathbf{b}_{-i})$. Then,
\begin{eqnarray*}
u_i(S_i(\mathbf{b}), v_i)
& = &
v_i(S_i(\mathbf{b})) - \sum_{j \in S_i(\mathbf{b})} max2_{k \in [n]} b_{kj}  \\
& = & 
v_i(S_i(\mathbf{b})) - \sum_{j \in S_i(\mathbf{b}) \setminus T} max2_{k \in [n]} b_{kj} - \sum_{j \in S_i(\mathbf{b}) \cap T} max2_{k \in [n]} b_{kj} \\
& = &
v_i(S_i(\mathbf{b})) - \sum_{j \in S_i(\mathbf{b}) \setminus T} \check{p}_j - \sum_{j \in S_i(\mathbf{b}) \cap T} max2_{k \in [n]} b_{kj} \\
& \ge &
v_i(T) - \sum_{j \in T \setminus S_i(\mathbf{b})} \hat{p}_j - \sum_{j \in S_i(\mathbf{b}) \cap T} max2_{k \in [n]} b_{kj} \\
& = &
v_i(T) - \sum_{j \in T \setminus S_i(\mathbf{b})} max_{k \in [n]} b_{kj} - \sum_{j \in S_i(\mathbf{b}) \cap T} max2_{k \in [n]} b_{kj} \\
& = &
u_i(T, v_i),
\end{eqnarray*}
where the inequality follows from the fact that 
 $(\mathbf{S}, \mathbf{\hat{p}}, \mathbf{\check{p}})$ is a 2PE.
\end{proof}



\section{Discrepancy Factor of 2PE}
\label{sec:disc}

The main difference between a two-price equilibrium and a Walrasian equilibrium is the use of two prices per item (high price $\highpricei{j}$ and low price $\lowpricei{j}$) rather than a single price. 
This makes the notion of 2PE similar in spirit to WE. Namely, prices are still almost anonymous (in contrast to other approaches where prices are buyer-dependent, see, e.g., \cite{HR09}), and every buyer is maximally happy with her bundle. 
The closer the two prices $\highpricei{j}$ and $\lowpricei{j}$ are together, the better the 2PE resembles a WE. 
Indeed, in the extreme case, where $\highpricei{j}=\lowpricei{j}$ for every item $j$, the two notions coincide. 


Consequently, a natural measure of distances of a 2PE from WE is the sum of price differences over all items.
We further normalize the sum of price differences by the social welfare of the allocation, so that 
the discrepancy is independent of the units used (e.g., USD vs. Euros)\footnote{Formally, suppose $(\mathbf{S}, \mathbf{\hat{p}}, \mathbf{\check{p}})$ is a 2PE with respect to valuation profile $\mathbf{v}$, and let $\mathbf{v}'$ be a valuation profile such that $v_i'(T) = c\cdot v_i(T)$ for every buyer $i$ and bundle $T$ and some constant $c \in \reals^+$. Clearly, $(\mathbf{S}, c\cdot\mathbf{\hat{p}}, c\cdot\mathbf{\check{p}})$ is a 2PE w.r.t. $\mathbf{v}'$, which has the same discrepancy as that of $(\mathbf{S}, \mathbf{\hat{p}}, \mathbf{\check{p}})$.}.


This motivate us to define the {\em discrepancy} of a 2PE as follows.


\begin{definition} [\textbf{Discrepancy}]
\label{def:qspp}
The {\em discrepancy} of a 2PE $(\mathbf{S}, \mathbf{\hat{p}}, \mathbf{\check{p}})$ under valuation profile $\mathbf{v}$ is
\begin{eqnarray}
\label{eqn:q}
D(\mathbf{S}, \mathbf{\hat{p}}, \mathbf{\check{p}}) 
& = &
\frac{\sum_{j \in [m]} (\hat{p}_j - \check{p}_j)}{SW(\mathbf{S}, \vals)}
\end{eqnarray}
\end{definition}

Low discrepancy is a desired property; a 2PE with low discrepancy is closer in spirit to WE in both fairness and simplicity. 
As we shall soon show, low discrepancy also implies high efficiency.

The 2PE notion is appealing from an existence perspective; indeed, every allocation $\allocs$ can be supported in a 2PE by setting $\hat{p}_j = \infty$, $\check{p}_j = 0$ for every item $j$.
However, from a welfare maximization perspective, no guarantee can be given. 
This is in stark contrast to WE (where, by the 1st welfare theorem, every allocation supported in a WE has optimal welfare), and to weaker equilibrium notions, such as conditional equilibrium (where every allocation supported in a CE gives at least half of the optimal welfare \cite{FKL12}).
In contrast, an allocation supported in a 2PE may have an arbitrarily low welfare. 

The following proposition shows that the social welfare of a 2PE degrades gracefully with its discrepancy.


\begin{proposition}
\label{prop:disc_sw}
(low discrepancy implies high welfare)
	Let $(\mathbf{S}, \mathbf{\hat{p}}, \mathbf{\check{p}})$ be a 2PE for valuation $\mathbf{v}$ with discrepancy $d$. Then, $SW(\mathbf{S}, \vals) \geq \frac{1}{1+d}OPT(\vals)$.
\end{proposition}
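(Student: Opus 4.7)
The plan is to compare the welfare of the 2PE allocation $\mathbf{S}$ to the optimal allocation by invoking utility maximization against the optimal bundles, and then to exploit the fact that each item is counted at most once on each side of an appropriate telescoping.

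First, let $\mathbf{S}^* = (S_1^*,\ldots,S_n^*)$ be an optimal allocation, i.e., $\sum_i v_i(S_i^*) = OPT(\vals)$. For each buyer $i$, I would apply the utility-maximization condition~\eqref{eqn:2pe} of 2PE with the deviation $T = S_i^*$. This gives
\[
v_i(S_i) - \sum_{j \in S_i \setminus S_i^*}\check{p}_j \;\geq\; v_i(S_i^*) - \sum_{j \in S_i^* \setminus S_i}\hat{p}_j,
\]
which I would rearrange into the one-sided bound
\[
v_i(S_i^*) \;\leq\; v_i(S_i) \;+\; \sum_{j \in S_i^* \setminus S_i}\hat{p}_j \;-\; \sum_{j \in S_i \setminus S_i^*}\check{p}_j.
\]

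Next, I would sum this inequality over all buyers $i$. The key observation is that the two double sums reduce to single sums over items. Indeed, for any item $j$, let $i(j)$ be its owner in $\mathbf{S}$ and $i^*(j)$ its owner in $\mathbf{S}^*$. If $i(j) = i^*(j)$, then $j$ appears in neither $S_i^*\setminus S_i$ nor $S_i\setminus S_i^*$ for any $i$. If $i(j) \neq i^*(j)$, then $j$ contributes $\hat{p}_j$ exactly once (in the term for $i = i^*(j)$) to the first double sum, and $\check{p}_j$ exactly once (in the term for $i = i(j)$) to the second. Letting $D = \{j : i(j) \neq i^*(j)\}$, summation therefore yields
\[
OPT(\vals) \;\leq\; SW(\mathbf{S},\vals) \;+\; \sum_{j \in D}\bigl(\hat{p}_j - \check{p}_j\bigr).
\]

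Finally, since $\hat{p}_j \geq \check{p}_j$ for every item, I would bound $\sum_{j \in D}(\hat{p}_j-\check{p}_j) \leq \sum_{j \in [m]}(\hat{p}_j - \check{p}_j) = d \cdot SW(\mathbf{S},\vals)$, using the definition of discrepancy. Combining gives $OPT(\vals) \leq (1+d)\,SW(\mathbf{S},\vals)$, i.e., the claimed bound. There is no serious obstacle here; the only step that requires any care is the item-by-item bookkeeping that collapses the double sums into a sum of per-item price gaps on $D$, and that argument is exactly the $\sum_i \sum_{j\in S_i^*\setminus S_i} = \sum_{j\in D}$ cancellation just described.
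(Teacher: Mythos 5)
Your proof is correct and follows essentially the same route as the paper's: apply the 2PE utility-maximization condition with $T = S_i^*$, sum over buyers, collapse the double sums to a single sum over items whose owner differs between $\mathbf{S}$ and $\mathbf{S}^*$ (your set $D$ is exactly the paper's $M\setminus L$), and then extend to all items using $\hat{p}_j \ge \check{p}_j$ before applying the discrepancy definition.
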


\begin{proof}
Let $\mathbf{S}^*$ be an optimal allocation. Let $L_i = \{j | j \in S_i \cap S_i^*\}$ and $L = \bigcup_{i=1}^nL_i$. Since $(\mathbf{S}, \mathbf{\hat{p}}, \mathbf{\check{p}}) $ is a 2PE we have that for every buyer $i$ and bundle $T \subseteq M$: $$\sum_{j \in T \setminus S_i}\hat{p}_j - \sum_{j \in S_i \setminus T}\check{p}_j \geq v_i(T) - v_i(S_i).$$ 
Summing the above inequality over all $i \in [n]$ and substituting $T$ with $S_i^*$ gives $$\sum_{i \in [n]}(\sum_{j \in S_i^* \setminus S_i}\hat{p}_j -\sum_{j \in S_i \setminus S_i^*}\check{p}_j) = \sum_{j \in M \setminus L} (\hat{p}_j -\check{p}_j) \geq OPT(\vals) - SW(\mathbf{S}, \vals)$$
It follows that $$\sum_{j \in M} (\hat{p}_j -\check{p}_j) = \sum_{j \in M \setminus L} (\hat{p}_j -\check{p}_j) + \sum_{j \in L} (\hat{p}_j -\check{p}_j) \geq OPT(\vals) - SW(\mathbf{S}, \vals) + \sum_{j \in L} (\hat{p}_j -\check{p}_j) \geq OPT(\vals) - SW(\mathbf{S}, \vals)$$ 
Substitutes $d = \frac{\sum_{j \in M} (\hat{p}_j - \check{p}_j)}{SW(\mathbf{S}, \vals)}$ and rearrange to get the desired bound $SW(\mathbf{S}, \vals) \geq \frac{1}{1+d}OPT(\vals)$
\end{proof}



We also define the discrepancy of a given allocation $\mathbf{S}$ as the discrepancy of the smallest-discrepancy 2PE supporting $\mathbf{S}$.
\begin{definition}
Given valuation profile $\vals$, the discrepancy of an allocation $\mathbf{S}$ is defined as $$D(\mathbf{S}) =  min_{(\mathbf{S}, \mathbf{\hat{p}}, \mathbf{\check{p}})\in 2PE} D(\mathbf{S}, \mathbf{\hat{p}}, \mathbf{\check{p}})$$
\end{definition} 

For all reasons mentioned above, it is desirable to  identify allocations with low discrepancy.

Clearly, if $\mathbf{S}$ is supported by a WE, then $D(\mathbf{S}) = 0$. Indeed, every allocation supported by a WE has optimal welfare.

It is also known that every allocation supported by a conditional equilibrium has at least a half of the optimal welfare \cite{FKL12}.
The following proposition shows that the discrepancy of every such allocation is at most 1. 
Together with 
Proposition~\ref{prop:disc_sw}, it gives an alternative proof to the welfare guarantee of a conditional equilibrium.

\begin{proposition}
\label{prop:dce}
	Let $\mathbf{S}$ be an allocation that is supported by a conditional equilibrium. Then, $D(\mathbf{S}) \leq 1$. Moreover, this is tight.
\end{proposition}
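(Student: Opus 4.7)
The plan is to exploit the correspondence between CE and 2PE given by Proposition~\ref{prop:2pe-ce}: starting from a CE $(\mathbf{S},\mathbf{p})$, the triple $(\mathbf{S},\mathbf{p},\mathbf{0})$ is a valid 2PE supporting $\mathbf{S}$, so its discrepancy gives an upper bound on $D(\mathbf{S})$. That discrepancy equals $\frac{\sum_j p_j}{SW(\mathbf{S},\mathbf{v})}$, so the whole task reduces to showing $\sum_{j \in [m]} p_j \leq SW(\mathbf{S},\mathbf{v})$.

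The bound on total prices is the core step, and it is a short book-keeping argument using only individual rationality and market clearance of the CE. Individual rationality gives $v_i(S_i) \geq \sum_{j \in S_i} p_j$ for each buyer $i$. Summing over buyers and using the fact that $\mathbf{S}$ is a partition of $[m]$ (market clearance) yields
\begin{equation*}
SW(\mathbf{S},\mathbf{v}) \;=\; \sum_{i \in [n]} v_i(S_i) \;\geq\; \sum_{i \in [n]} \sum_{j \in S_i} p_j \;=\; \sum_{j \in [m]} p_j,
\end{equation*}
which immediately gives $D(\mathbf{S}) \leq D(\mathbf{S},\mathbf{p},\mathbf{0}) \leq 1$.

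For tightness, I would exhibit a small instance where any 2PE supporting the allocation has discrepancy exactly $1$. Take two buyers and a single item, with $v_1(\{1\}) = v_2(\{1\}) = 1$, and let $\mathbf{S}$ give the item to buyer $1$. The pair $(\mathbf{S}, p_1 = 1)$ is a CE (individual rationality and outward stability hold with equality). Moreover, any 2PE supporting $\mathbf{S}$ must satisfy the utility-maximization condition \eqref{eqn:2pe} for buyer $2$ with $T = \{1\}$, giving $0 \geq v_2(\{1\}) - \hat{p}_1 = 1 - \hat{p}_1$, so $\hat{p}_1 \geq 1$. Since $\check{p}_1 \geq 0$ and $SW(\mathbf{S},\mathbf{v}) = 1$, every 2PE has discrepancy at least $1$, matching the upper bound.

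I do not anticipate a real obstacle here: once Proposition~\ref{prop:2pe-ce} is invoked, the argument is a one-line application of IR plus market clearance, and the tight example is essentially forced by the asymmetry between owner and non-owner prices built into the 2PE definition.
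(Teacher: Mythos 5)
Your upper bound argument is correct and is essentially identical to the paper's: form the 2PE $(\mathbf{S},\mathbf{p},\mathbf{0})$ from the CE, observe $D(\mathbf{S},\mathbf{p},\mathbf{0}) = \frac{\sum_j p_j}{SW(\mathbf{S},\vals)}$, and bound the numerator by $SW(\mathbf{S},\vals)$ using individual rationality summed over the (disjoint) bundles $S_i$.

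The tightness example, however, is wrong. In your single-item instance with $v_1(\{1\})=v_2(\{1\})=1$ and $S_1 = \{1\}$, the 2PE constraints indeed force $\hat{p}_1 \geq 1$, but nothing forces $\check{p}_1 = 0$; the definition only requires $0 \leq \check{p}_1 \leq \hat{p}_1$. The inference ``$\hat{p}_1 \geq 1$ and $\check{p}_1 \geq 0$, therefore $\hat{p}_1 - \check{p}_1 \geq 1$'' is simply invalid. In fact $(\mathbf{S},\hat{p}_1{=}1,\check{p}_1{=}1)$ is a valid 2PE (indeed a Walrasian equilibrium) with discrepancy $0$, so in your instance $D(\mathbf{S})=0$, not $1$. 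The paper avoids this pitfall by using a \emph{two-item} example with cross-buyer envy: with $M=\{x,y\}$, unit-demand valuations $v_1(\{x\})=v_2(\{y\})=2$, $v_1(\{y\})=v_2(\{x\})=1$, and the non-optimal allocation $S_1=\{y\}$, $S_2=\{x\}$. There, buyer $1$'s deviation to $T=\{x\}$ forces $\hat{p}_x - \check{p}_y \geq 1$ and symmetrically buyer $2$'s deviation forces $\hat{p}_y - \check{p}_x \geq 1$; summing these two \emph{cross} constraints gives $(\hat{p}_x - \check{p}_x)+(\hat{p}_y - \check{p}_y) \geq 2 = SW(\mathbf{S},\vals)$, so every supporting 2PE has discrepancy at least $1$. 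The key missing idea in your example is that a lower bound on discrepancy needs constraints that pair an item's \emph{high} price against a \emph{different} item's \emph{low} price, which cannot happen with a single item.
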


\begin{proof}
let $(\mathbf{S} ,\mathbf{p})$ be a CE then: $$D(\mathbf{S}, \mathbf{p}, \mathbf{0}) = \frac{\sum_{i \in [n]}\sum_{j \in S_i} p_j}{SW(\mathbf{S}, \vals)} \leq \frac{\sum_{i \in [n]}v_i(S_i)}{SW(\mathbf{S}, \vals)} = 1$$ where the inequality follows by individual rationality of CE.

Next, we show an example in which the above bound is tight. Let $M = \{x, y\}$, and suppose buyers 1, 2 have unit-demand valuations where $v_1(\{x\}) = v_2(\{y\}) = 2$ and $v_1(\{y\}) = v_2(\{x\}) = 1$. Consider the allocation $S_1 = \{y\}$ and $S_2 = \{x\}$. One can easily verify that $(\mathbf{S}, \mathbf{p})$ is a CE for $p_x=p_y=1$. Now observe that $(\mathbf{S}, \mathbf{p}, \mathbf{0})$ is a 2PE for $\mathbf{v}$, and $D(\mathbf{S}, \mathbf{p}, \mathbf{0}) = 1$.
\end{proof}



It is shown in \cite{CKS08} that for any XOS valuation profile, the optimal allocation is supported by a S2PA PNE with no-overbidding.
By \cite{FKL12}, every S2PA PNE with no-overbidding can be transformed into a CE that preserves the same allocation. 
It then follows by Proposition \ref{prop:dce} that the discrepancy of the optimal allocation in every XOS valuation profile is at most 1.
In Appendix \ref{sec:d_upper_m} we show that for any general valuation profile, every welfare-maximizing allocation has a discrepancy of at most $m$.

Notably, there exist markets that admit neither Walrasian equilibrium, nor conditional equilibrium, and yet, the optimal allocation is supported by a 2PE with small discrepancy. 
For example, the market in Example~\ref{ex:no_ce_2p} admits no Walrasian equilibrium, nor conditional equilibrium, and yet, the optimal allocation, $S^*$, is supported in a 2PE with discrepancy $D(\mathbf{S}^*, \mathbf{\hat{p}}, \mathbf{\check{p}}) = \frac{3.6 - \frac{4}{3}}{2} = \frac{17}{15}$.

\section{Geometric Properties of Valuations over Identical Items}
\label{sec:vals_id}
In this section we introduce new  geometric properties of valuations over identical items, which prove useful in establishing upper bounds on the discrepancy of 2PE in such markets. 
Hereafter, we refer to a valuation over identical items as a {\em symmetric} valuation.




\begin{definition} [\textbf{max-forward-slope ($\overrightarrow{\Delta}$)}]
\label{def:fdelta}
Given a symmetric valuation $v$, and some $0 \le k < m$, and  $1 \le r \le m-k$, the $(k,r)$-max-forward-slope is defined as 
\begin{eqnarray}
\label{eqn:fdelta}
\overrightarrow{\Delta}_{v}(k,r) = \max_{l = 1, 2, \ldots, r} \set{\frac{v(k+l)-v(k)}{l}}
\end{eqnarray}
We say that $l'$ realizes $\overrightarrow{\Delta}_{v}(k,r)$, if $l'$ is the minimum index s.t. $\overrightarrow{\Delta}_{v}(k,r) = \frac{v(k+l')-v(k)}{l'}$.
In addition, we use $\overrightarrow{\Delta}_{v}(k)$ to denote $\overrightarrow{\Delta}_{v}(k,m-k)$, and refer to $\overrightarrow{\Delta}_{v}(k)$ as the $k$-max-forward-slope.
\end{definition}

The sorted vector of the max-forward slopes is defined by $\overrightarrow{\Delta}^s_v(k)$. That is, $\overrightarrow{\Delta}^s_v(k) \le \overrightarrow{\Delta}^s_v(k+1)$ for every $0 \le k < m-1$.


\begin{definition} [\textbf{min-backward-slope ($\overleftarrow{\Delta}$)}]
\label{def:bdelta}
Given a symmetric valuation $v$, and some $0 \le k < m$, and  $1 \le r \le m-k$, the $(k,r)$-min-backward-slope is defined as 
\begin{eqnarray}
\label{eqn:bdelta}
\overleftarrow{\Delta}_{v}(k,r) = \min_{l = 1, 2, \ldots, r} \set{\frac{v(k)-v(k-l)}{l}}
\end{eqnarray}
We say that $l'$ realizes $\overleftarrow{\Delta}_{v}(k,r)$, if $l'$ is the minimum index s.t. $\overleftarrow{\Delta}_{v}(k,r) = \frac{v(k)-v(k-l')}{l'}$.
In addition, we use $\overleftarrow{\Delta}_{v}(k)$ to denote $\overleftarrow{\Delta}_{v}(k,k)$, and refer to $\overleftarrow{\Delta}_{v}(k)$ as the $k$-min-backward-slope.
\end{definition}



{\bf Submodular closure:} Given a symmetric valuation $v$, the minimal submodular valuation that (point-wise) upper bounds it is called the submodular closure (SM-closure) of $v$. The SM-closure of a function is known to be unique (see, e.g., \citet{EFRS20});
%
see Figure \ref{fig:sm-closure} for an illustration.

\begin{figure}[h!]
\begin{center}
\includegraphics[width=16cm,height=8cm]{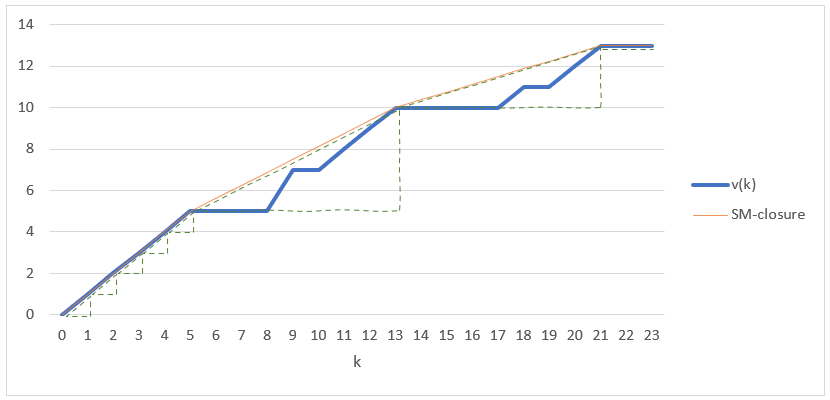}
\end{center}
\caption{The blue curve depicts a valuation $v$, and the orange curve depicts the submodular closure of $v$. The right triangles between adjacent intersecting indices are shown in green. The list of intersecting indices is: $I_v = \set{0, 1, 2, 3,4,5, 13, 21, 22, 23}$, where $i_0 = 0$, $i_1 = 1$, $\ldots$, $i_8 = 21$, $i_9 = 22$ and $i_{10} = 23$. There are $9$ right triangles, where the last two, $T_{8}$ and $T_{9}$, are degenerated, with slope $0$.}
\label{fig:sm-closure}
\end{figure}

Given a symmetric valuation function $v: [m] \rightarrow R^+$, we define the following: Let $\tilde{v}: [m] \rightarrow R^+$ be the SM-closure of $v$, and let $I_v$ be the set of indices $k \in [m]$ for which $v(k) = \tilde{v}(k)$, i.e., the set of points in which the $v$ and $\tilde{v}$ intersect. We refer to $I_v$ as the set of intersection indices.
For $0 \le l < |I_v|-1$, let $T_{{l}}$ be the right triangle between two adjacent intersecting indices, $i_{l}$ and $i_{l+1}$, with vertices $(i_{l}, v(i_{l}))$, $(i_{l+1}, v(i_{l}))$ and $(i_{l+1}, v(i_{l+1}))$ (see Figure \red{\ref{fig:sm-closure}}). Let $\alpha_l = \frac{v(i_{l+1}) - v(i_{l})}{i_{l+1} - i_{l}}$ be the \emph{slope} of the triangle $T_{{l}}$.
If $v(i_{l+1}) = v(i_{l})$, then $T_{{l}}$ is a degenerated triangle (a line), with slope $\alpha_l = 0$. 
Let $T_v = \set{T_{0}, T_{1}, \ldots, T_{{|I_v|-2}}}$ be the set of all right triangles of $v$.
For every $0 \le l < |I_v|-1$, and every $i_{l} \le k < i_{l+1}$, we say that $k \in T_{{l}}$.

In what follows we present some useful lemmas and theorems regarding symmetric valuation functions. The complete proofs, as well as additional observations, appear in Appendix \ref{sec:miss_valis_id}.







The following lemma gives a lower bound on $v(k)$ as a function of the max-forward-slopes of $v$ up to $k$. 


\begin{lemma}
\label{lem:intgr-lem}
For every symmetric subadditive valuation function $v$, and $0 < k \le m$, $v(k) \ge k \cdot \min_{0 \le k' < k} \set{\overrightarrow{\Delta}_{v}(k')}$.

\end{lemma}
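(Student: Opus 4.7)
The plan is to exhibit a single ``anchor'' index $k^{\ast}\in\{0,1,\ldots,k-1\}$ from which every forward slope of $v$ is at most $v(k)/k$. Such an anchor immediately yields $\overrightarrow{\Delta}_{v}(k^{\ast})\le v(k)/k$, and hence $\min_{0\le k'<k}\overrightarrow{\Delta}_{v}(k')\le v(k)/k$, which is exactly the desired $v(k)\ge k\cdot\min_{0\le k'<k}\overrightarrow{\Delta}_{v}(k')$.

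To locate $k^{\ast}$, I would introduce the auxiliary function $g:\{0,1,\ldots,m\}\to\mathbb{R}$ defined by
\[
g(k') \;=\; v(k') \;-\; \frac{k'}{k}\,v(k).
\]
Note that $g(0)=g(k)=0$, and $g$ attains a global maximum on its finite domain. The key observation is a one-line consequence of subadditivity: for every $k'\ge k$,
\[
g(k') - g(k'-k) \;=\; v(k') - v(k'-k) - v(k) \;\le\; 0,
\]
since $v(k')=v((k'-k)+k)\le v(k'-k)+v(k)$. Thus if $k^{\ast}$ is any global maximizer of $g$, then $k^{\ast}-k$ is also a global maximizer whenever $k^{\ast}\ge k$. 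Iterating this ``shift down by $k$'' a finite number of times, I may assume $k^{\ast}\in\{0,1,\ldots,k-1\}$.

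The global maximality of $k^{\ast}$ then implies $g(k^{\ast}+l)\le g(k^{\ast})$ for every $l\ge1$ with $k^{\ast}+l\le m$, which rearranges to
\[
\frac{v(k^{\ast}+l)-v(k^{\ast})}{l} \;\le\; \frac{v(k)}{k}.
\]
Taking the maximum over admissible $l$ in the definition of $\overrightarrow{\Delta}_{v}(k^{\ast})$ gives $\overrightarrow{\Delta}_{v}(k^{\ast})\le v(k)/k$. Since $k^{\ast}\in\{0,\ldots,k-1\}$, this index is feasible in the outer minimum, so $\min_{0\le k'<k}\overrightarrow{\Delta}_{v}(k')\le v(k)/k$, and multiplying by $k$ finishes the proof.

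The main (and essentially only) hurdle is guessing the right auxiliary function $g$; once $g$ is in hand, the shift-down inequality is an immediate consequence of subadditivity, and the argument is purely combinatorial, sidestepping any induction on $k$ or ``greedy walk with overshoot handling.''
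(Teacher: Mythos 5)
Your proof is correct, and it takes a genuinely different route from the paper's. The paper proves this lemma via the submodular-closure machinery developed in Section \ref{sec:vals_id}: it locates $k$ inside the triangle decomposition of $v$, splits into the case where $k$ is an intersecting index (a telescoping sum of triangle slopes combined with Lemma \ref{lem:mon-tri-slopes}) and the case where it is not (subadditivity applied to the closest intersecting index above $k$, again expanding everything in terms of triangle slopes), and leans on Corollary \ref{cor:min-slope} to identify the minimum forward slope. Your argument instead introduces the discrete tilted function $g(k') = v(k') - (k'/k)\,v(k)$, observes that subadditivity gives $g(k') \le g(k'-k)$ for all $k' \ge k$ (with the edge case $k'=k$ trivial since $g(k)=g(0)=0$), so a global maximizer of $g$ on $\{0,\dots,m\}$ can be shifted into $\{0,\dots,k-1\}$, and then global maximality at that anchor $k^\ast$ gives $(v(k^\ast+l)-v(k^\ast))/l \le v(k)/k$ for every admissible $l$, hence $\overrightarrow{\Delta}_v(k^\ast) \le v(k)/k$. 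This is shorter, self-contained, and entirely avoids the SM-closure/triangle apparatus; what the paper's route buys instead is that the triangle machinery it develops here is reused elsewhere (e.g.\ in Lemma \ref{lem:n_d25} and Lemma \ref{lem:n_d6}), so the two approaches trade elegance-in-isolation for amortization across the paper.
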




The flat function of a symmetric valuation $v$ is defined as $$\phi_v(k) = 
	\begin{cases}
		0      & 0 \le k < m \\
		v(m)   & k = m
	\end{cases}$$

The following observation specifies the max-forward-slope of the flat function.


\begin{observation}
\label{obs:flat-LMFS}
Let $v$ be a symmetric function and let $\phi_v$ be its flat function.
Then, $\overrightarrow{\Delta}_{\phi_v}(k) = \frac{v(m)}{m -k}$.
\end{observation}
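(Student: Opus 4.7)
The plan is to directly unfold the definition of $\overrightarrow{\Delta}_{\phi_v}(k)$ for an arbitrary $k$ with $0 \le k < m$ (the only range in which $\overrightarrow{\Delta}_{\phi_v}(k)$ is defined, since the definition requires $m-k \ge 1$) and evaluate the maximum explicitly. By Definition \ref{def:fdelta},
\[
\overrightarrow{\Delta}_{\phi_v}(k) \;=\; \overrightarrow{\Delta}_{\phi_v}(k, m-k) \;=\; \max_{l = 1, 2, \ldots, m-k} \frac{\phi_v(k+l) - \phi_v(k)}{l}.
\]
Since $k < m$, we have $\phi_v(k) = 0$ by the definition of the flat function, so each term in the max reduces to $\phi_v(k+l)/l$.

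Next, I would split the values of $l \in \{1, \ldots, m-k\}$ into two cases based on whether $k+l$ falls strictly below $m$ or hits $m$. For every $l$ with $1 \le l < m-k$, we have $k + l < m$, hence $\phi_v(k+l) = 0$ and the corresponding slope is $0$. For the single remaining value $l = m - k$, we have $k+l = m$, hence $\phi_v(k+l) = v(m)$ and the corresponding slope equals $\frac{v(m)}{m-k}$. Since valuations are non-negative (so $v(m) \ge 0$), the maximum over all $l$ is $\frac{v(m)}{m-k}$, attained at $l = m-k$. This establishes the claimed equality. There is no genuine obstacle here; the observation is a direct computation from the definitions, and the only point requiring care is recognizing that the maximum is attained at the endpoint $l = m-k$ because every other entry contributes $0$.
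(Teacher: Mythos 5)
Your proof is correct and follows essentially the same route as the paper's: unfold Definition \ref{def:fdelta}, use $\phi_v(k)=0$ for $k<m$, and observe that the only nonzero term in the maximum occurs at $l=m-k$. You are slightly more explicit than the paper in spelling out the case split on $l$ and in flagging that nonnegativity of $v(m)$ is what makes the endpoint term the maximum, but this is the same argument.
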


\begin{proof}
For $0 \le k < m$,
$$\overrightarrow{\Delta}_{\phi_v}(k) = \overrightarrow{\Delta}_{\phi_v}(k,m-k) = \max_{l = 1, 2, \ldots, r} \set{\frac{\phi_v(k+l)-\phi_v(k)}{l}} = \frac{\phi_v(m)-\phi_v(k)}{m-k} = \frac{v(m)}{m-k},$$
where the second equality follows from Definition \ref{def:fdelta} and the last two equalities follow from the definition of $\phi_v(k)$.
Hence, $\overrightarrow{\Delta}_{\phi_v}(k) = \frac{v(m)}{m-k}$
\end{proof}

We now show that given a valuation $v$ and its corresponding flat function $\phi_v$, the sorted-max-forward-slope of $v$ is at most the max-forward-slope of $\phi_v$.

\begin{theorem}
\label{thm:slv-le-lf}
For every symmetric valuation $v$, for every $0 \le k < m$, $\overrightarrow{\Delta}^s_v(k) \le \overrightarrow{\Delta}_{\phi_v}(k)$.
\end{theorem}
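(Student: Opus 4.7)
First, by Observation~\ref{obs:flat-LMFS}, the right-hand side simplifies to $\overrightarrow{\Delta}_{\phi_v}(k) = \frac{v(m)}{m-k}$, so the goal is to show $\overrightarrow{\Delta}^s_v(k) \le \frac{v(m)}{m-k}$ for every $0 \le k < m$. I would reduce this to a uniform threshold inequality: for every $\tau > 0$,
$$
|J_\tau| \cdot \tau \;<\; v(m), \qquad \text{where } J_\tau := \bigl\{ j \in \{0,\ldots,m-1\} : \overrightarrow{\Delta}_v(j) > \tau \bigr\}.
$$
Plugging in $\tau = v(m)/(m-k)$ gives $|J_\tau| \le m-k-1$, so at most $m-k-1$ indices have slope strictly above $v(m)/(m-k)$, i.e., the $(k{+}1)$-st smallest is at most $v(m)/(m-k)$, which is precisely $\overrightarrow{\Delta}^s_v(k) \le \overrightarrow{\Delta}_{\phi_v}(k)$. (The degenerate case $v(m)=0$ forces $v \equiv 0$ by monotonicity and is trivial.)

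Next, to prove the threshold inequality, for each $j \in J_\tau$ fix an $l_j \in \{1, \ldots, m-j\}$ realizing $\overrightarrow{\Delta}_v(j)$; this gives the strict certificate
$$
v(j+l_j) - v(j) \;>\; \tau \cdot l_j.
$$
I then run a left-to-right greedy interval pack: scan $j \in J_\tau$ in increasing order, maintain a pointer $R$ initialized to $0$, and whenever $j \ge R$, add $j$ to a set $J' \subseteq J_\tau$ and update $R \leftarrow j + l_j$. The intervals $\{[j', j'+l_{j'}) : j' \in J'\}$ are pairwise disjoint by construction, and every skipped $j \in J_\tau \setminus J'$ satisfies $j < R$ at its processing time, hence lies in $[j'', j''+l_{j''})$ for the most recently added $j'' \in J'$. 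Since $[j', j'+l_{j'})$ contains exactly $l_{j'}$ integers, I obtain the counting bound $|J_\tau| \le \sum_{j' \in J'} l_{j'}$.

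Finally, I use monotonicity of $v$ together with the disjointness $j'_r + l_{j'_r} \le j'_{r+1}$ to telescope across the selected intervals:
$$
\sum_{j' \in J'} \bigl( v(j' + l_{j'}) - v(j') \bigr) \;\le\; v(m) - v(0) \;=\; v(m).
$$
Chaining the counting bound, the per-index certificates (summed strictly over the non-empty $J'$), and this telescoping estimate yields
$$
\tau \cdot |J_\tau| \;\le\; \tau \sum_{j' \in J'} l_{j'} \;<\; \sum_{j' \in J'} \bigl( v(j' + l_{j'}) - v(j') \bigr) \;\le\; v(m),
$$
which is the desired threshold inequality. The main obstacle is spotting the reformulation in the first paragraph: once the theorem is rephrased as a uniform counting bound on how many indices can simultaneously carry a large max-forward-slope, the interval-packing and monotonicity-telescoping steps are essentially standard; the delicate point is aligning their outputs so that the factor $\tau$ on the left matches both the summed length $\sum_{j'} l_{j'}$ and the summed value $\sum_{j'} (v(j'+l_{j'})-v(j'))$ on the right.
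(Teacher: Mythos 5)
Your proof is correct, and it takes a genuinely different route from the paper's. The paper argues by induction on $m$ combined with the ``reorder and unify of adjacent triangles'' operation (Lemma~\ref{lem:sort2tri}): it repeatedly swaps two adjacent right triangles of the SM-closure decomposition of $v$, showing each swap can only increase the sorted max-forward-slope vector pointwise, until the valuation collapses to a single triangle, at which point the comparison with $\phi_v$ is immediate. Your argument instead bypasses the SM-closure and triangle machinery entirely: you recast the claim as the uniform threshold bound $|J_\tau|\cdot\tau < v(m)$, certify each index $j\in J_\tau$ by a realizing length $l_j$, run a left-to-right greedy interval packing to extract a disjoint subfamily whose intervals cover all of $J_\tau$, and then telescope $\sum_{j'\in J'}\bigl(v(j'+l_{j'})-v(j')\bigr)\le v(m)$ via monotonicity. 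This is more elementary (it needs only monotonicity, no SM-closure, no triangle slope monotonicity lemma) and in effect proves Lemma~\ref{lem:c-bad} directly and derives the sorted-slope comparison from it, which is the reverse of the paper's logical order (the paper deduces Lemma~\ref{lem:c-bad} from Theorem~\ref{thm:slv-le-lf}). Both are valid; the paper's route exposes the triangle structure that is reused throughout Sections~\ref{sec:vals_id}--\ref{sec:het_costumers}, while yours is shorter and self-contained for this particular statement.
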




To prove Theorem \ref{thm:slv-le-lf}, we introduce the "reorder and unify of adjacent triangles" operation in Lemma \ref{lem:sort2tri}. The idea is to 
repeatedly switch two adjacent triangles in $v$, until the obtained valuation comprises of a single triangle.

\begin{definition}
Given a symmetric valuation $v$, a constant $c \ge 1$ and an integer $0 \le k < m$, we say that $k$ is \emph{$c-$bad} 
if $\overrightarrow{\Delta}_v(k) > c \cdot\frac{v(m)}{m}$; otherwise, we say that $k$ is \emph{$c-$good}.
\end{definition}

The following lemma establishes an upper bound on the number of {$c-$bad} numbers in $[m-1]$. 


\begin{lemma}
\label{lem:c-bad}
For every symmetric valuation $v$, for every $c \ge 1$, there are at most $m - \lfloor \frac{(c-1)}{c} \cdot m \rfloor - 1$ $c-$bad integers in $\set{0, 1, \ldots, m-1}$.
\end{lemma}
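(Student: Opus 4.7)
The plan is to reduce the count of $c$-bad integers to a statement about the sorted vector $\overrightarrow{\Delta}^s_v$, and then apply the bound from Theorem~\ref{thm:slv-le-lf} together with the explicit form of $\overrightarrow{\Delta}_{\phi_v}$ given in Observation~\ref{obs:flat-LMFS}. The first observation is that the number of $c$-bad integers in $\set{0,1,\ldots,m-1}$ equals the number of indices $k$ for which $\overrightarrow{\Delta}^s_v(k) > c\cdot \frac{v(m)}{m}$, since sorting a finite set of real numbers preserves the number of entries exceeding any given threshold.

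Next, I would invoke Theorem~\ref{thm:slv-le-lf} to conclude that, for every $0 \le k < m$,
\[
\overrightarrow{\Delta}^s_v(k) \;\le\; \overrightarrow{\Delta}_{\phi_v}(k) \;=\; \frac{v(m)}{m-k},
\]
where the equality is Observation~\ref{obs:flat-LMFS}. So an index $k$ can only contribute to the count of $c$-bad values (in the sorted ordering) if $\frac{v(m)}{m-k} > c\cdot \frac{v(m)}{m}$, equivalently, $m-k < \frac{m}{c}$, i.e.\ $k > m-\frac{m}{c} = \frac{(c-1)m}{c}$.

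Since $k$ is an integer, the condition $k \le \frac{(c-1)m}{c}$ is equivalent to $k \le \lfloor \frac{(c-1)m}{c} \rfloor$, and there are exactly $\lfloor \frac{(c-1)m}{c} \rfloor + 1$ such indices in $\set{0,1,\ldots,m-1}$. For each of these indices the sorted value is guaranteed to be at most $c\cdot \frac{v(m)}{m}$, hence not $c$-bad. Therefore the number of $c$-bad integers is at most
\[
m - \left(\left\lfloor \tfrac{(c-1)m}{c}\right\rfloor + 1\right) \;=\; m - \left\lfloor \tfrac{(c-1)m}{c}\right\rfloor - 1,
\]
which is the claimed bound.

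I do not expect a substantive obstacle: essentially all the work has already been done in proving Theorem~\ref{thm:slv-le-lf}. The only fine point worth stating explicitly in the write-up is the sorting-invariance of the count, and the integer rounding when passing from the real inequality $k \le (c-1)m/c$ to $k \le \lfloor (c-1)m/c\rfloor$; both are routine.
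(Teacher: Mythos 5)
Your proof is correct and follows essentially the same route as the paper's: invoke Theorem~\ref{thm:slv-le-lf} together with Observation~\ref{obs:flat-LMFS} to bound the sorted max-forward-slopes by $\frac{v(m)}{m-k}$, rearrange the threshold condition to $k > \frac{(c-1)m}{c}$, and count integers. The one place you are more explicit than the paper is the opening observation that sorting preserves the number of entries above a threshold, which the paper leaves implicit when it passes from counting $c$-bad indices of $\phi_v$ to bounding the $c$-bad indices of $v$; making this step explicit is a mild improvement in clarity, not a different argument.
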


\begin{proof}
Fix $c \ge 1$ and let $\phi_v$ be $v$'s corresponding flat function. By observation \ref{obs:flat-LMFS}, for every $0 \le k < m$, $\overrightarrow{\Delta}_{\phi_v}(k) = \frac{v(m)}{m-k}$. Let $k'$ be a $c-bad$ integer w.r.t. $\phi_v$, i.e., $\overrightarrow{\Delta}_{\phi_v}(k') = \frac{v(m)}{m-k'} > c \cdot\frac{v(m)}{m}$. Rearranging we get that $k' > \frac{(c-1)}{c} \cdot m$. Since  $\overrightarrow{\Delta}_{\phi_v}(k)$ is monotonically increasing, there are exactly $m - \lfloor \frac{(c-1)}{c} \cdot m \rfloor - 1$ integers in $\phi_v$ which are $c-$bad. By Theorem \ref{thm:slv-le-lf}, for every $0 \le k < m$, $\overrightarrow{\Delta}_{v}^s(k) \le \overrightarrow{\Delta}_{\phi_v}(k)$, and therefore there are at most $m - \lfloor \frac{(c-1)}{c} \cdot m \rfloor - 1$ integers which are $c-$bad w.r.t. $v$.


\end{proof}

\section{Properties of 2PEs with Identical Items}
\label{sec:2pe_id}

In this section we present some properties of 2PEs in markets with identical items. 
We first define 2PE  with uniform prices:

\begin{definition} [\textbf{2PE with uniform prices (U-2PE)}]
\label{def:2pe_ecp}
A triplet $(\mathbf{S}, \mathbf{\hat{p}}, \mathbf{\check{p}})$ is a 2PE with uniform prices (U-2PE) for valuation profile $\vals$, if it is a 2PE for $\vals$ and for every buyer $i \in [n]$, every items $j,j' \in S_i$, $\hat{p}_j = \hat{p}_{j'}$ and $\check{p}_j = \check{p}_{j'}$. Let $\hat{p}^{(i)}$ and $\check{p}^{(i)}$ denote these prices, respectively.
\end{definition}

The following proposition shows that for studying the discrepancy in markets with identical items it is without loss of generality to restrict attention to U-2PEs.


\begin{proposition}
\label{prop:2pe_ident_same_customer_prices}
If $(\mathbf{S}, \mathbf{\hat{p}}, \mathbf{\check{p}})$ is a 2PE for some symmetric valuation profile $\vals$, then there exists a U-2PE $(\mathbf{S}, \mathbf{\hat{p}'}, \mathbf{\check{p}'})$ s.t. $D(\mathbf{S}, \mathbf{\hat{p}'}, \mathbf{\check{p}'}) = D(\mathbf{S}, \mathbf{\hat{p}}, \mathbf{\check{p}})$.
\end{proposition}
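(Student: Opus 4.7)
The plan is to construct $(\mathbf{S}, \hat{\mathbf{p}}', \check{\mathbf{p}}')$ by averaging the original prices within each buyer's bundle, and then recover the U-2PE utility maximization condition by averaging the original 2PE inequalities over all subsets with matching intersection profiles. Concretely, for each buyer $i$ set $s_i = |S_i|$, $\hat{P}_i = \sum_{j \in S_i} \hat{p}_j$, $\check{P}_i = \sum_{j \in S_i} \check{p}_j$, and define $\hat{p}'_j = \hat{P}_i/s_i$ and $\check{p}'_j = \check{P}_i/s_i$ for every $j \in S_i$. Since $\hat{p}_j \geq \check{p}_j \geq 0$ for all $j$, the averaged prices satisfy $\hat{p}'_j \geq \check{p}'_j \geq 0$. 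The sum of prices inside each bundle is preserved, so $\sum_j(\hat{p}'_j - \check{p}'_j) = \sum_i (\hat{P}_i - \check{P}_i) = \sum_j(\hat{p}_j - \check{p}_j)$, and the discrepancy is unchanged. Market clearance is inherited since $\mathbf{S}$ is unchanged.

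The main step is verifying utility maximization under the averaged prices. Fix a buyer $i$ and an alternative bundle $T' \subseteq [m]$. Let $k_i = |T' \cap S_i|$ and $t'_l = |T' \cap S_l|$ for each $l \neq i$. Consider the family $\mathcal{F}$ of all bundles $T''$ satisfying $|T'' \cap S_i| = k_i$ and $|T'' \cap S_l| = t'_l$ for every $l \neq i$; note that $\mathcal{F}$ is non-empty and that $|T''| = |T'|$ for every $T'' \in \mathcal{F}$. By the original 2PE condition applied to each $T'' \in \mathcal{F}$,
$$v_i(s_i) - \check{P}_i \;\geq\; v_i(|T''|) - \sum_{j \in T'' \cap S_i}\check{p}_j - \sum_{l \neq i}\sum_{j \in T'' \cap S_l}\hat{p}_j.$$
Averaging over $T'' \in \mathcal{F}$, each $j \in S_i$ lies in $T''\cap S_i$ in exactly a $k_i/s_i$ fraction of the members of $\mathcal{F}$, and each $j \in S_l$ lies in $T'' \cap S_l$ in exactly a $t'_l/s_l$ fraction, so the averaged inequality becomes
$$v_i(s_i) - \check{P}_i \;\geq\; v_i(|T'|) - k_i \cdot \tfrac{\check{P}_i}{s_i} - \sum_{l \neq i} t'_l \cdot \tfrac{\hat{P}_l}{s_l},$$
which is precisely the U-2PE utility maximization condition for buyer $i$ deviating to $T'$ under the averaged prices (using that $|T' \cap S_i|\check{p}'^{(i)} = k_i \check{P}_i/s_i$ and $|T' \cap S_l|\hat{p}'^{(l)} = t'_l \hat{P}_l/s_l$).

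The crux of the argument, and the reason symmetric valuations are essential, is the equality $v_i(|T''|) = v_i(|T'|)$ for every $T'' \in \mathcal{F}$, which lets the $v_i$ term pass through the averaging intact. Without symmetry, different members of $\mathcal{F}$ could yield different valuations and the averaging would fail. The rest is routine linear arithmetic on price sums, so no further obstacles arise; applying this step for every buyer $i$ and every deviation $T'$ establishes that $(\mathbf{S}, \hat{\mathbf{p}}', \check{\mathbf{p}}')$ is a U-2PE with the same discrepancy as $(\mathbf{S}, \hat{\mathbf{p}}, \check{\mathbf{p}})$.
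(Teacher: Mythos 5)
Your proof is correct, and it takes a genuinely different route from the paper. The paper proves the proposition via a buyer-by-buyer iteration (Lemma~\ref{lem:2pe_ident_same_customer_low_prices}): for one buyer $l$ at a time, it averages $l$'s prices while leaving all others fixed, and verifies the 2PE condition for buyer $l$ by a carefully chosen \emph{extremal} deviation $T'$ (the one containing the lowest-priced items of $S_l$ among those hit by $T$), so that a single instance of the original inequality suffices; the full result then follows by repeated application, once for low prices and once for high prices, for each buyer. Your argument instead averages all bundle prices simultaneously and then, for a fixed buyer $i$ and deviation $T'$, averages the original 2PE inequalities over the entire orbit $\mathcal{F}$ of deviations sharing the intersection profile $(|T'\cap S_l|)_l$; the combinatorial count that each $j\in S_l$ appears in a $t'_l/s_l$ fraction of $\mathcal{F}$ turns the right-hand side directly into the averaged-price expression, while symmetry guarantees $v_i(|T''|)=v_i(|T'|)$ so the valuation term passes through. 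Both approaches exploit symmetry at the same pivot (the $v_i$ term must depend only on cardinality), but yours is a one-shot symmetrization argument whereas the paper's is an extremal/iterative one; the averaging version is somewhat cleaner in that it replaces all prices at once and avoids splitting into separate low-price and high-price cases, at the cost of a small counting step over the orbit $\mathcal{F}$.
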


The proof of Proposition \ref{prop:2pe_ident_same_customer_prices} follows by an iterative invocation of the following lemma for every buyer $i \in [n]$. 

\begin{lemma}
\label{lem:2pe_ident_same_customer_low_prices}
Let  $(\mathbf{S}, \mathbf{\hat{p}}, \mathbf{\check{p}})$ be a 2PE for some symmetric valuation profile $\vals$. Let $l$ be some buyer. Let $\check{p}_{j}' = \frac{1}{\mid S_l \mid} \sum_{t \in S_l} \check{p}_{t}$ and $\hat{p}_{j}' = \frac{1}{\mid S_l \mid} \sum_{t \in S_l} \hat{p}_{t}$ for every item $j \in S_l$ and $\check{p}_{j}' = \check{p}_{j}$ and $\hat{p}_{j}' = \hat{p}_{j}$ for every item $j \notin S_l$. Then:
\begin{itemize}
    \item $(\mathbf{S}, \mathbf{\hat{p}}, \mathbf{\check{p}}')$ is a 2PE.
    \item $(\mathbf{S}, \mathbf{\hat{p}}', \mathbf{\check{p}})$ is a 2PE.
    \item $D(\mathbf{S}, \mathbf{\hat{p}}, \mathbf{\check{p}'}) = D(\mathbf{S}, \mathbf{\hat{p}}', \mathbf{\check{p}}) = D(\mathbf{S}, \mathbf{\hat{p}}, \mathbf{\check{p}})$
\end{itemize}
\end{lemma}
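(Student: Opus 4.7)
The plan is to verify the three bullets by (i) observing that averaging prices within $S_l$ preserves the total sum of high prices and the total sum of low prices on $S_l$, and (ii) using the symmetry of each valuation to derive the 2PE inequality for the modified prices as an \emph{average} of the original 2PE inequalities over permutations of items inside $S_l$.

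First I would handle the easy pieces. Setting $\hat{p}^{(l)} = \frac{1}{|S_l|}\sum_{t \in S_l}\hat{p}_t$ and $\check{p}^{(l)} = \frac{1}{|S_l|}\sum_{t \in S_l}\check{p}_t$, we have $\sum_{j \in S_l}\check{p}'_j = \sum_{j \in S_l}\check{p}_j$ and analogously for $\hat{p}'$, while outside $S_l$ the prices are literally unchanged. Hence $\sum_j(\hat{p}_j - \check{p}'_j) = \sum_j(\hat{p}'_j - \check{p}_j) = \sum_j(\hat{p}_j - \check{p}_j)$, and since the allocation is unchanged $SW(\mathbf{S},\vals)$ is the same, which yields the third bullet once the first two are established. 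All items remain allocated, so market clearance is immediate.

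Next I would prove the first bullet, that $(\mathbf{S},\hat{p},\check{p}')$ is a 2PE. For any buyer $i \neq l$, every item of $S_i$ is disjoint from $S_l$, so $\check{p}'_j = \check{p}_j$ on $S_i$; combined with the fact that $\hat{p}$ is unchanged, inequality~(\ref{eqn:2pe}) for buyer $i$ is identical to the original and therefore holds. For buyer $l$ and any bundle $T$, let $a = |T \cap S_l|$ and consider the family $T^{(R)} = R \cup (T \setminus S_l)$ as $R$ ranges over all $a$-subsets of $S_l$. By symmetry $v_l(T^{(R)}) = v_l(|T|)$, and since $T^{(R)} \setminus S_l = T \setminus S_l$, the sum $\sum_{j \in T^{(R)} \setminus S_l}\hat{p}_j$ does not depend on $R$, so the right-hand side of the original 2PE inequality is constant across $R$. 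On the left-hand side, $\sum_{j \in S_l \setminus T^{(R)}}\check{p}_j = \sum_{j \in S_l \setminus R}\check{p}_j$; each item of $S_l$ lies outside $R$ in a $(|S_l|-a)/|S_l|$ fraction of the $\binom{|S_l|}{a}$ subsets, so the average equals $(|S_l|-a)\check{p}^{(l)} = \sum_{j \in S_l \setminus T}\check{p}'_j$. Averaging the original 2PE inequalities therefore yields exactly the 2PE inequality for buyer $l$ under $\check{p}'$.

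The second bullet is handled by the dual argument. For $i = l$ nothing changes, because every $j \in T \setminus S_l$ satisfies $\hat{p}'_j = \hat{p}_j$. For $i \neq l$ and a fixed $T$, I would decompose $T = (T \cap S_i) \cup (T \cap S_l) \cup (T \setminus (S_i \cup S_l))$ and vary only the middle piece through all $c$-subsets of $S_l$, where $c = |T \cap S_l|$. By symmetry of $v_i$ the left-hand side and the $T \setminus (S_i \cup S_l)$ part of the right-hand side are constant across these variations, and averaging replaces $\sum_{j \in T \cap S_l}\hat{p}_j$ by $c\,\hat{p}^{(l)} = \sum_{j \in T \cap S_l}\hat{p}'_j$, delivering the 2PE inequality under $\hat{p}'$. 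The only delicate point, which is also what makes the whole argument go through, is that the varying subset must appear on the side we intend to average and not elsewhere: this is guaranteed by the three-part decomposition and by the symmetry of the valuations over identical items.
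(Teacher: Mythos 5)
Your proof is correct, and it takes a genuinely different route from the paper. The paper argues by \emph{extremal substitution}: given $T$, it constructs a single auxiliary bundle $T'$ by replacing $T\cap S_l$ with the $|T\cap S_l|$ items of $S_l$ carrying the \emph{lowest} $\check{p}$ (for the first bullet) or lowest $\hat{p}$ (for the second), then observes that the sum of prices over the remaining items of $S_l$ dominates the averaged sum, and finally chains inequalities through $T'$. Your argument instead \emph{averages} the original 2PE inequality over all $\binom{|S_l|}{a}$ ways to place the $a = |T\cap S_l|$ items inside $S_l$, using symmetry of $v$ to make the value terms constant and an elementary counting identity to turn the averaged price term into exactly the uniform price. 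Both rely crucially on the same two facts --- symmetry of the valuations and disjointness of $S_i$ and $S_l$ --- but the paper's version requires identifying the right extremal set and a dominance argument, whereas yours replaces that with a symmetrization that produces the uniform price automatically. Your version is arguably cleaner conceptually, while the paper's is more explicit and requires no binomial bookkeeping. The decomposition $T = (T\cap S_i)\cup(T\cap S_l)\cup(T\setminus(S_i\cup S_l))$ in your second-bullet argument is exactly the right way to isolate the piece being symmetrized, and the remark that the varying subset must appear only on the side being averaged is the key observation that makes the calculation close.
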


\begin{proof}
As $(\mathbf{S}, \mathbf{\hat{p}}, \mathbf{\check{p}})$ is a 2PE, $v_i(S_i) - \sum_{j \in S_i \setminus T}\check{p}_j \geq v_i(T) - \sum_{j \in T \setminus S_i}\hat{p}_j$
for every buyer $i \in [n]$ and every set $T \subseteq [m]$. 
As $\check{p}'_j = \check{p}_j$ and $\hat{p}'_j = \hat{p}_j$ for every item $j \notin S_l$, we have that, 
\begin{eqnarray}
\label{eqn:2pe_ident_1}
v_i(S_i) - \sum_{j \in S_i \setminus T}\check{p}'_j 
& \geq & 
v_i(T) - \sum_{j \in T \setminus S_i}\hat{p}_j, ~ \mbox{for every buyer $i \neq l$ and every set $T \subseteq [m]$}
\end{eqnarray}

and,
\begin{eqnarray}
\label{eqn:2pe_ident_2}
v_i(S_i) - \sum_{j \in S_i \setminus T}\check{p}_j 
& \geq & 
v_i(T) - \sum_{j \in T \setminus S_i}\hat{p}'_j, ~ \mbox{for buyer $i=l$ and every set $T \subseteq [m]$.}
\end{eqnarray}
To show that $(\mathbf{S}, \mathbf{\hat{p}}, \mathbf{\check{p}'})$ is a 2PE, 
it remains to show that (\ref{eqn:2pe_ident_1}) holds for buyer $l$. 

For a given set $T$, let $x = \mid S_l \cap T \mid$ and let $X$ be the set of the 
$x$ items with the lowest $\check{p}_j$ in $S_l$. Let $T' = X \cup (T \setminus S_l)$. Notice that $S_l \setminus T' = S_l \setminus X$ and therefore, $\sum_{j \in S_l \setminus T'} \check{p}_j \ge \sum_{j \in S_l \setminus T'} \check{p}'_j = (|S_l| - x) \cdot \frac{1}{\mid S_l \mid} \sum_{j \in S_l} \check{p}_j$. Hence, 
$v_l(S_l) - \sum_{j \in S_l \setminus T}\check{p}'_j = v_l(S_l) - \sum_{j \in S_l \setminus T'}\check{p}'_j \geq v_l(S_l) - \sum_{j \in S_l \setminus T'} \check{p}_j \ge v_l(T') - \sum_{j \in T' \setminus S_l}\hat{p}_j = v_l(T) - \sum_{j \in T \setminus S_l}\hat{p}_j$, where 
the first equality follows from the fact that $|S_l \setminus T| = |S_l \setminus T'|$ and that $\check{p}'_j = \check{p}'_{j'}$ for every $j,j' \in S_l$.
The last inequality follows from Inequality (\ref{eqn:2pe}), and the last equality follows from the definition of $T'$ and the fact that the function $v$ is symmetrical. That is, Inequality (\ref{eqn:2pe_ident_1}) holds for buyer $l$ and every set $T \subseteq [m]$, and hence $(\mathbf{S}, \mathbf{\hat{p}}, \mathbf{\check{p}'})$ is a 2PE for $\vals$.
Finally,  $D(\mathbf{S}, \mathbf{\hat{p}}, \mathbf{\check{p}'}) = D(\mathbf{S}, \mathbf{\hat{p}}, \mathbf{\check{p}})$ follows directly from the definition of $D$ and the definition of $\check{p}'_j$.

To show that $(\mathbf{S}, \mathbf{\hat{p}'}, \mathbf{\check{p}})$ is a 2PE, it remains to show that (\ref{eqn:2pe_ident_2}) holds for every buyer $i \neq l$.
Let $X$ be the set of the 
$x$ items with the lowest $\hat{p}$ in $S_l$, where $x = \mid S_l \cap T \mid$, and
let $T' = X \cup (T \setminus S_l)$. Notice that $T' \cap S_l = X$ and therefore, $\sum_{j \in T' \cap S_l} \hat{p}_j \le \sum_{j \in T' \cap S_l} \hat{p}'_j = x \cdot \frac{1}{\mid S_l \mid} \sum_{j \in S_l} \hat{p}_j$. Hence, for every buyer $i \neq l$ we have, 
$v_i(S_i) - \sum_{j \in S_i \setminus T}\check{p}_j = 
v_i(S_i) - \sum_{j \in S_i \setminus T'}\check{p}_j \geq v_i(T') - \sum_{j \in T' \setminus S_i} \hat{p}_j = 
v_i(T') - \sum_{j \in T' \setminus (S_i \cup S_l)} \hat{p}_j - \sum_{j \in T' \cap S_l} \hat{p}_j \ge
v_i(T') - \sum_{j \in T' \setminus (S_i \cup S_l)} \hat{p}_j - \sum_{j \in T' \cap S_l} \hat{p}'_j = 
v_i(T) - \sum_{j \in T \setminus (S_i \cup S_l)} \hat{p}_j - \sum_{j \in T \cap S_l} \hat{p}'_j = v_i(T) - \sum_{j \in T \setminus S_i} \hat{p}'_j$,
where the first equality follows from the definition of $T'$,
the first inequality follows from Inequality (\ref{eqn:2pe}), the third equality is due to the definition of $T'$ and the fact that the function $v$ is symmetrical, and the last equality is derived form the fact that $\hat{p}'_j = \hat{p}_j$ for every item $j \notin S_l$.
We showed that, $v_i(S_i) - \sum_{j \in S_i \setminus T}\check{p}_j \ge v_i(T) - \sum_{j \in T \setminus S_i} \hat{p}'_j$ for every buyer $i$ and every set $T \subseteq [m]$, and hence $(\mathbf{S}, \mathbf{\hat{p}'}, \mathbf{\check{p}})$ is a 2PE for $\vals$.
Finally,  $D(\mathbf{S}, \mathbf{\hat{p}}', \mathbf{\check{p}}) = D(\mathbf{S}, \mathbf{\hat{p}}, \mathbf{\check{p}})$ follows directly from the definition of $D$ and the definition of $\hat{p}'_j$. 
\end{proof}

The complete proofs of the following propositions appear in Appendix \ref{miss_proof_2pe_id}.

The following proposition gives \emph{necessary and sufficient} conditions for the utility maximization property of a U-2PE.

\begin{proposition}
\label{prop:u2pe_necess_cond}
Consider a symmetric valuation profile $\vals$ and a triplet $(\mathbf{S}, \mathbf{\hat{p}}, \mathbf{\check{p}})$, s.t. 
for every item $j \in [m]$, $\check{p_j} \le \hat{p_j}$
and for every buyer $i \in [n]$ and every items $j,j' \in S_i$, $\hat{p}_j = \hat{p}_{j'}$ and $\check{p}_j = \check{p}_{j'}$. 
Then, the following conditions are necessary and sufficient for utility maximization of a U-2PE:

\begin{enumerate}

\item
\label{um_cond_1}
$\check{p}^{(i)} \le \min_{i' \in [n]} \hat{p}^{(i')}$, for every $i \in [n]$.

\item
\label{um_cond_2}
$\check{p}^{(i)} \le \overleftarrow{\Delta}_{v_i}(|S_i|)$, for every $i \in [n]$.

\item
\label{um_cond_3}
$\sum_{i' \neq i} |T \cap S_{i'}| \cdot \hat{p}^{(i')} \ge v_i(|T|) - v_i(|S_i|)$, for every $i \in [n]$ and every $T \subseteq [m]$ s.t.
$T \supset S_i$.

\end{enumerate}

\end{proposition}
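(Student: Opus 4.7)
The plan is to split the argument into necessity and sufficiency of the three conditions. Throughout, fix a buyer $i$, write $s_i = |S_i|$, and parametrize an arbitrary deviation $T$ of buyer $i$ by the number $k = |S_i \setminus T|$ of items she drops from $S_i$ and the counts $c_{i'} = |T \cap S_{i'}|$ of items she takes from each other buyer $i'$; let $c = \sum_{i' \neq i} c_{i'}$. Using symmetry of $v_i$ and uniformity of the prices within each $S_{i'}$, inequality (\ref{eqn:2pe}) reduces to
$$\sum_{i' \neq i} c_{i'} \hat{p}^{(i')} \;\ge\; v_i(s_i - k + c) - v_i(s_i) + k \check{p}^{(i)},$$
which must hold for every feasible $(k, \{c_{i'}\})$. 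The whole proof consists of showing that the three listed conditions are exactly the ``extremal cases'' of this reduced inequality and that the general case decomposes into these extremes.

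For necessity I would obtain each of the three conditions by plugging a specific family of deviations into the reduced inequality. Condition 1 comes from a one-for-one swap ($k = c_{i'} = 1$ for some $i' \neq i$ with $s_{i'} \ge 1$), which collapses to $\hat{p}^{(i')} \ge \check{p}^{(i)}$. Condition 2 comes from pure drops ($c = 0$, $k = l$ for each $l \in \{1, \ldots, s_i\}$), which forces $\check{p}^{(i)} \le (v_i(s_i) - v_i(s_i - l))/l$, and taking the minimum over $l$ yields $\check{p}^{(i)} \le \overleftarrow{\Delta}_{v_i}(s_i)$. Condition 3 is precisely the reduced inequality restricted to pure additions ($k = 0$, $T \supset S_i$).

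For sufficiency I would verify the reduced inequality for every $(k, \{c_{i'}\})$ by splitting on the sign of $c - k$. If $c \ge k$, designate any $k$ of the $c$ added items as a ``swap block'' and let $T''$ be $S_i$ together with the remaining $c - k$ items, so that $T'' \supset S_i$ and $|T''| = s_i - k + c$; Condition 1 shows the swap block contributes at least $k \check{p}^{(i)}$ to the LHS, while Condition 3 applied to $T''$ shows the remaining items contribute at least $v_i(s_i - k + c) - v_i(s_i)$, and summing gives the required bound. If $c < k$, use Condition 1 to get $\sum_{i' \neq i} c_{i'} \hat{p}^{(i')} \ge c \check{p}^{(i)}$ and Condition 2 with $l = k - c$ (which gives $v_i(s_i) - v_i(s_i - k + c) \ge (k - c) \check{p}^{(i)}$) to conclude $v_i(s_i - k + c) - v_i(s_i) + k \check{p}^{(i)} \le c \check{p}^{(i)}$, and chain them. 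I expect the main obstacle to be the combined drop-and-add deviation in the first sufficiency case, which does not directly match Condition 3; the resolution is to reroute through the auxiliary superset $T''$, using symmetry of $v_i$ to make the identity of the specific items irrelevant and Condition 1 to absorb the cost of the swap portion.
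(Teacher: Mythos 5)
Your proof is correct and follows essentially the same route as the paper's: rewrite the 2PE inequality for symmetric valuations and uniform prices, derive necessity by substituting the three natural families of deviations (one-for-one swaps, pure drops, pure additions), and establish sufficiency by a two-case analysis indexed by whether the deviation nets a gain or a loss of items.

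The only genuine difference is in how sufficiency is organized. The paper first argues that, for a fixed $|T|=t$, the worst-case deviation keeps $t_i=\min\{k_i,t\}$ of one's own items (because $\check{p}^{(i)}\le\hat{p}^{(i')}$), and then checks only the resulting pure-drop/pure-add extremes against conditions 2 and 3. You instead handle an arbitrary $(k,c)$ deviation directly: in the case $c\ge k$ you decompose the added items into a $k$-item swap block (bounded via condition 1) plus a pure-addition residue (bounded via condition 3 on the auxiliary set $T''$), and in the case $c<k$ you chain condition 1 with condition 2. Both amount to the same case split ($c\ge k$ iff $t\ge k_i$ under the paper's worst-case choice), so the content is equivalent; your version avoids the explicit "worst case is extremal" observation by paying for it with the swap-block decomposition. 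One small edge case you should note: when $c=k$ the auxiliary set is $T''=S_i$, which is not strictly a superset, but the needed bound is then $0\ge 0$ so nothing breaks.
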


Given Proposition \ref{prop:u2pe_necess_cond}, we can now specify simple sufficient conditions for
U-2PE in market with identical items.

\begin{proposition}
\label{prop:2pe_ecp_n_suff_cond}
Consider a symmetric valuation profile $\vals$ and let $(\mathbf{S}, \mathbf{\hat{p}}, \mathbf{\check{p}})$ be a triplet satisfying the following conditions for every buyer $i \in [n]$:

\begin{enumerate}
\item
\label{u2pe_cond_1}
For every items $j,j' \in S_i$, $\hat{p}_j = \hat{p}_{j'}$ and $\check{p}_j = \check{p}_{j'}$. Let $\hat{p}^{(i)}$ and $\check{p}^{(i)}$ denote these prices, respectively.

\item
\label{u2pe_cond_2}
$\check{p}^{(i)} \le \min_{i' \in [n]} \hat{p}^{(i')}$.

\item
\label{u2pe_cond_3}
$\check{p}^{(i)} \le \overleftarrow{\Delta}_{v_i}(|S_i|)$.

\item
\label{u2pe_cond_4}
$\hat{p}^{(i)} \ge \max_{i' \neq i} \set{ \overrightarrow{\Delta}_{v_{i'}}(|S_{i'}|)}$. 

\item
\label{u2pe_cond_5}
All items are allocated.
\end{enumerate}

Then, $(\mathbf{S}, \mathbf{\hat{p}}, \mathbf{\check{p}})$ is a U-2PE for $\vals$.
\end{proposition}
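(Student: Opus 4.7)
The plan is to reduce this to Proposition~\ref{prop:u2pe_necess_cond}, which characterizes utility maximization in U-2PEs. Condition~\ref{u2pe_cond_1} of our statement matches the uniform-prices hypothesis of that proposition, and Conditions~\ref{u2pe_cond_2} and~\ref{u2pe_cond_3} here coincide verbatim with Conditions~\ref{um_cond_1} and~\ref{um_cond_2} there. Market clearance is immediate from Condition~\ref{u2pe_cond_5}. The price-ordering requirement $\check{p}_j \le \hat{p}_j$ follows from Condition~\ref{u2pe_cond_2}: for any item $j \in S_i$, $\check{p}_j = \check{p}^{(i)} \le \min_{i'} \hat{p}^{(i')} \le \hat{p}^{(i)} = \hat{p}_j$. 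So the single remaining task is to deduce Condition~\ref{um_cond_3} of Proposition~\ref{prop:u2pe_necess_cond} from Condition~\ref{u2pe_cond_4} of the current statement.

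To that end, fix a buyer $i$ and a set $T \supseteq S_i$. Since the allocation partitions the item set (Condition~\ref{u2pe_cond_5}), we have $|T| - |S_i| = \sum_{i' \neq i} |T \cap S_{i'}|$. By the definition of the max-forward-slope applied at $k = |S_i|$ with $l = |T|-|S_i|$,
\[
v_i(|T|) - v_i(|S_i|) \;\le\; \bigl(|T| - |S_i|\bigr)\cdot \overrightarrow{\Delta}_{v_i}(|S_i|) \;=\; \sum_{i' \neq i} |T \cap S_{i'}| \cdot \overrightarrow{\Delta}_{v_i}(|S_i|).
\]
Condition~\ref{u2pe_cond_4} says that for every $i' \neq i$, $\hat{p}^{(i')} \ge \overrightarrow{\Delta}_{v_i}(|S_i|)$ (since $i$ appears in the max defining $\hat{p}^{(i')}$). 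Substituting termwise gives
\[
v_i(|T|) - v_i(|S_i|) \;\le\; \sum_{i' \neq i} |T \cap S_{i'}| \cdot \hat{p}^{(i')},
\]
which is exactly Condition~\ref{um_cond_3} of Proposition~\ref{prop:u2pe_necess_cond}.

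Having verified all three necessary-and-sufficient conditions of Proposition~\ref{prop:u2pe_necess_cond}, utility maximization holds, and combined with market clearance, $(\mathbf{S}, \mathbf{\hat{p}}, \mathbf{\check{p}})$ is a U-2PE for $\vals$. The only non-bookkeeping step is the inequality involving $\overrightarrow{\Delta}_{v_i}(|S_i|)$, and I do not anticipate obstacles there since it is a direct consequence of the definition of max-forward-slope together with partition-sum counting. The proof is essentially a one-inequality reduction.
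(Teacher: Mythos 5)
Your proof is correct and follows essentially the same route as the paper's: both reduce to Proposition~\ref{prop:u2pe_necess_cond} and derive its Condition~\ref{um_cond_3} from Condition~\ref{u2pe_cond_4} via the max-forward-slope bound $v_i(|T|)-v_i(|S_i|)\le(|T|-|S_i|)\cdot\overrightarrow{\Delta}_{v_i}(|S_i|)$ combined with the partition identity $|T|-|S_i|=\sum_{i'\neq i}|T\cap S_{i'}|$. Your explicit check that $\check{p}_j\le\hat{p}_j$ (which the paper leaves implicit) is a small but welcome piece of bookkeeping, since that inequality is a stated hypothesis of Proposition~\ref{prop:u2pe_necess_cond}.
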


Notice that for the case of two buyers, condition (\ref{u2pe_cond_4}) of Proposition \ref{prop:2pe_ecp_n_suff_cond} is identical to condition (\ref{um_cond_3}) of Proposition \ref{prop:u2pe_necess_cond}, and therefore the conditions specified in Proposition \ref{prop:2pe_ecp_n_suff_cond} are also necessary conditions.




\section{Discrepancy in Markets with Identical Subadditive Buyers}
\label{sec:id_costumers}
In this section we establish the existence of 2PEs with small discrepancy for markets with identical items and identical subadditive buyers. 

We first show that every market with identical items and 2 identical subadditive buyers admits a 2PE with discrepancy of at most $2$.


\begin{theorem}
\label{thm:sa_n2_q2}
Every market with 2 identical subadditive symmetric valuations admits a U-2PE $(\mathbf{S}, \mathbf{\hat{p}}, \mathbf{0})$ with discrepancy of at most 2.
\end{theorem}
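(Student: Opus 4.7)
The plan is to construct an explicit U-2PE of the form $(\mathbf{S},\mathbf{\hat{p}},\mathbf{0})$ using the sufficient conditions of Proposition~\ref{prop:2pe_ecp_n_suff_cond} (which, as the paper notes, are also necessary for $n=2$). With the identical symmetric valuation $v$, zero low prices, and bundle sizes $(k_1,k_2)$ summing to $m$, those conditions collapse to $\hat{p}^{(1)}\ge \overrightarrow{\Delta}_v(k_2)$ and $\hat{p}^{(2)}\ge \overrightarrow{\Delta}_v(k_1)$. Setting both tight, the discrepancy becomes
$$D(k_1,k_2)\;=\;\frac{k_1\,\overrightarrow{\Delta}_v(k_2)+k_2\,\overrightarrow{\Delta}_v(k_1)}{v(k_1)+v(k_2)}.$$

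The key observation, borrowing the language of Lemma~\ref{lem:c-bad}, is that if both $k_1$ and $k_2$ are \emph{$2$-good}, meaning $\overrightarrow{\Delta}_v(k_i)\le 2v(m)/m$, then the numerator is at most $(k_1+k_2)\cdot 2v(m)/m=2v(m)$, while subadditivity forces $v(k_1)+v(k_2)\ge v(m)$, so $D(k_1,k_2)\le 2$. It therefore suffices to find a partition $k_1+k_2=m$ whose two parts are both $2$-good.

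I would establish existence of such a partition by pigeonhole. Partition $\{0,1,\ldots,m-1\}$ into classes respecting the constraint $k_1+k_2=m$: the singleton $\{0\}$ (covering the corner allocation $(0,m)$), the pairs $\{k,m-k\}$ for $1\le k<m/2$, and, when $m$ is even, the singleton $\{m/2\}$ (covering the balanced allocation). A direct count gives exactly $\lfloor m/2\rfloor+1$ classes. Lemma~\ref{lem:c-bad} with $c=2$ bounds the number of $2$-bad indices by $\lceil m/2\rceil-1$; since distinct classes hold disjoint indices, the number of classes meeting at least one $2$-bad index is at most $\lceil m/2\rceil-1$. Therefore at least $\lfloor m/2\rfloor+1-(\lceil m/2\rceil-1)\ge 1$ classes are entirely $2$-good.

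Picking any such class fixes the desired allocation and prices and yields $D\le 2$: for an ordinary pair $\{k,m-k\}$ the core inequality above applies directly; for the singleton $\{m/2\}$ one additionally invokes $2v(m/2)\ge v(m)$ to lower-bound the denominator; for the singleton $\{0\}$, buyer~$1$'s bundle is empty so only $\hat{p}^{(2)}\ge \overrightarrow{\Delta}_v(0)$ is required, and $2$-goodness yields $D=m\,\overrightarrow{\Delta}_v(0)/v(m)\le 2$. The main technical obstacle is the pigeonhole count itself: the inequality is tight when $m$ is odd, so it is essential that the $\{0\}$ singleton is retained as a separate class — omitting it would leave $(m-1)/2$ pairs against possibly $(m-1)/2$ $2$-bad indices and fail to force a fully $2$-good class.
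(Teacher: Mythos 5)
Your proof is correct and follows essentially the same route as the paper: identical price construction via Proposition~\ref{prop:2pe_ecp_n_suff_cond}, the same discrepancy formula, the same reduction to finding a partition $(k_1,k_2)$ with both parts $2$-good, and the same invocation of Lemma~\ref{lem:c-bad}. The only (correct) difference is cosmetic: the paper pigeonholes over the $m+1$ ordered pairs $(k_1,m-k_1)$ and bounds the bad ones by $2(m-\lfloor m/2\rfloor-1)<m+1$, whereas you pigeonhole over the $\lfloor m/2\rfloor+1$ unordered classes partitioning $\{0,\dots,m-1\}$, which is a slightly tighter count — both yield the needed $2$-good pair, and your remark about retaining the $\{0\}$ class mirrors the paper's inclusion of the corner allocations $(0,m)$ and $(m,0)$ in its $m+1$ pairs.
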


\begin{proof}
Consider a valuation $v$ and an allocation $\mathbf{S}$. Let $k_1 = |S_1|$ and $k_2 = |S_2|$.
For every $j \in S_1$, let $\hat{p}_j = \overrightarrow{\Delta}_{v}(k_2) = \hat{p}^{(1)}$ and for every $j \in S_2$, let $\hat{p}_j = \overrightarrow{\Delta}_{v}(k_1) = \hat{p}^{(2)}$. As $\check{p}_j = 0$ for every $j \in [m]$,  and $\overrightarrow{\Delta}_{v}(k) \ge 0$, $\overleftarrow{\Delta}_{v}(k) \ge 0$ for every $0 \le k < m$, the triplet $(\mathbf{S}, \mathbf{\hat{p}}, \mathbf{0})$ satisfies all the conditions in Proposition \ref{prop:2pe_ecp_n_suff_cond} and hence it is a U-2PE. 

It holds that
\begin{eqnarray}
\label{eqn:k1_k2_q}
D(\mathbf{S}, \mathbf{\hat{p}}, \mathbf{0}) =
\frac{\sum_{j \in [m]} \hat{p}_j }{SW(\mathbf{S}, \vals)} = \frac{k_1 \cdot \hat{p}^{(1)} + k_2 \cdot \hat{p}^{(2)}}{v(k_1) + v(k_2)} = \frac{k_1 \cdot \overrightarrow{\Delta}_{v}(k_2) + k_2 \cdot \overrightarrow{\Delta}_{v}(k_1)}{v(k_1) + v(k_2)}
\end{eqnarray}
Our goal is to show that there exists a pair ($k_1$,$k_2$), s.t $0 \le k_1, k_2 \le m$, $k_1+k_2 = m$ and $D(\mathbf{S}, \mathbf{\hat{p}}, \mathbf{0}) \le 2$. Therefore, it suffices to prove that there exists at least one \emph{$2-$good pair}, ($k_1$,$k_2$), i.e., that both $k_1$ and $k_2$ are $2-$good elements.
Indeed, for such a pair,

\begin{eqnarray}
\label{eqn:k1_k2_q_1}
D(\mathbf{S}, \mathbf{\hat{p}}, \mathbf{0}) = \frac{k_1 \cdot \overrightarrow{\Delta}_{v}(k_2) + k_2 \cdot \overrightarrow{\Delta}_{v}(k_1)}{v(k_1) + v(k_2)} \le \frac{(k_1 + k_2) \cdot 2  \frac{v(m)}{m}}{v(k_1) + v(k_2)} \le \frac{2 v(m)}{v(m)}= 2.
\end{eqnarray}
where the first inequality follows from the fact that both $k_1$ and $k_2$ are $2-$good elements and the last inequality is due to subadditivity.
Notice that there are exactly $m + 1$ pairs that satisfy $k_1, k_2 \ge 0$ and $k_1+k_2 = m$.
According to Lemma \ref{lem:c-bad}, there are at most $m - \lfloor \frac{m}{2} \rfloor - 1 < \frac{m}{2}$ 
integers $k_1 \in \set{0,1, \ldots, m-1}$
which are $2-$bad, i.e. with
$\overrightarrow{\Delta}_{v}(k_1) > 2 \cdot \frac{v(m)}{m}$.
Similarly, there are at most $m - \lfloor \frac{m}{2} \rfloor - 1 < \frac{m}{2}$ 
integers $k_2 \in \set{0,1, \ldots, m-1}$
which are $2-$bad, i.e. with
$\overrightarrow{\Delta}_{v}(k_2) > 2 \cdot \frac{v(m)}{m}$.
Overall there are at most 
$2 \cdot (m - \lfloor \frac{m}{2} \rfloor - 1) < m$ $2-$bad elements, and since this number is strictly less than $m + 1$, which is the number of pairs, there exists at least one good pair that satisfies
$k_1, k_2 \ge 0$,
$k_1+k_2 = m$,
$\overrightarrow{\Delta}_{v}(k_1) \le 2 \cdot \frac{v(m)}{m}$ and
$\overrightarrow{\Delta}_{v}(k_2) \le 2 \cdot \frac{v(m)}{m}$, which concludes the proof.

\end{proof}

We next establish a lower bound on the discrepancy of a 2PE for 2 identical subadditive buyers. 

\begin{theorem}
\label{ex:d_138}
There exists a market with identical items and 2 identical subadditive buyers that admits no 2PE with discrepancy smaller than $1.3895$.
\end{theorem}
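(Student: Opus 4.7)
The plan is to exhibit an explicit symmetric subadditive valuation $v$ on some finite number of identical items, and to argue that the market with two identical buyers of valuation $v$ admits no 2PE of discrepancy below $1.3895$. By Proposition~\ref{prop:2pe_ident_same_customer_prices} I may restrict to U-2PEs, and by symmetry of the two buyers such an equilibrium is fully determined by an allocation profile $(k_1, k_2)$ with $k_1+k_2=m$ together with four prices $\hat{p}^{(1)}, \hat{p}^{(2)}, \check{p}^{(1)}, \check{p}^{(2)}$. Proposition~\ref{prop:u2pe_necess_cond} with $n=2$ then supplies necessary (and, for two buyers, also sufficient) conditions, namely $\hat{p}^{(i)} \ge \overrightarrow{\Delta}_v(k_{3-i})$ and $\check{p}^{(i)} \le \min\{\overleftarrow{\Delta}_v(k_i), \hat{p}^{(1)}, \hat{p}^{(2)}\}$.

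Given these constraints, minimizing the discrepancy numerator $\sum_i k_i(\hat{p}^{(i)} - \check{p}^{(i)})$ over the prices is a small linear program with a closed-form optimum: each $\hat{p}^{(i)}$ is pushed to its floor $\overrightarrow{\Delta}_v(k_{3-i})$ unless raising one floor unlocks a strictly larger gain in some $\check{p}^{(j)}$ (since $\check{p}^{(j)} \le \hat{p}^{(i)}$), in which case a larger $\hat{p}^{(i)}$ is preferred; each $\check{p}^{(i)}$ is then set to the corresponding minimum. The outcome is a closed-form expression $D^*(k_1, k_2)$ depending only on the forward and backward slopes of $v$ at $k_1$ and $k_2$ and on $v(k_1)+v(k_2)$. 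Once this formula is in hand, verifying the theorem reduces to a finite case check over the $\lfloor m/2 \rfloor + 1$ allocations (up to the swap symmetry): plug in the slopes and compare against $1.3895$.

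The construction of $v$ is guided by the max-min problem of choosing $v$ to maximize $\min_{(k_1,k_2)} D^*(k_1, k_2)$. The candidate $v$ will equalize (or very nearly equalize) $D^*$ across the most threatening allocations, while several subadditivity inequalities are saturated exactly where needed to prevent further reduction. The main obstacle is locating this extremal $v$: I expect it to be the optimizer of a small parametric program in the level values $v(1),\dots,v(m-1)$ (with $v(0)=0$ and $v(m)$ normalized), whose binding constraints couple subadditivity with the $D^*$-equalization conditions across at least two different splits; the constant $1.3895$ then emerges as its optimal value and does not appear to admit a cleaner closed form. After the extremal $v$ is pinned down, the algebraic checks of subadditivity and of $D^*(k_1,k_2) \ge 1.3895$ at each split are routine.
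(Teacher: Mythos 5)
Your overall reduction matches the paper's: restrict to U-2PEs via Proposition~\ref{prop:2pe_ident_same_customer_prices}, invoke the necessary-and-sufficient conditions of Proposition~\ref{prop:u2pe_necess_cond} (which for $n=2$ coincide with the sufficient conditions of Proposition~\ref{prop:2pe_ecp_n_suff_cond}), solve the small LP over prices for each split $(k_1,k_2)$ with $k_1+k_2=m$, and take the minimum over splits. That is also what the paper does, implicitly, via its computer check.

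However, the statement is an existence claim, and your proposal never actually produces the witness. You write that ``the main obstacle is locating this extremal $v$,'' describe the max-min program you expect it to solve, and assert that the constant $1.3895$ ``emerges'' as its value --- but you give no concrete $v$, no number of items $m$, and no verification that the claimed lower bound holds at every split. Without the explicit instance, there is nothing to check, and the argument does not establish the theorem. The paper's proof supplies exactly what is missing: a specific valuation on $m=3461$ identical items, defined piecewise with segments of widths $30$, $50$, and $30$, together with a (computer-aided) verification that the per-split optimum over prices never drops below roughly $1.3895$, with the minimum attained at $(k_1,k_2)=(1,3460)$. If you want a hand-checkable version of the same structure, note that Appendix~\ref{lb_simple} gives a $27$-item instance yielding the weaker bound $6/5$ by exactly the LP-per-split method you describe; scaling that method up to beat $1.3895$ is precisely the part your write-up leaves undone.

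One smaller point: the per-split price optimization is indeed a linear program, but the coupling $\check p^{(j)} \le \min_i \hat p^{(i)}$ means the optimal $\hat p$'s are not always at their floors $\overrightarrow{\Delta}_v(k_{3-i})$; raising $\hat p^{(1)}$ above its floor can pay off when $k_1 < k_2$ and $\overleftarrow{\Delta}_v(k_2)$ is large. You gesture at this, but a clean closed form for $D^*(k_1,k_2)$ would need to be stated and justified case by case, and the appendix's Table~\ref{table:d2} shows the kind of bookkeeping involved. None of this is fatal to your strategy, but all of it has to actually be carried out on a concrete $v$ for the proof to exist.
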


\begin{proof}
Consider a setting with $3461$ identical items and two identical buyers with valuation function as follows:

\begin{eqnarray*}
v(k) =  
	\begin{cases}
		0                                        & k = 0 \\
		\lfloor \frac{k-1}{30} \rfloor +1        & 0 < k \le 480 \\
		\lfloor \frac{k-481}{50} \rfloor +17     & 480 < k \le 2980 \\ 
		\lfloor \frac{k-2981}{30} \rfloor +67    & 2980 < k \le 3461 \\
	\end{cases} \\
\end{eqnarray*}


There are $3462$ possible allocations, in which the first buyer gets $k$ items and the second gets $3461-k$ items, where $0 \le k \le 3461$. 
We claim that the minimum discrepancy is achieved at $k = 1$ and $k = 3460$ with discrepancy of slightly above  $1.3895$.
This is proved using a computer program \cite{code}.

\end{proof}

Example \ref{thm:simple_ex} in Appendix \ref{lb_simple} gives a simpler instance showing a lower bound of $\frac{6}{5}$.


We now extend the result of Theorem \ref{thm:sa_n2_q2} to markets with an arbitrary number of identical subadditive buyers.

\begin{theorem}
\label{thm:sa_n_q25}
Every market with $n > 2$ identical subadditive symmetric valuations admits a U-2PE, $(\mathbf{S}, \mathbf{\hat{p}}, \mathbf{\check{p}})$, with discrepancy of at most  $\max \set{2, \frac{n+2}{n-1}} \le 2.5$.

\end{theorem}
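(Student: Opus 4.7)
\textit{Proof plan.} The plan is to construct a U-2PE of the form $(\mathbf{S},\hat{\mathbf{p}},\mathbf{0})$ where $|S_i|=k_i$ with $\sum_i k_i=m$, and to set $\hat{p}^{(i)}:=\max_{i'\neq i}\overrightarrow{\Delta}_v(k_{i'})$. By Proposition~\ref{prop:2pe_ecp_n_suff_cond} this yields a valid U-2PE (the four sufficient conditions with $\check{p}=0$ reduce to the definition of $\hat{p}^{(i)}$ plus nonnegativity of max-forward-slopes and market clearance). Let $c:=\max\{2,(n+2)/(n-1)\}$. Subadditivity gives $\sum_i v(k_i)\ge v(m)$, so provided every $k_i$ is $c$-good (i.e., $\overrightarrow{\Delta}_v(k_i)\le c\cdot v(m)/m$),
\[
D=\frac{\sum_i k_i\,\hat{p}^{(i)}}{\sum_i v(k_i)}\le\frac{m\cdot c\,v(m)/m}{v(m)}=c,
\]
which is exactly the claimed bound. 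The task thus reduces to exhibiting a partition $(k_1,\dots,k_n)$ of $m$ into $c$-good parts.

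For that, I would invoke Lemma~\ref{lem:c-bad}, which guarantees at least $\lfloor(c-1)m/c\rfloor+1$ $c$-good integers in $\{0,1,\dots,m-1\}$, and split into two cases based on whether $0$ is $c$-good (equivalently, whether $v(1)\le c\,v(m)/m$, since $\overrightarrow{\Delta}_v(0)=\max_l v(l)/l=v(1)$ by subadditivity). In the \emph{good} case, I would take a $2$-good pair $(K_1,K_2)$ with $K_1+K_2=m$ from Theorem~\ref{thm:sa_n2_q2} and allocate $K_1$ and $K_2$ to two buyers while leaving the remaining $n-2$ buyers with empty bundles; all nonzero parts are $2$-good (hence $c$-good) and the zero parts are $c$-good by assumption, so the general accounting above gives $D\le 2\le c$. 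In the \emph{bad} case (where $v(1)>c\,v(m)/m$), empty bundles are prohibited, so instead I would apply the pigeonhole counting of Theorem~\ref{thm:sa_n2_q2} to pairs summing to $m-(n-2)$ to obtain a $2$-good pair $(K_1,K_2)$ with $K_1+K_2=m-(n-2)$, and use the allocation $(K_1,K_2,1,1,\dots,1)$; the amplified welfare bound $\sum_i v(k_i)\ge(n-2)v(1)+v(K_1)+v(K_2)$ together with $v(1)>c\,v(m)/m$ provides enough slack to absorb the contribution of the singleton bundles.

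The main obstacle I anticipate is the bad case: a singleton bundle may have $\overrightarrow{\Delta}_v(1)>c\,v(m)/m$, so the direct bound $\hat{p}^{(i)}\le c\,v(m)/m$ fails for the $n-2$ unit-item buyers. The crux will be to combine $\overrightarrow{\Delta}_v(1)\le v(1)$ (from subadditivity) with the welfare amplification $\sum_i v(k_i)\ge n\,v(1)$ (valid in the bad case because $v(k_i)\ge v(1)$ whenever $k_i\ge 1$) to show that even the unit-item buyers contribute at most $c$ to the ratio. The constants $(n+2)/(n-1)$ and $2$ in the theorem arise as the two regimes of this tradeoff: $(n+2)/(n-1)$ is the tight outcome of the singleton accounting for $n=3$, while for $n\ge 4$ the welfare amplification dominates and the bound stabilizes at $2$.
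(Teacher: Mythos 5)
Your \textbf{good case} (where $0$ is $c$-good) is correct and is in fact a cleaner and more direct route than the paper's: allocating a $2$-good pair from Theorem~\ref{thm:sa_n2_q2} to two buyers and giving the rest empty bundles yields $\hat{p}^{(i)}\le c\,v(m)/m$ for the two nonempty bundles, and the welfare bound $v(K_1)+v(K_2)\ge v(m)$ gives $D\le c$ immediately. The paper instead runs the triangle-greedy Algorithm~\ref{alg1:algorithm}, which handles both cases uniformly.

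The \textbf{bad case} plan, however, contains a genuine gap, and it is not merely a matter of tightening constants. The allocation $(K_1,K_2,1,\dots,1)$ cannot in general be supported by a low-discrepancy 2PE, because the singleton bundles do not just contribute their own terms to the numerator: by Proposition~\ref{prop:u2pe_necess_cond} (condition~\ref{um_cond_3}, applied to $T=S_i\cup\{j\}$ for a singleton buyer $i$), every other buyer $i'$ must satisfy $\hat{p}^{(i')}\ge v(2)-v(1)$, so a large value of $\overrightarrow{\Delta}_v(1)$ forces high prices on \emph{all} items, not only the $n-2$ singletons. Concretely, take $m=100$, $n=4$ ($c=2$), and the subadditive valuation $v(0)=0$, $v(1)=1$, $v(k)=2$ for $k\ge 2$. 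Then $v(1)=1>2\,v(m)/m=0.04$, so we are in the bad case, and $(K_1,K_2)=(49,49)$ is a $2$-good pair summing to $m-(n-2)=98$. But for the allocation $(49,49,1,1)$, the necessary condition forces $\hat{p}^{(i)}\ge v(2)-v(1)=1$ for every $i$, while $\overleftarrow{\Delta}_v(49)=0$ forces $\check{p}^{(1)}=\check{p}^{(2)}=0$, so
\[
D\;\ge\;\frac{49\cdot 1+49\cdot 1+1\cdot(1-1)+1\cdot(1-1)}{v(49)+v(49)+v(1)+v(1)}=\frac{98}{6}>16,
\]
no matter how the remaining price freedom is used. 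The welfare-amplification bound $\sum_i v(k_i)\ge n\,v(1)$ is far too weak to absorb this: the cross-effect scales with $K_1+K_2$ (here $98$), while the welfare gain scales with $n$. Note also that the pigeonhole for a $2$-good pair summing to $m-(n-2)$ only closes cleanly for $n=3$; for $n\ge 4$ the number of candidate pairs can equal or drop below twice the $2$-bad count, so existence is not guaranteed by that counting alone (it happens to hold in the example above but not in general).

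The paper avoids exactly this failure by the triangle-greedy design of Algorithm~\ref{alg1:algorithm}: it never leaves a buyer at the \emph{start} of a steep triangle. In the example, the algorithm first gives every buyer the whole triangles $T_0=[0,1]$ and $T_1=[1,2]$, landing all buyers at $k_i=2$ where $\overrightarrow{\Delta}_v(2)=0$, and then splits the remainder; the output $(48,48,2,2)$ with zero prices is in fact a WE. The contrast shows why singletons cannot be used as a universal ``filler'' in the bad case: the filler must be chosen so that $\overrightarrow{\Delta}_v$ at the filler size is controlled, which is precisely what whole-triangle allocation ensures. Your proposal needs a mechanism to push the filler buyers past the bad triangles before the argument can go through.
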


To prove Theorem \ref{thm:sa_n_q25}, we present an algorithm that 
computes some allocation $(k_1, k_2, \ldots, k_n)$, and show in Lemma~\ref{lem:n_d25} that the obtained allocation is supported in a 2PE with discrepancy of at most $\max \set{2, \frac{n+2}{n-1}}$. 

Line~\ref{alg1:find_pair} in the algorithm refers to a \emph{$2-$good pair}. For two buyers $x, y \in [n]$ and integers $k_x,l_x,k_y,l_y,r \in [m]$, we say that a pair $(l_x, l_y)$ is  \emph{$2-$good} w.r.t. $r$, $k_x$ and $k_y$ if (i) $l_x, l_y \ge 0$ (ii) $l_x+l_y = r$, (iii) $\overrightarrow{\Delta}_{v_x}(k_{x} + l_x) \le 2 \cdot  \overrightarrow{\Delta}_{v_x}(k_{x})$, and (iiii) $\overrightarrow{\Delta}_{v_y}(k_{y} + l_y) \le 2 \cdot \overrightarrow{\Delta}_{v_y}(k_{y})$.

In the beginning, the algorithm allocates ``whole triangles" to buyers, each time allocating to the buyer with the highest max-forward slope, breaking ties in favor of buyers with smaller bundles. As buyer valuations are identical, this ensures that a triangle in $T_v$ is allocated to \emph{all} buyers before the next triangle in $T_v$ is allocated to any buyer.
If the number of remaining items, $r$, is less than the number of items in the selected buyer's triangle, i.e., there are not enough elements to allocate the whole triangle that was chosen, 
the algorithm allocates the $r$ remaining items to two buyers, s.t. the pair of  the max-forward slopes is a 2-good pair.
Note that the naive idea of allocating all the $r$ items to the selected buyer is potentially bad, because we have no control over the max-forward slope within the triangle.

\begin{algorithm}[H]
\caption{An algorithm for finding an allocation with discrepancy of at most 2.5 for identical buyers.}
\label{alg1:algorithm}
\textbf{Input}: $m,n,v$\\
\textbf{Output}: $(k_1, k_2, \ldots, k_n)$, s.t. $\sum_{i \in [n]} k_i = m$ and $k_i \ge 0$ for every $i \in [n]$
\begin{algorithmic}[1] 
\STATE Let $k_i=0$ for every $i \in [n]$
\STATE Let $r = m$
\WHILE{$r > 0$}
\STATE \label{alg1:choose_step} Let $X = \{i | i = argmax_{i' \in [n]} \set{\overrightarrow{\Delta}_{v}(k_{i'})}\}$ 
\STATE \label{alg1:choose_step1} Let $x = argmin_{i \in X} \set{k_i}$
\STATE Let $t \ge 1$ be the number of items in $x$'s current triangle.
\IF {$r \ge t$}  \label{alg1:if_step}
\STATE $k_x = k_x + t$
\STATE $r = r - t$
\ELSE \label{alg1:else_step}
\STATE Let $Y = \{i | i = argmax_{i' \in [n]\setminus x} \set{\overrightarrow{\Delta}_{v}(k_{i'})}\}$ 
\STATE \label{alg1:y_choose} Let $y = argmin_{i \in Y} \set{k_i}$
\STATE \label{alg1:find_pair} Find a $2-$good pair, $(l_x, l_y)$ w.r.t. $r$,$k_x$,$k_y$
\STATE $k_x = k_x + l_x$
\STATE $k_y = k_y + l_y$
\STATE $r = 0$
\ENDIF
\ENDWHILE
\STATE \textbf{return} $(k_1, k_2, \ldots, k_n)$
\end{algorithmic}
\end{algorithm}


The following lemma shows that every allocation that is obtained as an output of Algorithm \ref{alg1:algorithm} can be supported in a U$-$2PE with the desired discrepancy.
\begin{lemma}
\label{lem:n_d25}
Let $(k_1, k_2, \ldots, k_n)$ be an allocation returned by Algorithm \ref{alg1:algorithm}, and let $\mathbf{S} = (S_1, S_2, \ldots, S_n)$ be a an allocation satisfying $|S_i| = k_i$. There exist $\highprice$ and $\lowprice$ s.t. $(\mathbf{S}, \mathbf{\hat{p}}, \mathbf{\check{p}})$ is a U$-$2PE with discrepancy of at most $\max \set{2, \frac{n+2}{n-1}}$.
\end{lemma}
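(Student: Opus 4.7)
The plan is to construct explicit prices supporting $\mathbf{S}$, verify Proposition~\ref{prop:2pe_ecp_n_suff_cond}, and bound the discrepancy via the structure of Algorithm~\ref{alg1:algorithm}'s output. I would set $\check{p}^{(i)}=0$ for every buyer $i$ and $\hat{p}^{(i)}=\max_{i'\ne i}\overrightarrow{\Delta}_v(k_{i'})$. Conditions 1 and 5 of Proposition~\ref{prop:2pe_ecp_n_suff_cond} are immediate; conditions 2 and 3 hold since $\check{p}^{(i)}=0\le\hat{p}^{(i')}$ and $0\le\overleftarrow{\Delta}_v(k_i)$; condition 4 holds with equality by construction. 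Hence $(\mathbf{S},\mathbf{\hat{p}},\mathbf{\check{p}})$ is a U-2PE, and it remains to bound $D=\sum_i k_i\hat{p}^{(i)}/\sum_i v(k_i)$.

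The key structural claim is that Algorithm~\ref{alg1:algorithm} processes the right triangles $T_0,T_1,\ldots$ sequentially, completely filling each $T_\ell$ across all buyers before touching $T_{\ell+1}$. This follows because all valuations coincide with $v$ and the triangle slopes $\alpha_0\ge\alpha_1\ge\cdots$ are non-increasing (since $\tilde v$ is submodular), so the greedy rule (largest $\overrightarrow{\Delta}_v(k_i)$, ties toward smallest $k_i$) always prefers a buyer at the current lowest intersection index. Let $\ell^*$ be the last triangle index touched and set $\alpha^*:=\alpha_{\ell^*}$. At termination, every non-partial buyer has $k_i\in\{i_{\ell^*},i_{\ell^*+1}\}$ with $\overrightarrow{\Delta}_v(k_i)\in\{\alpha^*,\alpha_{\ell^*+1}\}\le\alpha^*$, while the at most two partial buyers $x,y$ produced by line~\ref{alg1:find_pair} satisfy $\overrightarrow{\Delta}_v(k_i)\le 2\alpha^*$ by the 2-good condition. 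Consequently, $\hat{p}^{(i)}\le 2\alpha^*$ for every $i$, giving $\sum_j\hat{p}_j\le 2\alpha^* m$.

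For the welfare bound, I would first prove that $\overrightarrow{\Delta}_v(k')\ge\alpha^*$ for every $k'<i_{\ell^*+1}$: the telescoping identity $v(i_{\ell^*+1})-v(i_{\ell'})=\sum_{\ell''=\ell'}^{\ell^*}\alpha_{\ell''}(i_{\ell''+1}-i_{\ell''})\ge\alpha^*(i_{\ell^*+1}-i_{\ell'})$ combined with the chord bound $v(k')\le v(i_{\ell'})+\alpha_{\ell'}(k'-i_{\ell'})$ inside the triangle containing $k'$ implies $(v(i_{\ell^*+1})-v(k'))/(i_{\ell^*+1}-k')\ge\alpha^*$. Lemma~\ref{lem:intgr-lem} then yields $v(k_i)\ge\alpha^* k_i$ whenever $k_i\le i_{\ell^*+1}$. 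This covers the if-branch (no partial buyers) and the else-branch sub-case in which both partials lie in $T_{\ell^*}$ (both $x$ and $y$ were previously at $i_{\ell^*}$): in either case $SW\ge\alpha^* m$, and hence $D\le 2$.

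The main obstacle is the remaining sub-case, in which the partial buyer $y$ extends past $i_{\ell^*+1}$ into $T_{\ell^*+1}$; here the per-buyer welfare bound for $y$ weakens. I would use both $v(k_y)\ge\alpha^*(k_y-l_y)$ (from $v(k_y)\ge v(i_{\ell^*+1})$) and $v(k_y)\ge\alpha_{\ell^*+1}k_y$ (from Lemma~\ref{lem:intgr-lem} together with the analogous bound $\overrightarrow{\Delta}_v(k')\ge\alpha_{\ell^*+1}$ for $k'\in[i_{\ell^*+1},k_y)$). Combining these with the algorithm's invariants $m=(n-1)i_{\ell^*+1}+i_{\ell^*}+r$ and $l_y\le r<i_{\ell^*+1}-i_{\ell^*}$, and crucially exploiting that the $n-2$ whole buyers at $i_{\ell^*+1}$ contribute $(n-2)\alpha^* i_{\ell^*+1}$ to the welfare, a careful algebraic optimization over $(l_x,l_y,i_{\ell^*},i_{\ell^*+1},\alpha_{\ell^*+1}/\alpha^*)$ bounds the resulting ratio by $(n+2)/(n-1)$. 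Since $(n+2)/(n-1)\le 2$ for $n\ge 4$ and equals $5/2$ for $n=3$, combining all cases gives $D\le\max\{2,(n+2)/(n-1)\}$, as required; the factor $(n+2)/(n-1)$ arises precisely because the $n-2$ ``whole'' buyers dilute the welfare loss incurred by $y$'s weaker bound.
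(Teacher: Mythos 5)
The key difference — and where your proof has a real gap — is that you set $\check{p}^{(i)}=0$ for \emph{every} buyer, whereas the paper's proof crucially uses a nonzero low price in the hardest sub-case.

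Your analysis of the structure of Algorithm~\ref{alg1:algorithm} is correct, the first case (no partial buyers) and the sub-case where both $x$ and $y$ share the current triangle $T^x$ go through as you describe, and your choice of $\hat{p}^{(i)}=\max_{i'\ne i}\overrightarrow{\Delta}_v(k_{i'})$ matches the paper's (and Proposition~\ref{prop:2pe_ecp_n_suff_cond} verifies feasibility). The problem is the remaining sub-case, where $x$ is the only buyer whose current triangle is $T^x$, all others sit at $i_{\ell^*+1}$, and $y$ spills into $T_{\ell^*+1}$. Here the paper does \emph{not} use $\check{p}=0$: for every buyer $i\notin\{x,y\}$ it sets $\check{p}_j=\theta_x$ on $j\in S_i$ (legitimate because $\overleftarrow{\Delta}_v(i_{\ell^*+1})=\theta_x$ by Lemma~\ref{lem:tri-slope}, satisfying condition~\ref{u2pe_cond_3} of Proposition~\ref{prop:2pe_ecp_n_suff_cond}), while setting $\hat{p}_j=2\theta_x$ uniformly. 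This subtracts $(n-2)z\theta_x$ from the discrepancy numerator, giving exactly
$D\le\frac{2m-(n-2)z}{m-r'}\le\frac{n+2}{n-1}$.

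With $\check{p}\equiv 0$ the discrepancy numerator is $\sum_j\hat{p}_j$ and, plugging in the bounds you invoke ($\hat{p}^{(i)}\le 2\alpha^*$, $\hat{p}^{(x)}\le 2\theta_y$, $v(k_i)\ge k_i\alpha^*$ for buyers at or before $i_{\ell^*+1}$, $v(k_y)\ge\max\{\alpha^*(k_y-l_y),\alpha_{\ell^*+1}k_y\}$, and $m=(n-1)i_{\ell^*+1}+i_{\ell^*}+r'$ with $r'<i_{\ell^*+1}-i_{\ell^*}$), the resulting ratio optimizes to $\frac{2n}{n-1}$, not $\frac{n+2}{n-1}$. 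That already exceeds $2$ for every finite $n$, and for $n=3$ gives $3>\frac{5}{2}$. Closing this gap with $\check{p}=0$ would require bringing in nontrivial consequences of subadditivity that tie $\overrightarrow{\Delta}_v(k_x)$ to how much $v(k_x)$ can dip inside $T_{\ell^*}$ — constraints you do not state and whose interaction with the $(l_x,l_y,i_{\ell^*},i_{\ell^*+1},\alpha_{\ell^*+1}/\alpha^*)$ optimization is far from a routine calculation. Your intuition that ``the $n-2$ whole buyers dilute the welfare loss'' is exactly right, but the mechanism that realizes it in the paper is the nonzero $\check{p}$ charged on their items, not a tighter welfare lower bound. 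As written, the step ``a careful algebraic optimization ... bounds the resulting ratio by $(n+2)/(n-1)$'' is an assertion, not a proof, and my accounting of the bounds you list does not support it.
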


\begin{proof}

First note that if Algorithm \ref{alg1:algorithm} ends without going through step \ref{alg1:else_step}, then the $k_i$ of each buyer $i$ is located at the beginning of a triangle.
Let $x$ be the last buyer that has been chosen in step \ref{alg1:choose_step1} and let $\theta_x$ be the slope of buyer $x$ before entering step \ref{alg1:if_step}. 
Note that if $k_i > 0$, then at the last time that buyer $i$ was chosen in step \ref{alg1:choose_step1}, she had max forward slope of at least $\theta_x$. Therefore,
By Corollary \ref{cor:min-slope} and lemma \ref{lem:intgr-lem}, the SW is at least $m \cdot \theta_x$.
Moreover, by step \ref{alg1:choose_step} and Lemma \ref{lem:mon-tri-slopes}, for every $i \in [n]$, $\overrightarrow{\Delta}_{v_i}(k_{i}) \le \theta_x$.
Let  
$\hat{p}_j = \max_{i' \neq i} \set{\overrightarrow{\Delta}_{v_{i'}}(k_{i'})} \le \theta_x$ 
for every $j \in S_i$
and $\check{p}_j = 0$ for every item $j \in [m]$. 
It is easy to see that $(\mathbf{S}, \mathbf{\hat{p}}, \mathbf{0})$ satisfies all the conditions of Proposition \ref{prop:2pe_ecp_n_suff_cond}, and hence it is a U$-$2PE.
The discrepancy is, 
\begin{eqnarray*}
D(\mathbf{S}, \mathbf{\hat{p}}, \mathbf{0}) = \frac{\sum_{j \in [m]} \hat{p}_j}{SW} \le \frac{m \cdot \theta_x}{SW} \le \frac{m \cdot \theta_x}{m  \cdot \theta_x} = 1
\end{eqnarray*}
Note that if the $k_i$ of each buyer $i$ in Algorithm \ref{alg1:algorithm}'s output is located at the beginning of a triangle, then for every $i \in [n]$, $\overrightarrow{\Delta}_{v_i}(k_{i}) \le \theta_x$ and also $\overleftarrow{\Delta}_{v_i}(k_{i}) \ge \theta_x$. 
One can easily verify that for $\mathbf{p} = (\theta_x, \ldots, \theta_x)$, $(\mathbf{S}, \mathbf{p}, \mathbf{p})$ is a WE for the SM-closure valuation profile $\mathbf{\tilde{v}}$ and according to Theorem \ref{thm:we_over_ident}, it is also a WE for $\vals$.

Now assume Algorithm \ref{alg1:algorithm} goes through step \ref{alg1:else_step} before ending. Let $k'_i$ be the value of $k_i$ and $r'$ the value of $r$ when the algorithm enters step \ref{alg1:else_step}. We denote \emph{current triangle} of buyer $i$ as the triangle with $k'_i$ as its left most point and \emph{previous triangle} of buyer $i$ as the triangle with $k'_i$ as its right most point. Let us denote $T^x$ as buyer $x$ current triangle. We consider two cases. The first one is when there is another buyer, $y$, with the same current triangle, $T^x$. Then by Lemma \ref{lem:mon-tri-slopes}, buyer $y$ is chosen at step \ref{alg1:y_choose}. Let $\theta_y$ be the slope of buyer $y$ when she is chosen. Since both buyers current triangle is identical we have that $\theta_x = \theta_y = \theta$. By Claim \ref{clm:min-slope-in-tri} and step \ref{alg1:find_pair}, $\theta \le \overrightarrow{\Delta}_{v}(k'_{x} + l_x) \le 2 \cdot \theta$
and $\theta \le \overrightarrow{\Delta}_{v}(k'_{y} + l_y) \le 2 \cdot \theta$. By step \ref{alg1:choose_step} and Lemma \ref{lem:mon-tri-slopes}, the slopes of all buyers current triangles are at most $\theta$. Thus, we can set $\hat{p}_j = 2 \cdot \theta$ and $\check{p}_j = 0$ for every item $j \in [m]$. Once again, it is easy to see that  $(\mathbf{S}, \mathbf{\hat{p}}, \mathbf{\check{p}})$ satisfies all the conditions of Claim \ref{prop:2pe_ecp_n_suff_cond}, and hence it is a U$-$2PE. Next, we bound the SW. Let $i$ be a buyer with previous triangle $T^x$. By step \ref{alg1:choose_step} and Lemma \ref{lem:mon-tri-slopes}, the minimum max-forward-slope up to $k_i$ is $\theta$. By Lemma \ref{lem:intgr-lem} we have that $v_i(k_i) \ge k_i \cdot \theta$. Now, let $i$ be a buyer with current triangle $T^x$, then by Lemma \ref{lem:mon-tri-slopes}, $v_i(k_i) \ge k_i \cdot \theta' \ge k_i \cdot \theta $ where $\theta'$ is buyer $i$ previous triangle slope. We can now bound the SW by $SW(\mathbf{S}, \vals) \ge m \cdot \theta$. Concluding, $$D(\mathbf{S}, \mathbf{\hat{p}}, \mathbf{\check{p}}) =\frac{\sum_{j \in [m]} \hat{p}_j - \sum_{j \in [m]} \check{p}_j}{SW(\mathbf{S}, \vals)} \le \frac{2 \cdot m \cdot \theta}{m \cdot \theta} = 2$$
The second case is when buyer $x$ is the only buyer with $T^x$ as its current triangle (i.e., all other buyers previous triangle is $T^x$). By Claim \ref{clm:min-slope-in-tri} and step \ref{alg1:find_pair}, $\theta_x \le \overrightarrow{\Delta}_{v}(k'_{x} + l_x) \le 2 \cdot \theta_x$
and $\theta_y \le \overrightarrow{\Delta}_{v}(k'_{y} + l_y) \le 2 \cdot \theta_y \le 2 \cdot \theta_x$. Since for every $i \ne x,y$ we have that $\overrightarrow{\Delta}_{v}(k_i) = \theta_y$ we can set $\hat{p}_j = 2 \cdot \theta_x$ for every item $j \in [m]$. Note that for every $i,i' \in [n] \setminus \{x\}$ $k'_i = k'_{i'} = z$. Namely, before the last step of the algorithm every buyer except from $x$ has $z$ items and $r' < z$. Thus, $$m - r' = (n-1) z > m \frac{n-1}{n},$$ where the last inequality is due to the fact that $n \cdot z > m$. Since for every $i \ne x,y$ we have that $\overleftarrow{\Delta}_{v}(k_i) = \theta_x$ we can set $\check{p_j} = \theta_x$ for every item $j \in [m] \setminus S_x \cup S_y$ and $\check{p_j} = 0$ for every item $j \in S_x \cup S_y$ and get that $\sum_{j \in [m]} \check{p_j} = (n-2) \cdot z \cdot \theta_x > \frac{n-2}{n} \cdot m \cdot \theta_x$. As for the SW, since the minimum max-forward-slope before entering step \ref{alg1:else_step} is at least $\theta_x$, from Lemma \ref{lem:intgr-lem} we have that $SW(\mathbf{S}, \vals) \ge (m-r') \cdot \theta_x > \frac{n-1}{n} \cdot m \cdot \theta_x$. Combining it all together, $$D(\mathbf{S}, \mathbf{\hat{p}}, \mathbf{\check{p}}) = \frac{\sum_{j \in [m]} \hat{p}_j - \sum_{j \in [m]} \check{p}_j}{SW(\mathbf{S}, \vals)} \le \frac{2 \cdot \theta_x \cdot m - \frac{n-2}{n} \cdot m \cdot \theta_x}{\frac{n-1}{n} \cdot m \cdot \theta_x} = \frac{n+2}{n-1}$$
\end{proof}

It now remains to show that there always exists a $2-$good pair in line \ref{alg1:find_pair} of Algorithm \ref{alg1:algorithm}. This is established in the following lemma, which concludes the proof of Theorem~\ref{thm:sa_n_q25}.

\begin{lemma}
\label{lem:sa_n_q25_2good_pair}
For every two buyers $x,y$ in line \ref{alg1:find_pair} of Algorithm \ref{alg1:algorithm}, there exists a $2-$good pair $(l_x, l_y)$ with respect to $r$,$k_x$, and $k_y$.
\end{lemma}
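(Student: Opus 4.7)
The plan is to perform a case analysis on the positions of $x$ and $y$ at the moment line~\ref{alg1:find_pair} is reached, and then apply a counting argument based on Lemma~\ref{lem:c-bad} to a shifted sub-valuation. By the invariant of the outer while-loop (each iteration fully allocates a triangle, except possibly the last), $k_x$ must lie at the left endpoint $i_{l^*}$ of buyer $x$'s current triangle $T_{l^*}$, so $\overrightarrow{\Delta}_v(k_x) = \alpha_{l^*}$ and $r < t_{l^*}$. By the tie-breaking rule used to select $y$ together with Lemma~\ref{lem:mon-tri-slopes}, either $k_y = i_{l^*}$ (Case 1, common threshold $2\alpha_{l^*}$) or $k_y = i_{l^*+1}$ (Case 2, thresholds $2\alpha_{l^*}$ and $2\alpha_{l^*+1}$).

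In Case 1, I would introduce the shifted sub-valuation $v'(j) := v(i_{l^*}+j) - v(i_{l^*})$ on $\{0,\ldots,t_{l^*}\}$, which satisfies $v'(t_{l^*})/t_{l^*}=\alpha_{l^*}$. The key technical step is to show that $\overrightarrow{\Delta}_v(i_{l^*}+j)=\overrightarrow{\Delta}_{v'}(j)$ for every $j\in\{0,\ldots,t_{l^*}-1\}$; that is, the maximum forward slope of $v$ from $i_{l^*}+j$ is realized within the current triangle. This identification follows from the SM-closure bound $v(i_{l^*}+j)\le v(i_{l^*})+j\alpha_{l^*}$, which gives $D := v(i_{l^*+1})-v(i_{l^*}+j) \ge (t_{l^*}-j)\alpha_{l^*}$, together with the monotonicity $\alpha_{l^*}\ge\alpha_{l^*+1}\ge\cdots$: any slope at $l' > t_{l^*}-j$ is upper-bounded by $(D+(l'-(t_{l^*}-j))\alpha_{l^*+1})/l'$, and since $D \ge (t_{l^*}-j)\alpha_{l^*+1}$ this expression does not exceed the within-triangle slope $D/(t_{l^*}-j)$.

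Given this identification, Lemma~\ref{lem:c-bad} applied to $v'$ with $c=2$ bounds the number of 2-bad indices in $\{0,\ldots,t_{l^*}-1\}$ by $\lfloor(t_{l^*}-1)/2\rfloor$, and a counting argument analogous to the one in the proof of Theorem~\ref{thm:sa_n2_q2} then yields at least one 2-good pair $(l_x,l_y)$ among the $r+1$ candidates summing to $r$. Case 2 proceeds analogously but with two possibly distinct thresholds, $2\alpha_{l^*}$ for $l_x$ and $2\alpha_{l^*+1}$ for $l_y$; applying the shifted-sub-valuation construction simultaneously at $i_{l^*}$ and $i_{l^*+1}$ and using $\alpha_{l^*}\ge\alpha_{l^*+1}$ extracts a 2-good pair by the same counting.

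The main obstacle is the identification $\overrightarrow{\Delta}_v(i_{l^*}+j)=\overrightarrow{\Delta}_{v'}(j)$, which crucially relies on the subadditivity of $v$ and the concavity of its SM-closure. A further subtlety appears in the counting when $r$ is considerably smaller than $t_{l^*}$: here I would handle it by observing that every $v$-bad witness $l'$ must satisfy $l' < j$ (since the within-triangle SM-closure bound forces the slope at $l' \ge j$ to be at most $2\alpha_{l^*}$), so $j=0$ and $j=1$ are never bad, and this structural restriction on the bad set is enough to guarantee the existence of a 2-good pair via a slightly refined counting.
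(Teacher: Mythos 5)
Your overall plan — shift to a sub-valuation at the start of the current triangle, argue that max-forward slopes are controlled within the triangle, apply Lemma~\ref{lem:c-bad}, and count pairs — matches the paper's strategy, and the identification $\overrightarrow{\Delta}_v(i_{l^*}+j)=\overrightarrow{\Delta}_{v'}(j)$ you highlight as the ``main obstacle'' is in fact already Lemma~\ref{lem:tri-v-slope}(3). The genuine gap is exactly the ``further subtlety'' you flag: if you apply Lemma~\ref{lem:c-bad} to the whole shifted triangle $v'$ on $\{0,\ldots,t_{l^*}\}$, you only get that at most $t_{l^*}-\lfloor t_{l^*}/2\rfloor-1\approx t_{l^*}/2$ indices are $2$-bad, and when $r\ll t_{l^*}$ nothing prevents this bad set from containing all (or nearly all) of $\{2,\ldots,r\}$ --- the $r+1$ candidate pairs $(l_x,r-l_x)$ could then all be bad. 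Your proposed fix is too weak: the fact that a bad $j$ must have a witness $l'<j$ (which is true, since $v'(j+l')\le(j+l')\alpha_{l^*}$ forces $v'(j)<(j-l')\alpha_{l^*}\ge 0$) only excludes $j\in\{0,1\}$ from the bad set; it does not give the $\approx r/2$ bound you need to make the pair-counting close. For instance, with $r=10$, $t_{l^*}=100$, $v'\equiv 0$ on $\{0,\ldots,9\}$ and $v'(10)=9\alpha_{l^*}$, the indices $6,7,8,9$ are all bad via short witnesses landing at $10$ --- nothing structurally ties the size of the bad set inside $\{0,\ldots,r\}$ to $r$ rather than to $t_{l^*}$.

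The paper's missing ingredient is a \emph{truncation with a cap}: it defines $v'$ only on $\{0,\ldots,r'+1\}$, agreeing with the shifted $v$ on $\{0,\ldots,r'\}$ but setting $v'(r'+1)=(r'+1)\theta_x$ (the hypotenuse value). One then shows $\overrightarrow{\Delta}_{v'}(k)\ge\overrightarrow{\Delta}_v(k'_x+k)$ for $0\le k\le r'$: slopes realized at $l\le r'+1-k$ agree or improve because the capped endpoint is higher, and slopes realized at $l> r'+1-k$ are dominated by the slope to the capped endpoint (a short computation using $v(k'_x+k)-v(k'_x)\le k\theta_x$). Now Lemma~\ref{lem:c-bad} is applied to $v'$ on the small domain $\{0,\ldots,r'+1\}$, yielding at most $r'-\lfloor\frac{r'+1}{2}\rfloor$ bad indices, and since good-in-$v'$ implies good-in-$v$, the counting over the $r'+1$ pairs goes through. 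You need to replace your whole-triangle argument with this truncated-and-capped version (or supply a genuinely new counting argument); the witness-locality observation alone does not close the gap.
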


\begin{proof}
Let $k'_x$, $k'_y$, $r'$ and $t'$ be the respective values of $k_x$, $k_y$, $r$ and $t$ when the algorithm enters step \ref{alg1:find_pair}. It holds that $k'_x$ and $k'_y$ are located at the beginning of a triangle, with slopes $\theta_x$ and $\theta_y$, respectively. 
Hence, we need to show that there exists a pair $(l_x, l_y)$ s.t.: 
$l_x, l_y \ge 0$,
$l_x+l_y = r'$, 
$\overrightarrow{\Delta}_{v}(k'_{x} + l_x) \le 2 \cdot \theta_x$, and 
$\overrightarrow{\Delta}_{v}(k'_{y} + l_y) \le 2 \cdot \theta_y$.

Let $T^x$ be buyer $x$'s current triangle. We know that its length is $t'$ and that $t' > r'$. We would like to "truncate" $T^x$ after $r'+1$ points and consider the max-forward-slopes of the truncated triangle. 
Let $v': [r'+1] \rightarrow R^+$ be the following monotone set function: 
$$v'(k) = 
	\begin{cases}
		v(k + k'_x) - v(k'_x)        & 0 \le k \le r' \\
		(r'+1) \cdot \theta_x     & k = r'+1.
	\end{cases}$$

Notice that $v'$ consists of one triangle, $T^{x'}$,  with length $r'+1$ and slope $\theta_x$. Geometrically, $T^{x'}$ is identical to $T^{x}$ in its first $r'$ points, but the $(r'+1)^{th}$ point of $T^{x'}$ is higher than the $(r'+1)^{th}$ point of $T^{x}$, since every point in $T^{x}$ is located strictly below the hypotenuse. Hence, for every $0  \le k \le r'$, $\overrightarrow{\Delta}_{v'}(k) \ge \overrightarrow{\Delta}_{v}(k'_x + k)$

Consider now buyer $y$. Let $v'': [r'+1] \rightarrow R^+$ be the following monotone set function:
$$v''(k) = 
	\begin{cases}
		v(k + k'_y) - v(k'_y)        & 0 \le k \le r' \\
		(r'+1) \cdot \theta_y     & k = r'+1.
	\end{cases}$$
Similar to the arguments of buyer $x$, 
$\overrightarrow{\Delta}_{v''}(k) \ge \overrightarrow{\Delta}_{v}(k'_y + k)$, for every $0  \le k \le r'$.

Therefore, it suffices to show that there exists a pair $(l_x, l_y)$ s.t.: $l_x, l_y \ge 0$,
$l_x+l_y = r'$, 
$\overrightarrow{\Delta}_{v'}(l_x) \le 2 \cdot \theta_x$, and 
$\overrightarrow{\Delta}_{v''}(l_y) \le 2 \cdot \theta_y$. Notice that there are exactly $r' + 1$ pairs that satisfy $l_x, l_y \ge 0$ and $l_x+l_y = r'$.
According to Lemma \ref{lem:c-bad}, there are at most $r'+1 - \lfloor \frac{1}{2} \cdot (r'+1) \rfloor - 1 = r'- \lfloor \frac{1}{2} \cdot (r'+1) \rfloor$ element in $v'$ which are $2-$bad, i.e. with max-forward-slope which is strictly more than $2 \cdot \theta_x$.
Similarly, there are at most $r'- \lfloor \frac{1}{2} \cdot (r'+1) \rfloor$ element in $v''$ which are $2-$bad, i.e. with max-forward-slope which is strictly higher than $2 \cdot \theta_y$.
Overall there are at most $2 \cdot (r'- \lfloor \frac{1}{2} \cdot (r'+1) \rfloor)$ $2-$bad elements in $v'$ and $v''$, and since this number is strictly less than $r' + 1$, which is the number of pairs, there exists at least one good pair that satisfies
$l_x, l_y \ge 0$,
$l_x+l_y = r'$, 
$\overrightarrow{\Delta}_{v}(k'_x + l_x) \le \overrightarrow{\Delta}_{v'}(l_x) \le 2 \cdot \theta_x$, and 
$\overrightarrow{\Delta}_{v}(k'_y + l_y) \le \overrightarrow{\Delta}_{v''}(l_y) \le 2 \cdot \theta_y$.
\end{proof}

\section{Discrepancy in Markets with Heterogeneous Subadditive Buyers}
\label{sec:het_costumers}


In this section we show that for every market with identical items and any number of subadditive buyers, there exists a 2PE with discrepancy of at most 6.

\begin{theorem}
\label{thm:sa_n_q6}
Every market with subadditive symmetric valuations admits a U$-$2PE, $(\mathbf{S}, \mathbf{\hat{p}}, \mathbf{0})$, with discrepancy of at most $6$.
\end{theorem}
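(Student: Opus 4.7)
The plan is to combine the general 2PE sufficient-condition framework (Proposition~\ref{prop:2pe_ecp_n_suff_cond}) with the structural tools for symmetric valuations developed in Section~\ref{sec:vals_id}, and then find an allocation whose maximum ``local slope'' is small while its social welfare is a constant fraction of the optimum.

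First, by Proposition~\ref{prop:2pe_ecp_n_suff_cond}, because we want a U-2PE with $\check{\mathbf p}=\mathbf 0$, conditions~(\ref{u2pe_cond_2}) and~(\ref{u2pe_cond_3}) hold trivially (all $\overleftarrow{\Delta}$'s are non-negative). It therefore suffices to produce an allocation $(k_1,\ldots,k_n)$ with $\sum_i k_i=m$ and to set $\hat{p}^{(i)}:=\max_{i'\neq i}\overrightarrow{\Delta}_{v_{i'}}(k_{i'})$. Using this choice, the discrepancy simplifies to
\[
D(\mathbf{S},\hat{\mathbf p},\mathbf 0)=\frac{\sum_i k_i\,\hat{p}^{(i)}}{\sum_i v_i(k_i)}\;\le\;\frac{m\cdot\max_i\overrightarrow{\Delta}_{v_i}(k_i)}{\sum_i v_i(k_i)},
\]
so the whole task reduces to finding $(k_i)$ with small $\max_i \overrightarrow{\Delta}_{v_i}(k_i)$ relative to $\tfrac{1}{m}\sum_i v_i(k_i)$.

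Second, I would exploit the ``flat-function'' envelope from Theorem~\ref{thm:slv-le-lf} together with Observation~\ref{obs:flat-LMFS} and Lemma~\ref{lem:c-bad}: for each buyer $i$ and every $c\ge 1$, the number of indices $k\in\{0,\ldots,m-1\}$ at which $\overrightarrow{\Delta}_{v_i}(k)>c\cdot v_i(m)/m$ is at most $m-\lfloor(c-1)m/c\rfloor-1$. Thus each buyer individually has many ``$c$-good'' sizes at which her max-forward-slope is controlled by $c\cdot v_i(m)/m$. Moreover, by Lemma~\ref{lem:intgr-lem} and subadditivity, giving buyer $i$ a $c$-good bundle of size $k_i$ yields $v_i(k_i)\ge k_i\cdot\overrightarrow{\Delta}_{v_i}(0)/c$ type bounds relating $v_i(k_i)$ to $k_i\cdot v_i(m)/m$, which is the device that links the numerator and denominator.

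Third, and this is the crux, I would construct the allocation by an iterative procedure in the spirit of Algorithm~\ref{alg1:algorithm}, but adapted to heterogeneous valuations. Process buyers in an order driven by a suitable priority (e.g.\ decreasing $v_i(m)/m$ or current max-forward-slope, breaking ties by current bundle size), at each step assigning a whole ``triangle'' of the chosen buyer; when the residual number of items is smaller than the next triangle, invoke a ``$c$-good pair'' argument analogous to Lemma~\ref{lem:sa_n_q25_2good_pair} to split the leftover between two candidates so that both endpoints land in a $c$-good region. A careful tuning of the constants (choosing $c$ around $3$ for the leftover step, and using the flat-function counting to bound $\max_i \overrightarrow{\Delta}_{v_i}(k_i)$ by a small multiple of the welfare-per-item of the dominant buyer) should yield the targeted $D\le 6$.

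The main obstacle, which I expect to drive the loss from the $2$/$2.5$ of the identical-buyer setting up to $6$, is precisely the heterogeneity: one cannot normalize by a single $v(m)/m$, and a buyer with a large $v_i(m)$ but a sharp jump late in her curve can force $\overrightarrow{\Delta}_{v_i}(k_i)$ to be significantly larger than $v_i(k_i)/k_i$. Handling this will require a case analysis on (i) whether the welfare of the optimal allocation is dominated by one buyer, in which case a (near) single-buyer allocation works and is analyzed via Lemma~\ref{lem:c-bad} applied to her valuation; and (ii) whether it is spread across several buyers, in which case the iterative triangle-packing allocation above handles things, with the leftover pair handled by a two-buyer $c$-good split. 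Combining the two regimes and carefully summing the resulting bounds should give $D\le 6$, hence welfare at least $\tfrac{1}{7}\OPT(\vals)$ by Proposition~\ref{prop:disc_sw}.
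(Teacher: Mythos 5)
You have the right skeleton: reduce to Proposition~\ref{prop:2pe_ecp_n_suff_cond} with $\check{\mathbf p}=\mathbf 0$, set $\hat p^{(i)}=\max_{i'\neq i}\overrightarrow{\Delta}_{v_{i'}}(k_{i'})$, build the allocation by a greedy ``triangle-packing'' procedure, and handle the leftover items by a $c$-good-pair argument with $c=3$ via Lemma~\ref{lem:c-bad}. The paper's proof (Algorithm~\ref{alg2:algorithm}, Lemmas~\ref{lem:n_d6} and~\ref{lem:sa_n_q6_3good_pair}) follows exactly this outline, with priority always given to the buyer of highest \emph{current} max-forward-slope, which is the right choice among the alternatives you float.

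However, two ingredients are missing or off, and they are precisely what makes the heterogeneous case work. First, you never pin down what lower-bounds the social welfare. Lemma~\ref{lem:intgr-lem} gives $v_i(k_i)\ge k_i\cdot\min_{k'<k_i}\overrightarrow{\Delta}_{v_i}(k')$, not $k_i\cdot\overrightarrow{\Delta}_{v_i}(0)/c$, and the relevant threshold is not tied to $v_i(m)/m$ (which is exactly what fails in the heterogeneous case); rather, the greedy-by-slope rule guarantees that every triangle \emph{allocated before the final step} has slope at least $\theta_x$, the slope of the current triangle of the highest-priority buyer $x$ when the leftover step is entered. That yields $SW\ge(m-r')\theta_x$ for the items allocated up to that point. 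Second — and this is the key step you omit — the $3$-good pair is required to satisfy $l_x\ge r'/2$, so that the leftover items given to $x$ still sit on a triangle of slope $\theta_x$ and contribute at least $(r'/2)\theta_x$ more welfare. Together these give $SW\ge(m-r'/2)\theta_x\ge(m/2)\theta_x$, while $\hat p_j\le 3\theta_x$ everywhere, hence $D\le 6$. Without the $l_x\ge r'/2$ constraint there is no handle on the welfare contributed by the last step, and your proposed case split (welfare dominated by one buyer vs.\ spread across several) is neither used in the paper nor obviously sufficient, since even in the ``spread'' regime you need to argue that a constant fraction of $m$ items is covered at slope $\Theta(\theta_x)$, which is what the $l_x\ge r'/2$ condition delivers cleanly.
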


To prove Theorem \ref{thm:sa_n_q6}, we present an algorithm that 
computes some allocation $(k_1, k_2, \ldots, k_n)$, and show in Lemma~\ref{lem:n_d6} that the obtained allocation is supported in a 2PE with discrepancy of at most $6$. 


Line~\ref{alg2:find_pair} in the algorithm refers to a \emph{$3-$good pair}. For two buyers $x, y \in [n]$ and integers $k_x,l_x,k_y,l_y,r \in [m]$, we say that a pair $(l_x, l_y)$ is  \emph{$3-$good} w.r.t. $r$, $k_x$ and $k_y$ if (i) $l_x, l_y \ge 0$: (ii) $l_x+l_y = r$, (iii) $\overrightarrow{\Delta}_{v_x}(k_{x} + l_x) \le 3 \cdot  \overrightarrow{\Delta}_{v_x}(k_{x})$, and (iiii) $\overrightarrow{\Delta}_{v_y}(k_{y} + l_y) \le 3 \cdot \overrightarrow{\Delta}_{v_y}(k_{y})$.

Similar to Algorithm \ref{alg1:algorithm},
the algorithm start by allocating ``whole triangles" to buyers, each time allocating to the buyer with the highest max-forward slope. 
When there are not enough elements to allocate the whole triangle that was chosen, 
the algorithm allocates the $r$ remaining items to two buyers s.t. the pair of  the max forward slopes is a 3-good pair.

%
%
%

\begin{algorithm}[H]
\caption{An algorithm for finding an allocation with discrepancy of at most 6 for heterogeneous buyers.}
\label{alg2:algorithm}
\textbf{Input}: $m,n,(v_1, v_2, \ldots, v_n)$\\
\textbf{Output}: $(k_1, k_2, \ldots, k_n)$, s.t. $\sum_{i \in [n]} k_i = m$ and $k_i \ge 0$ for every $i \in [n]$
\begin{algorithmic}[1] 
\STATE Let $k_i=0$ for every $i \in [n]$
\STATE Let $r = m$
\WHILE{$r > 0$}
\STATE \label{alg2:choose_step} Let $x = argmax_{i \in [n]} \set{\overrightarrow{\Delta}_{v_{i}}(k_{i})}$
\STATE Let $t \ge 1$ be the number of items in $x$'s current triangle.
\IF {$r \ge t$} \label{alg2:if_step}
\STATE $k_x = k_x + t$
\STATE $r = r - t$
\ELSE \label{alg2:else_step}
\STATE \label{alg2:y_choose} Let $y = argmax_{i \in [n] \backslash x} \set{\overrightarrow{\Delta}_{v_{i}}(k_{i})}$
\STATE \label{alg2:find_pair} Find a $3-$good pair, $(l_x, l_y)$ w.r.t. $r$,$k_x$,$k_y$, s.t. $l_x \ge \frac{r}{2}$
\STATE $k_x = k_x + l_x$
\STATE $k_y = k_y + l_y$
\STATE $r=0$
\ENDIF
\ENDWHILE
\STATE \textbf{return} $(k_1, k_2, \ldots, k_n)$
\end{algorithmic}
\end{algorithm}

Given the output $(k_1, k_2, \ldots, k_n)$ of Algorithm \ref{alg2:algorithm}, let $\mathbf{S} = (S_1, S_2, \ldots, S_n)$ be an allocation that satisfies $|S_i| = k_i$ and let  
$\hat{p}_j = \max_{i' \neq i} \set{\overrightarrow{\Delta}_{v_{i'}}(k_{i'})}$ 
for every $j \in S_i$
and $\check{p}_j = 0$ for every item $j \in [m]$. 
It is easy to see that $(\mathbf{S}, \mathbf{\hat{p}}, \mathbf{0})$ satisfies all the conditions of Proposition \ref{prop:2pe_ecp_n_suff_cond}, and hence it is a U$-$2PE.

The following lemma shows that  $(\mathbf{S}, \mathbf{\hat{p}}, \mathbf{0})$ has the desired discrepancy.
\begin{lemma}
\label{lem:n_d6}
The discrepancy of $(\mathbf{S}, \mathbf{\hat{p}}, \mathbf{0})$ is at most $6$.
\end{lemma}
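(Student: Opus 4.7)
The plan is to mirror the case analysis in the proof of Lemma~\ref{lem:n_d25}, adapted to heterogeneous valuations and to the $3$-good split used in line~\ref{alg2:find_pair}. I will split on whether Algorithm~\ref{alg2:algorithm} ever enters step~\ref{alg2:else_step}. In both cases the argument identifies a natural reference slope $\theta$ such that every high price is at most $c\theta$ and the social welfare is at least $\beta m\theta$, giving a discrepancy bound of $c/\beta$; the constants are $(c,\beta)=(1,1)$ in the easy case and $(c,\beta)=(3,1/2)$ in the hard one.

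If step~\ref{alg2:else_step} is never entered, then every final $k_i$ is the start of a triangle and the global max slope $\theta^{\ast}:=\max_i \overrightarrow{\Delta}_{v_i}(k_i)$ upper-bounds every $\hat{p}_j$. For the welfare, I will argue that every pick in step~\ref{alg2:choose_step} had slope at least $\theta^{\ast}$ (the global max is non-increasing across iterations, and the last pick's slope upper-bounds $\theta^{\ast}$). Combined with Lemma~\ref{lem:mon-tri-slopes} and the fact that within any triangle the forward slope is at least the triangle slope, this gives $\overrightarrow{\Delta}_{v_i}(k')\ge \theta^{\ast}$ for every buyer $i$ and every $k'<k_i$. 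Applying Lemma~\ref{lem:intgr-lem} yields $v_i(k_i)\ge k_i\theta^{\ast}$, so $SW\ge m\theta^{\ast}$ and $D\le 1$.

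The main case is when step~\ref{alg2:else_step} is executed. Let $k_i', r'$ denote the values of $k_i$ and $r$ upon entry, and set $\theta_x=\overrightarrow{\Delta}_{v_x}(k_x')$, $\theta_y=\overrightarrow{\Delta}_{v_y}(k_y')$, so $\theta_x\ge\theta_y$ and every other buyer $i$ satisfies $\overrightarrow{\Delta}_{v_i}(k_i')\le\theta_x$. For the prices, only $x$ and $y$ change position, and the $3$-good-pair property of line~\ref{alg2:find_pair} gives $\overrightarrow{\Delta}_{v_x}(k_x)\le 3\theta_x$ and $\overrightarrow{\Delta}_{v_y}(k_y)\le 3\theta_y\le 3\theta_x$; for $i\neq x,y$ we still have $\overrightarrow{\Delta}_{v_i}(k_i)\le\theta_x$. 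Hence $\hat{p}_j\le 3\theta_x$ for every item $j$, and $\sum_j \hat{p}_j\le 3m\theta_x$.

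For the welfare lower bound I will use that every pick made before step~\ref{alg2:else_step} happened at a global max at least $\theta_x$, so every triangle fully traversed by any buyer has slope at least $\theta_x$. Combined with the within-triangle bound on the forward slope, Lemma~\ref{lem:intgr-lem} gives $v_i(k_i')\ge k_i'\theta_x$ for all $i$. Since $x$'s additional $l_x$ items sit inside her current triangle of slope $\theta_x$, the same argument promotes this to $v_x(k_x)\ge k_x\theta_x$; for $y$ it is enough to use monotonicity, $v_y(k_y)\ge v_y(k_y')\ge k_y'\theta_x$. Summing yields $SW\ge (m-r'+l_x)\theta_x$, and the guarantee $l_x\ge r'/2$ promotes this to $(m-r'/2)\theta_x$. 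Since step~\ref{alg2:else_step} is entered only when $t'>r'$ while $t'\le m$, we have $r'<m$, so $SW>(m/2)\theta_x$ and $D\le 6$. The main obstacle is precisely why the extra condition $l_x\ge r'/2$ is built into line~\ref{alg2:find_pair}: buyer $y$'s welfare contribution beyond $k_y'$ cannot be safely controlled by $\theta_x$ (only by the possibly much smaller $\theta_y$), so the ``replacement'' items must be absorbed by $x$, whose current triangle does have slope $\theta_x$, in order to keep the common reference slope in both the price and welfare bounds.
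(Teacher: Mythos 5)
Your proof is correct and follows essentially the same approach as the paper's: the same case split on whether step~\ref{alg2:else_step} is reached, the same reference slope $\theta_x$, the same $3\theta_x$ bound on every $\hat{p}_j$ via the $3$-good pair, and the same use of Lemma~\ref{lem:intgr-lem} and $l_x\ge r'/2$ to lower-bound $SW$ by $(m-r'/2)\theta_x\ge (m/2)\theta_x$. The only cosmetic differences are that you account for welfare per buyer rather than as "welfare before step~\ref{alg2:else_step} plus the increment from $l_x$", and you observe $r'<m$ where the paper uses $r'\le m$.
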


\begin{proof}
First note that if Algorithm \ref{alg2:algorithm} ends without going through step \ref{alg2:else_step}, then the $k_i$ of each buyer $i$ is located at the beginning of a triangle.
Let $x$ be the last buyer that has been chosen in step \ref{alg2:choose_step} and let $\theta_x$ be the slope of buyer $x$ before entering step \ref{alg2:if_step}. By step \ref{alg2:choose_step} and Lemma \ref{lem:mon-tri-slopes}, for every $i \in [n]$, $\overrightarrow{\Delta}_{v_i}(k_{i}) \le \theta_x$ and therefore $\hat{p}_j \le \theta_x$ for every $j \in [m]$.
Note that if $k_i > 0$, then at the last time that buyer $i$ was chosen in step 4, she had max forward slope of at least $\theta_x$. Therefore,
By Corollary \ref{cor:min-slope} and lemma \ref{lem:intgr-lem}, the SW is at least $m \cdot \theta_x$.
Therefore, 
\begin{eqnarray*}
D(\mathbf{S}, \mathbf{\hat{p}}, \mathbf{0}) = \frac{\sum_{j \in [m]} \hat{p}_j}{SW} \le \frac{m \cdot \theta_x}{SW} \le \frac{m \cdot \theta_x}{m  \cdot \theta_x} = 1
\end{eqnarray*}
Note that if the $k_i$ of each buyer $i$ in Algorithm \ref{alg2:algorithm}'s output is located at the beginning of a triangle, then for every $i \in [n]$, $\overrightarrow{\Delta}_{v_i}(k_{i}) \le \theta_x$ and also $\overleftarrow{\Delta}_{v_i}(k_{i}) \ge \theta_x$. 
One can easily verify that for $\mathbf{p} = (\theta_x, \ldots, \theta_x)$, $(\mathbf{S}, \mathbf{p}, \mathbf{p})$ is a WE for the SM-closure valuation profile $\mathbf{\tilde{v}}$ and according to Theorem \ref{thm:we_over_ident}, it is also a WE for $\vals$. 

We now assume that Algorithm \ref{alg2:algorithm} goes through step \ref{alg2:else_step} before ending. 
Let $k'_i$ be the value of $k_i$ and $r'$ the value of $r$ when the algorithm enters step \ref{alg2:else_step}.  We denote \emph{current triangle} of buyer $i$ as the triangle with $k'_i$ as its left most point and \emph{previous triangle} of buyer $i$ as the triangle with $k'_i$ as its right most point.
At this stage, the algorithm allocated $m-r'$ items, i.e. $\sum_{i \in n} k'_i = m-r'$.
By step \ref{alg2:choose_step} and Lemma \ref{lem:mon-tri-slopes}, the slope of all buyers previous triangle is at least $\theta_x$, and the slope of all buyers current triangle is at most $\theta_x$. 

Consider the stage in which Algorithm \ref{alg2:algorithm} enters step \ref{alg2:else_step} and let $\theta_y$ be the slope of buyer $y$ when she is chosen. Since $k'_i$ is located at the beginning of a triangle, by Lemma \ref{lem:tri-slope}, the slope of the current triangle of buyer $i$ equals to the max-forward-slope at that point, $\overrightarrow{\Delta}_{v_{i}}(k'_{i})$.
Notice that by step \ref{alg2:y_choose}, for every  buyer $i \notin \set{x,y}$, $ \overrightarrow{\Delta}_{v_{i}}(k'_{i}) \le \overrightarrow{\Delta}_{v_{y}}(k'_{y}) = \theta_y \le \theta_x$.
By Claim \ref{clm:min-slope-in-tri} and step \ref{alg2:find_pair}, $\theta_x \le \overrightarrow{\Delta}_{v_x}(k'_{x} + l_x) \le 3 \cdot \theta_x$
and $\theta_y \le \overrightarrow{\Delta}_{v_y}(k'_{y} + l_y) \le 3 \cdot \theta_y$.
Therefore, after the algorithm ends, for every $i \neq x$ and every $j \in S_i$, $\hat{p}_j = \max_{i' \neq i} \set{\overrightarrow{\Delta}_{v_{i'}}(k_{i'})} = \overrightarrow{\Delta}_{v_x}(k_{x}) = \overrightarrow{\Delta}_{v_x}(k'_{x} + l_x) \le 3 \cdot \theta_x$, and for every $j \in S_x$, $\hat{p}_j = \max_{i' \neq x} \set{\overrightarrow{\Delta}_{v_{i'}}(k_{i'})} = \overrightarrow{\Delta}_{v_y}(k_{y}) = \overrightarrow{\Delta}_{v_y}(k'_{y} + l_y) \le 3 \cdot \theta_y \le 3 \cdot \theta_x$. Hence,

\begin{eqnarray}
\label{eqn:d6_nom}
\sum_{j \in [m]} \hat{p}_j \le m \cdot 3 \cdot \theta_x
\end{eqnarray}

By Corollary \ref{cor:min-slope} and lemma \ref{lem:intgr-lem}, the SW of the $m-r'$ items that were allocated up to step \ref{alg2:else_step} is at least $(m-r') \cdot \theta_x$.
At step \ref{alg2:find_pair} we add at least $\frac{r'}{2}$ items to buyer $x$ and since $k'_x + l_x$ are located somewhere in buyer $x$'s current triangle, with slope $\theta_x$, the added SW of the $l_x$ items is, by lemma \ref{lem:intgr-lem}, is at least $\frac{r'}{2}  \cdot \theta_x$. Hence, 
\begin{eqnarray}
\label{eqn:d6_denom}
SW(\mathbf{S}, \vals) \ge (m-r'+\frac{r'}{2})  \cdot \theta_x = (m-\frac{r'}{2})  \cdot \theta_x \ge \frac{m}{2}  \cdot \theta_x, 
\end{eqnarray}
where the last inequality is due to the fact that $r' \le m$.

Putting it all together, we get:
\begin{eqnarray*}
D(\mathbf{S}, \mathbf{\hat{p}}, \mathbf{0}) = \frac{\sum_{j \in [m]} \hat{p}_j}{SW(\mathbf{S}, \vals)} \le \frac{m \cdot 3 \cdot \theta_x}{SW(\mathbf{S}, \vals)} \le \frac{m \cdot 3 \cdot \theta_x}{\frac{m}{2}  \cdot \theta_x} = 6
\end{eqnarray*}
where the first and second inequalities are due to Inequality (\ref{eqn:d6_nom}) Inequality (\ref{eqn:d6_denom}), respectively.
\end{proof}

To conclude the proof of Theorem~\ref{thm:sa_n_q6} it remains to establish the existence of a $3-$good pair that satisfies the condition in line \ref{alg2:find_pair} of Algorithm~\ref{alg2:algorithm}.

\begin{lemma}
\label{lem:sa_n_q6_3good_pair}
For every two buyers $x,y$ in line~\ref{alg2:find_pair} of Algorithm \ref{alg2:algorithm}, there exists a $3-$good pair $(l_x, l_y)$ with respect to $r$, $k_x$, and $k_y$ such that $l_x \ge \frac{r}{2}$.
\end{lemma}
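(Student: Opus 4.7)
The plan is to mirror the proof of Lemma~\ref{lem:sa_n_q25_2good_pair} with the parameter $c=3$, and then add a refined counting step to enforce the extra constraint $l_x \ge r/2$. Let $k'_x$, $k'_y$, $r'$ denote the values of $k_x$, $k_y$, $r$ upon entering line~\ref{alg2:else_step}, so $\theta_x = \overrightarrow{\Delta}_{v_x}(k'_x)$ and $\theta_y = \overrightarrow{\Delta}_{v_y}(k'_y)$ are the slopes of the current triangles of $x$ and $y$, with $\theta_x \ge \theta_y$ by line~\ref{alg2:choose_step}. Since $k'_x$ lies at the left endpoint of its current triangle (whose length exceeds $r'$) and the SM-closure slopes are non-increasing by Lemma~\ref{lem:mon-tri-slopes}, I obtain the hypotenuse bound $v_x(k'_x + \ell) - v_x(k'_x) \le \ell\theta_x$ for every $\ell \ge 1$, and analogously for $y$.

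Following the template of Lemma~\ref{lem:sa_n_q25_2good_pair}, I would next truncate. Define
\[
v'(\ell) = \begin{cases} v_x(k'_x+\ell) - v_x(k'_x), & 0 \le \ell \le r',\\ (r'+1)\theta_x, & \ell = r'+1, \end{cases}
\]
and analogously $v''$ using $v_y,\theta_y$. Each is a monotone single-triangle function of length $r'+1$ whose last point lies on the hypotenuse, so the same geometric argument as in Lemma~\ref{lem:sa_n_q25_2good_pair} yields $\overrightarrow{\Delta}_{v'}(l_x) \ge \overrightarrow{\Delta}_{v_x}(k'_x+l_x)$ for all $0 \le l_x \le r'$, and symmetrically for $v''$. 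It therefore suffices to find nonnegative integers $l_x, l_y$ with $l_x + l_y = r'$, $l_x \ge r'/2$, $\overrightarrow{\Delta}_{v'}(l_x) \le 3\theta_x$, and $\overrightarrow{\Delta}_{v''}(l_y) \le 3\theta_y$.

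The candidate set $\{l_x \in \mathbb{Z} : \lceil r'/2\rceil \le l_x \le r'\}$ has size $\lfloor r'/2 \rfloor + 1$. Using Observation~\ref{obs:flat-LMFS} for the flat functions of $v'$ and $v''$, Theorem~\ref{thm:slv-le-lf}, and Lemma~\ref{lem:c-bad} with $c=3$, the number of $3$-bad indices in $\{0, \dots, r'\}$ for each of $v'$ and $v''$ is at most $r' - \lfloor 2(r'+1)/3 \rfloor$. A candidate pair $(l_x, l_y)$ is \emph{blocked} if $l_x$ is $3$-bad for $v'$ or $l_y$ is $3$-bad for $v''$.

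The hard part will be that the naive union bound $2(r' - \lfloor 2(r'+1)/3 \rfloor)$ on blocked pairs does not always stay strictly below $\lfloor r'/2 \rfloor + 1$, so Lemma~\ref{lem:c-bad} alone does not close the argument. I would close the gap using the caps $v'(\ell) \le \ell\theta_x$ and $v''(\ell) \le \ell\theta_y$: an index $l_y \in [0, \lfloor r'/2 \rfloor]$ cannot be $3$-bad via the ``final-jump'' witness $l' = r'+1$, since that witness requires $l_y > 2(r'+1)/3 > r'/2$. Hence any bad $l_y$ in this range must have an internal witness $l' \in (l_y, 3l_y/2)$, which already forces $l_y \ge 3$; moreover, telescoping the defining inequality $v''(l') > v''(l_y) + 3(l' - l_y)\theta_y$ against the cap $v''(l') \le l'\theta_y$ rules out clusters of consecutive bad indices in the low half (two consecutive bad indices $l, l+1$ would demand $v''(l+2) > 6\theta_y + v''(l) \ge 6\theta_y$, exceeding the cap $v''(l+2) \le (l+2)\theta_y$ in the relevant range). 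A symmetric analysis bounds the upper-half bad set of $v'$. Combining these per-half restrictions with the count from Lemma~\ref{lem:c-bad} yields $|\text{blocked pairs with } l_x \ge r'/2| < \lfloor r'/2 \rfloor + 1$, producing the desired $3$-good pair.
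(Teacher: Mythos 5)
You correctly identify that the naive union bound fails: truncating both $v'_x$ and $v'_y$ to $r'+1$ points and applying Lemma~\ref{lem:c-bad} with $c=3$ gives roughly $r'/3$ bad indices for each, totaling about $2r'/3$, which exceeds the $\lfloor r'/2\rfloor + 1$ candidate pairs. But the patch you sketch does not close the gap. Your observations (no final-jump witness can hit $l_y \le r'/2$; an internal witness at position $q$ forces $l_y < q < 3l_y/2$, hence $l_y \ge 3$) are correct, but the cluster argument is incomplete: the witness for a bad $l_y$ need not lie at $l_y+1$ or $l_y+2$, and your telescoping only rules out consecutive bad indices with small $l$ (roughly $l \le 4$). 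For larger $l_y$ up to $\lfloor r'/2\rfloor$, nothing in your argument prevents dense runs of bad indices, so you have not produced a bound on the number of blocked pairs that beats $\lfloor r'/2\rfloor + 1$. Nor do you give a concrete bound on the "upper-half bad set" of $v'_x$; you only assert that a symmetric analysis applies.

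The paper closes the gap with a different and cleaner device: it truncates \emph{asymmetrically}. Since $l_y$ ranges only over $\{0,\dots,\lfloor r'/2\rfloor\}$, the paper defines $v'_y$ on $[\lfloor r'/2\rfloor + 1]$ (ending with the hypotenuse point $(\lfloor r'/2\rfloor + 1)\theta_y$) rather than on $[r'+1]$. Applying Lemma~\ref{lem:c-bad} to this shorter function gives at most $\lfloor r'/2\rfloor - \lfloor \tfrac{2}{3}(\lfloor r'/2\rfloor + 1)\rfloor \approx r'/6$ bad $l_y$ values, which together with the $\approx r'/3$ bad $l_x$ values from $v'_x$ on $[r'+1]$ is strictly fewer than the $\lfloor r'/2\rfloor + 1$ pairs. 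Your argument never restricts the \emph{domain} of $v''$, so you are forced to distinguish "low" and "high" bad indices by hand — but Lemma~\ref{lem:c-bad} controls only the count of bad indices, not their location, which is exactly why that route runs into trouble. Shortening the domain of $v'_y$ is the missing ingredient; once you do that, Lemma~\ref{lem:c-bad} alone suffices and no case analysis on witness positions or clusters is needed.
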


\begin{proof}
Let $k'_x$, $k'_y$, $r'$ and $t'$ be the respective values of $k_x$, $k_y$, $r$ and $t$ when the algorithm enters step \ref{alg2:find_pair}. It holds that $k'_x$ and $k'_y$ are located at the beginning of a triangle, with slopes $\theta_x$ and $\theta_y$, respectively. 
Hence, we need to show that there exists a pair $(l_x, l_y)$ s.t.: 
$l_y \ge 0$,
$l_x \ge \frac{r'}{2}$, 
$l_x+l_y = r'$, 
$\overrightarrow{\Delta}_{v_x}(k_{x} + l_x) \le 3 \cdot \theta_x$, and 
$\overrightarrow{\Delta}_{v_y}(k_{y} + l_y) \le 3 \cdot \theta_y$.

Let $T^x$ be buyer $x$'s current triangle. We know that its length is $t'$ and that $t' > r'$. We would like to "truncate" $T^x$ after $r'+1$ points and consider the max-forward-slopes of the truncated triangle. 
Let $v'_x: [r'+1] \rightarrow R^+$ be the following monotone set function:
$$v'_x(k) = 
	\begin{cases}
		v_x(k + k'_x) - v_x(k'_x)        & 0 \le k \le r' \\
		(r'+1) \cdot \theta_x     & k = r'+1.
	\end{cases}$$
Notice that $v'_x$ consists of one triangle, $T^{x'}$,  with length $r'+1$ and slope $\theta_x$. Geometrically, $T^{x'}$ is identical to $T^{x}$ in its first $r'$ points, but the $(r'+1)^{th}$ point of $T^{x'}$ is higher than the $(r'+1)^{th}$ point of $T^{x}$, since every point in $T^{x}$ is located strictly below the hypotenuse. Hence, for every $0  \le k \le r'$, $\overrightarrow{\Delta}_{v'_{x}}(k) \ge \overrightarrow{\Delta}_{v_{x}}(k'_x + k)$

Consider now buyer $y$'s current triangle. It is followed by other triangles, each with slope at most $\theta_y$ (see Lemma \ref{lem:mon-tri-slopes}).
Let $v'_y: [\lfloor \frac{r'}{2} \rfloor +1] \rightarrow R^+$ be the following monotone set function:
$$v'_y(k) = 
	\begin{cases}
		v_y(k + k'_y) - v_y(k'_y)        & 0 \le k \le \lfloor \frac{r'}{2} \rfloor \\
		(\lfloor \frac{r'}{2} \rfloor+1) \cdot \theta_y     & k = \lfloor \frac{r'}{2} \rfloor+1.
	\end{cases}$$
Notice that $v'_y$ consists of one triangle with length of $\lfloor \frac{r'}{2} \rfloor+1$ and slope $\theta_y$. 
Similar to the arguments of buyer $x$, 
$\overrightarrow{\Delta}_{v'_{y}}(k) \ge \overrightarrow{\Delta}_{v_{y}}(k'_y + k)$, for every $0  \le k \le \lfloor \frac{r'}{2} \rfloor$.

Therefore, it suffices to show that there exists a pair $(l_x, l_y)$ s.t.: 
$l_y \ge 0$,
$l_x \ge \frac{r'}{2}$, 
$l_x+l_y = r'$, 
$\overrightarrow{\Delta}_{v'_x}(l_x) \le 3 \cdot \theta_x$, and 
$\overrightarrow{\Delta}_{v'_y}(l_y) \le 3 \cdot \theta_y$. Notice that $l_x$ can get the values $\lceil \frac{r'}{2} \rceil, \lceil \frac{r'}{2} \rceil + 1, \ldots, r$ and $l_y$ can get the values $\lfloor \frac{r'}{2} \rfloor, \lfloor \frac{r'}{2} \rfloor - 1, \ldots, 0$. Hence, there are exactly $\lfloor \frac{r'}{2} \rfloor + 1$ pairs that satisfy $l_y \ge 0$, $l_x \ge \frac{r'}{2}$, and $l_x+l_y = r'$.
According to Lemma \ref{lem:c-bad}, there are at most $r'+1 - \lfloor \frac{2}{3} \cdot (r'+1) \rfloor - 1 = r'- \lfloor \frac{2}{3} \cdot (r'+1) \rfloor$ elements in $v'_x$ which are $3-$bad, i.e. with max-forward-slope which is strictly more than $3 \cdot \theta_x$.
Similarly, there are at most $\lfloor \frac{r'}{2} \rfloor+1 - \lfloor \frac{2}{3} \cdot (\lfloor \frac{r'}{2} \rfloor+1) \rfloor - 1 = \lfloor \frac{r'}{2} \rfloor - \lfloor \frac{2}{3} \cdot (\lfloor \frac{r'}{2} \rfloor+1) \rfloor$ elements in $v'_y$ which are $3-$bad, i.e. with max-forward-slope which is strictly higher than $3 \cdot \theta_y \le 3 \cdot \theta_x$.
Overall there are at most $r'- \lfloor \frac{2}{3} \cdot (r'+1) \rfloor + \lfloor \frac{r'}{2} \rfloor - \lfloor \frac{2}{3} \cdot (\lfloor \frac{r'}{2} \rfloor+1) \rfloor$ $3-$bad elements in $v'_x$ and $v'_y$, and since this number is strictly less than $\lfloor \frac{r'}{2} \rfloor + 1$, which is the number of pairs, there exists at least one good pair that satisfies $l_y \ge 0$, $l_x \ge \frac{r'}{2}$, 
$l_x+l_y = r'$, 
$\overrightarrow{\Delta}_{v_{x}}(k'_x + l_x) \le \overrightarrow{\Delta}_{v'_x}(l_x) \le 3 \cdot \theta_x$, and 
$\overrightarrow{\Delta}_{v_{y}}(k'_y + l_y) \le \overrightarrow{\Delta}_{v'_y}(l_y) \le 3 \cdot \theta_y$.
\end{proof}

\section{WE in Markets with Identical Items}
\label{WEOverIdenticalItems}


In markets with identical items, one can restrict attention to WE in which all prices are equal. Indeed, if there are at least two buyers who are allocated, then it is clear. Otherwise, simply replace all prices by their average (see Lemma \ref{lem:2pe_ident_same_customer_low_prices}). A WE with a single price $p$ and allocation $\allocs$ is denoted by $(\mathbf{S},p)$.

The following theorem establishes necessary and sufficient conditions for the existence of a WE in markets with identical items. 

\begin{theorem}
\label{thm:we_over_ident}
Let $\mathbf{v} = (v_1, \ldots, v_n)$ be a symmetric valuation profile. 
Let $\tilde{v_i}$ be the SM-closure of $v_i$ for every $i \in [n]$, and let $\mathbf{\tilde{v}} = (\tilde{v_1}, \ldots, \tilde{v_n})$. 
$(\allocs, p)$ is a WE for valuation profile $\mathbf{v}$  if and only if $(\allocs, p)$ is a WE for valuation profile $\mathbf{\tilde{v}}$ and  $|S_i| \in I_{v_i}$ for every $i \in [n]$.
\end{theorem}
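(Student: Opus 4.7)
}

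My plan is to prove the two directions separately, exploiting the key geometric fact about the submodular closure: on each interval $[i_l, i_{l+1}]$ between adjacent intersection indices, $\tilde{v_i}$ is an affine function with slope $\alpha_l$, so every value $\tilde{v_i}(k)$ with $k \in [i_l, i_{l+1}]$ is a convex combination of $v_i(i_l)$ and $v_i(i_{l+1})$. Since the items are identical, I use the reduction mentioned at the start of the section to assume that the WE under $\mathbf{v}$ and under $\mathbf{\tilde v}$ use a single uniform price $p$.

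For the forward direction, assume $(\allocs, p)$ is a WE for $\mathbf{v}$. I first argue $|S_i| \in I_{v_i}$ for every $i$: otherwise $|S_i|$ lies strictly between two adjacent intersection indices $i_l < |S_i| < i_{l+1}$, and with $\lambda = \tfrac{|S_i|-i_l}{i_{l+1}-i_l}$ the utility $v_i(|S_i|) - p|S_i|$ is strictly less than
\begin{equation*}
\tilde v_i(|S_i|) - p|S_i| \;=\; (1-\lambda)\bigl(v_i(i_l)-p\,i_l\bigr) + \lambda\bigl(v_i(i_{l+1})-p\,i_{l+1}\bigr),
\end{equation*}
so at least one of the bundles of size $i_l$ or $i_{l+1}$ strictly improves buyer $i$'s utility, contradicting utility maximization under $\mathbf{v}$. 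Next, I verify that $(\allocs, p)$ is a WE for $\mathbf{\tilde v}$: for any alternative quantity $k$, if $k \in I_{v_i}$ then $\tilde v_i(k) = v_i(k)$ and the WE inequality under $\mathbf{v}$ transfers directly; if $k \notin I_{v_i}$, write $\tilde v_i(k) - pk$ as the convex combination of $v_i(i_l) - p\,i_l$ and $v_i(i_{l+1}) - p\,i_{l+1}$ at the bracketing intersection indices, each of which is at most $v_i(|S_i|) - p|S_i| = \tilde v_i(|S_i|) - p|S_i|$ by the WE condition on $\mathbf{v}$ and by $|S_i| \in I_{v_i}$.

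For the backward direction, assume $(\allocs, p)$ is a WE for $\mathbf{\tilde v}$ and $|S_i| \in I_{v_i}$ for all $i$. Then for any buyer $i$ and any quantity $k$,
\begin{equation*}
v_i(|S_i|) - p|S_i| \;=\; \tilde v_i(|S_i|) - p|S_i| \;\ge\; \tilde v_i(k) - pk \;\ge\; v_i(k) - pk,
\end{equation*}
where the first equality uses $|S_i| \in I_{v_i}$, the first inequality is utility maximization under $\mathbf{\tilde v}$, and the last inequality uses the pointwise bound $\tilde v_i \ge v_i$ that characterizes the SM-closure. Market clearance is inherited since the allocation is the same.

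The only nontrivial step is the convex-combination argument, which I expect to be the main obstacle to write cleanly — I need to handle the edge cases $i_l = 0$ and $i_{l+1} = m$ (both of which always lie in $I_{v_i}$ since $v_i(0) = \tilde v_i(0) = 0$ and $v_i(m) = \tilde v_i(m)$), and I need to phrase the argument without relying on continuous convexity, instead using that the three integer points $i_l, |S_i|, i_{l+1}$ are collinear on the graph of $\tilde v_i$ so that $|S_i| = (1-\lambda)i_l + \lambda i_{l+1}$ and $\tilde v_i(|S_i|) = (1-\lambda)v_i(i_l) + \lambda v_i(i_{l+1})$ for the same $\lambda$.
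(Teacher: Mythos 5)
Your proof is correct, and it takes a genuinely different route from the paper's. The paper proves the theorem by translating the WE $(\allocs,p)$ into the uniform-price U-2PE $(\allocs,\mathbf{p},\mathbf{p})$ and then invoking the necessary-and-sufficient slope conditions (Propositions~\ref{prop:u2pe_necess_cond} and \ref{prop:2pe_ecp_n_suff_cond}) to extract $\max_i \overrightarrow{\Delta}_{v_i}(|S_i|) \le p \le \min_i \overleftarrow{\Delta}_{v_i}(|S_i|)$; the claim $|S_i|\in I_{v_i}$ is then forced by Claim~\ref{clm:min-slope-in-tri}, which says that strictly inside a triangle the forward slope strictly exceeds the backward slope. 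Your argument bypasses the entire 2PE machinery and works directly with the WE inequality on quantities, using only the fact that $\tilde v_i$ is affine between adjacent intersection indices and pointwise dominates $v_i$. Your convex-combination argument for $|S_i|\in I_{v_i}$ and for transferring the WE inequality to non-intersection quantities $k$ is exactly the geometric content that the paper packages into the slope lemmas, but you apply it directly. The backward direction is essentially the same chain $v_i(|S_i|) - p|S_i| = \tilde v_i(|S_i|) - p|S_i| \ge \tilde v_i(k)-pk \ge v_i(k)-pk$ in both proofs, though the paper again phrases it through the slope conditions. Your proof is more self-contained and would be readable without the 2PE sections; the paper's proof has the virtue of reusing machinery it needs anyway. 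One small note: you should make the invocation of the uniform-price reduction a bit more careful — the paper's justification (averaging the low prices, Lemma~\ref{lem:2pe_ident_same_customer_low_prices}) preserves 2PE-ness, so for a WE written as a 2PE it gives a uniform-price WE with the same allocation; since both sides of your iff concern the same allocation $\allocs$, this is fine, but it's worth saying explicitly that the uniform price obtained from $\mathbf{v}$ is the same one used for $\mathbf{\tilde v}$ in the converse.
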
 

\begin{proof} [Proof of Theorem \ref{thm:we_over_ident}]

'Only if' direction: 
Assume that $(\allocs, p)$ is a WE for $\mathbf{\tilde{v}}$ and $|S_i| \in I_{v_i}$ for every $i \in [n]$. 
Let $\mathbf{p} = (p_1, \ldots, p_m)$, where $p_j = p$, for every $j \in [m]$ . Since $(\allocs, p)$ is a WE, $(\mathbf{S}, \mathbf{{p}}, \mathbf{{p}})$ is a 2PE and 
as all prices are equal, 
it is also a U-2PE for $\mathbf{\tilde{v}}$. 
By condition (\ref{um_cond_2}) of Proposition \ref{prop:u2pe_necess_cond}, we have that $p \le \min_{i \in [n]} \set{\overleftarrow{\Delta}_{\tilde{v}_i}(|S_i|)}$.
Consider condition (\ref{um_cond_3}) of Proposition \ref{prop:u2pe_necess_cond} and let $t = |T|$, $t_{i'} = |T \cap S_{i'}|$ and $t_i = k_i = |S_i| < t$, $\sum_{i' \neq i} t_{i'} = t - k_i$. 
We get that, $p \ge \frac{\tilde{v}_i(t) - \tilde{v}_i(k_i)}{t - k_i}$
for every $i \in [n]$ and every $k_i < t \le m$. That is, $p \ge \max_{i \in [n]}\set{\overrightarrow{\Delta}_{\tilde{v}_i}(|S_{i}|)}$.  
Since $|S_i| \in I_{v_i}$ for every $i \in [n]$, by Lemma \ref{lem:tri-slope} we have that $\overrightarrow{\Delta}_{\tilde{v}_i}(|S_{i}|) = \overrightarrow{\Delta}_{v_i}(|S_{i}|)$ and $\overleftarrow{\Delta}_{\tilde{v}_i}(|S_{i}|) = \overleftarrow{\Delta}_{v_i}(|S_{i}|)$. Therefore $\max_{i \in [n]}\set{\overrightarrow{\Delta}_{v_i}(|S_{i}|)} \le p \le \min_{i \in [n]} \set{\overleftarrow{\Delta}_{v_i}(|S_i|)}$.
That is, conditions (\ref{u2pe_cond_1})-(\ref{u2pe_cond_5}) of Proposition \ref{prop:2pe_ecp_n_suff_cond} are satisfied. 
Hence, $(\mathbf{S}, \mathbf{{p}}, \mathbf{{p}})$ is a U-2PE for $\mathbf{v}$, and $(\allocs, p)$ is a WE for $\mathbf{v}$.

'If' direction: 
Assume that $(\allocs, p)$ is a WE for $\vals$. 
Similar to the 'Only if' direction, we can show that by conditions (\ref{um_cond_2}) and (\ref{um_cond_3}) of Proposition \ref{prop:u2pe_necess_cond},  $\max_{i \in [n]}\set{\overrightarrow{\Delta}_{v_i}(|S_{i}|)} \le p \le \min_{i \in [n]} \set{\overleftarrow{\Delta}_{v_i}(|S_i|)}$.
Let $T_{l}$ be a triangle and let $i$ be some buyer with bundle $S_i$, s.t. $|S_i| = i_l + k'$, and assume toward a contradiction that $k' > 0$. By Claim \ref{clm:min-slope-in-tri},  $\overrightarrow{\Delta}_{v_i}(i_l + k') > \overrightarrow{\Delta}_{v_i}(i_l) = \alpha_l$ and $\overleftarrow{\Delta}_{v_i}(i_l + k') < \overleftarrow{\Delta}_{v_i}(i_{l+1}) = \alpha_l$; namely, $p > \alpha_l$ and $p < \alpha_l$, a contradiction. Therefore,  $|S_i| \in I_{v_i}$ for every $i \in [n]$. By Lemma \ref{lem:tri-slope}, $\overrightarrow{\Delta}_{\tilde{v}_i}(|S_{i}|) = \overrightarrow{\Delta}_{v_i}(|S_{i}|)$ and $\overleftarrow{\Delta}_{\tilde{v}_i}(|S_{i}|) = \overleftarrow{\Delta}_{v_i}(|S_{i}|)$. 
Thus, $\max_{i \in [n]}\set{\overrightarrow{\Delta}_{\tilde{v}_i}(|S_{i}|)} \le p \le \min_{i \in [n]} \set{\overleftarrow{\Delta}_{\tilde{v}_i}(|S_i|)}$, 
i.e. conditions (\ref{u2pe_cond_1})-(\ref{u2pe_cond_5}) of Proposition \ref{prop:2pe_ecp_n_suff_cond} are satisfied. 
Hence, $(\mathbf{S}, \mathbf{{p}}, \mathbf{{p}})$ is a U-2PE for $\mathbf{\tilde{v}}$, and $(\allocs, p)$ is a WE for $\mathbf{\tilde{v}}$ and $|S_i| \in I_{v_i}$ for every $i \in [n]$.
\end{proof}

\section{Endowment Equilibrium}
\label{sec:EE}

In this section we present the notion of endowment equilibrium \cite{BDO18,EFF19}, and show its relation to 2PE. 
Discovered by Nobel laureate Richard Thaler, the endowment effect states that buyers tend to inflate the value of items they already own. 

This motivates the introduction of an {\em endowed valuation}, $v^{X}(\cdot):[m] \rightarrow \reals^+$, which assigns some real value to every bundle $Y \subseteq [m]$, given an endowment of $X \subseteq [m]$. The inflated value of items in the endowed set $X$ is captured by a gain function $g^{X}(\cdot)$, which maps every bundle $Y$ to some real value. 
Specifically, the endowed valuation, given an endowment of $X$, is assumed to take the form
$$
v^{X}(Y) = v(Y) + g^X(X \cap Y).
$$
For every buyer $i$ and every bundle $X$, $g_{i}^{X}$ denotes the gain function of $i$ with respect to endowment $X$.
Let $g_i = \{g_i^X\}_{X \subseteq [m]}$ denote the gain functions of buyer $i$, and let $g = \{g_i\}_{i \in [n]}$ denote the set of gain functions of all buyers.







An endowment equilibrium is then a Walrasian equilibrium with respect to the endowed valuations. A formal definition follows.

\begin{definition} [\textbf{Endowment equilibrium (EE) \cite{BDO18}, \cite{EFF19}}]
\label{def:ee}
A pair  $(\mathbf{S}, \mathbf{p})$ of an allocation $\allocs = (\alloci{1}, \ldots, \alloci{n})$ and item prices $\mathbf{p} = (p_1, \ldots, p_m)$, is called an endowment equilibrium with respect to $g$ if it is a WE with respect to the endowed valuations $(v_1^{S_1}, \ldots, v_n^{S_n})$, i.e:
\begin{itemize}
	\item [1.] \textbf{Utility maximization}: Every buyer receives an allocation that maximizes her utility given the item prices, i.e., $v_i^{S_i}(S_i) - \sum_{j \in S_i}p_j \geq v_i^{S_i}(T) - \sum_{j \in T}p_j$  for every $i \in [n]$ and bundle $T \subseteq [m]$.
	\item [2.] \textbf{Market clearance}: All items are allocated.
\end{itemize}
\end{definition}

\subsection{Relation between endowment equilibrium and 2PE}

In this section we show how to transform an EE into a 2PE and vice versa. 
That is, we show how high the endowment effect should be in order to compensate for a given price difference of a 2PE, and how low a price vector needs to be to make all buyers happy with their bundles in the absence of the corresponding endowment effects.

\begin{proposition}
\label{prop:2pe-ee}
Consider a valuation profile $\vals$. Given a 2PE $(\mathbf{S},\mathbf{\hat{p}},\mathbf{\check{p}})$, let $\Delta p_j = \hat{p}_j - \check{p}_j$ for every item $j \in [m]$. Then, for every endowed valuation profile $\vals^{\mathbf{S}} = (v_1^{S_1}, \ldots, v_n^{S_n})$  s.t. $g^{S_i}(S_i \backslash T \mid S_i \cap T) \ge \sum_{j \in S_i \backslash T} \Delta p_j$ for every $i \in [n]$ and $T \in [m]$, $(\mathbf{S},\mathbf{\hat{p}})$ is an EE with respect to $g$.
\end{proposition}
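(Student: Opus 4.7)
The plan is to verify the two conditions of endowment equilibrium for the tuple $(\mathbf{S}, \mathbf{\hat{p}})$ with respect to the endowed valuation profile $\vals^{\mathbf{S}}$. Market clearance is immediate, since it is already satisfied by the 2PE $(\mathbf{S}, \mathbf{\hat{p}}, \mathbf{\check{p}})$ and the allocation is unchanged. Therefore the work is concentrated entirely in the utility maximization condition, which has to be established from the 2PE utility maximization condition plus the hypothesis on the gain functions.

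The first step is to unfold the definition of $v_i^{S_i}$ on both sides of the target inequality. By definition $v_i^{S_i}(S_i) = v_i(S_i) + g_i^{S_i}(S_i)$ and $v_i^{S_i}(T) = v_i(T) + g_i^{S_i}(S_i \cap T)$, so the WE inequality for endowed valuations,
\[
v_i^{S_i}(S_i) - \sum_{j \in S_i}\hat{p}_j \;\geq\; v_i^{S_i}(T) - \sum_{j \in T}\hat{p}_j,
\]
is equivalent (after cancelling $\hat{p}_j$ on items in $S_i \cap T$ from both sides) to
\[
v_i(S_i) - v_i(T) + \bigl(g_i^{S_i}(S_i) - g_i^{S_i}(S_i \cap T)\bigr) \;\geq\; \sum_{j \in S_i \setminus T}\hat{p}_j - \sum_{j \in T \setminus S_i}\hat{p}_j .
\]
The parenthesized term is exactly the marginal gain $g_i^{S_i}(S_i \setminus T \mid S_i \cap T)$, matching the notation in the hypothesis.

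The second step is to apply the 2PE condition. From Definition~\ref{def:2pe} we have
\[
v_i(S_i) - v_i(T) \;\geq\; \sum_{j \in S_i \setminus T}\check{p}_j - \sum_{j \in T \setminus S_i}\hat{p}_j,
\]
so it suffices to show
\[
g_i^{S_i}(S_i \setminus T \mid S_i \cap T) \;\geq\; \sum_{j \in S_i \setminus T}\hat{p}_j - \sum_{j \in S_i \setminus T}\check{p}_j \;=\; \sum_{j \in S_i \setminus T}\Delta p_j ,
\]
which is precisely the standing hypothesis on $g$. Chaining the two inequalities yields the desired utility maximization inequality for every $i$ and every $T$, completing the proof.

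I do not expect any serious obstacle here: the statement is essentially a bookkeeping observation that the per-item price gap $\Delta p_j$ on items owned by $i$ but not in the alternative bundle $T$ is exactly the quantity that the endowment gain must cover in order to upgrade a 2PE into a WE at the high prices. The only subtle point is making sure the cancellations of $\hat{p}_j$ and of the gain terms on $S_i \cap T$ are handled correctly, and that the marginal notation $g_i^{S_i}(\cdot \mid \cdot)$ is interpreted as a difference of $g_i^{S_i}$ values as defined.
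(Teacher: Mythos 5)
Your proof is correct and takes essentially the same approach as the paper: both rewrite the endowment-equilibrium utility-maximization inequality in terms of the 2PE utility-maximization inequality, identifying the gap as $\sum_{j \in S_i \setminus T}\Delta p_j$, which the hypothesis on $g$ covers. The only cosmetic difference is that the paper introduces explicit notation $u^{EE}_i$ and $u^{2PE}_i$ for the two utilities before doing the same cancellation and chaining.
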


\begin{proof}
    Let $\vals^{\mathbf{S}} = (v_1^{S_1}, \ldots, v_n^{S_n})$ be an endowed valuation profile with corresponding gain functions $g_1^{S_1},\ldots,g_n^{S_n}$. Let $u^{2PE}_i$ denote buyer $i$'s utility under the 2PE $(\mathbf{S},\mathbf{\hat{p}},\mathbf{\check{p}})$; that is, 
    $$u^{2PE}_i(T) = v_i(T) - \sum_{j \in S_i \cap T} \check{p}_j - \sum_{j \in T \backslash S_i} \hat{p}_j.$$
    Let $u^{EE}_i$ denote buyer $i$'s utility in $(\mathbf{S},\mathbf{\hat{p}})$ under endowed valuations $\vals^{\mathbf{S}}$; i.e., 
    $$u^{EE}_i (T) = v_i(T) + g^{S_i}(T \cap S_i) - \sum_{j \in T} \hat{p}_j.$$ 
    
    
    By rearranging we get $$u^{EE}_i (S_i) = u^{2PE}_i (S_i) - \sum_{j \in S_i} \Delta p_j + g^{S_i}(S_i)$$
    and $$u^{EE}_i (T) = u^{2PE}_i (T) - \sum_{j \in S_i \cap T} \Delta p_j + g^{S_i}(S_i \cap T).$$
    Therefore, $u^{EE}_i (S_i) - u^{EE}_i (T) \ge 0$ if and only if $g^{S_i}(S_i \backslash T \mid S_i \cap T)
\ge \sum_{j \in S_i \backslash T} \Delta p_j - \left[ u^{2PE}_i (S_i) - u^{2PE}_i (T) \right]$. 
As $(\mathbf{S},\mathbf{\hat{p}},\mathbf{\check{p}})$ is a 2PE, $u^{2PE}_i (S_i) - u^{2PE}_i (T) \ge 0$ for every $T \in [m]$. Hence, if $g^{S_i}(S_i \backslash T \mid S_i \cap T)
\ge \sum_{j \in S_i \backslash T} \Delta p_j$, then $(\mathbf{S},\mathbf{\hat{p}})$ is an EE as desired.
\end{proof}

\begin{proposition}
\label{prop:ee-2pe}
Consider a valuation profile $\vals$ and gain functions $g$. Let $(\mathbf{S},\mathbf{\hat{p}})$ be an EE with respect to $g$. Then, for every low price vector $\lowprice$ s.t. 
\begin{eqnarray}
\label{eqn:ee-2pe-eq}
\sum_{j \in S_i \backslash T} \check{p}_j \le max \set{0, \sum_{j \in S_i \backslash T} \hat{p}_j - g^{S_i}(S_i \backslash T \mid S_i \cap T)}
\end{eqnarray}
for every $T \in [m]$, $(\mathbf{S},\mathbf{\hat{p}},\mathbf{\check{p}})$ is a 2PE for $\vals$.
In particular, $(\mathbf{S},\mathbf{\hat{p}},\mathbf{0})$ is a 2PE for $\vals$.
\end{proposition}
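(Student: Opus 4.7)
The plan is to derive the 2PE utility maximization inequality for each buyer $i$ and each bundle $T$ from two consequences of the EE utility maximization condition, and then split into two cases based on the sign of $\sum_{j \in S_i \setminus T} \hat{p}_j - g^{S_i}(S_i \setminus T \mid S_i \cap T)$, which mirrors the $\max$ in the hypothesis.

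First, I would take the EE condition $v_i(S_i) + g^{S_i}(S_i) - \sum_{j \in S_i} \hat{p}_j \ge v_i(T) + g^{S_i}(S_i \cap T) - \sum_{j \in T} \hat{p}_j$, cancel the overlap $S_i \cap T$ on both price sums, and use $g^{S_i}(S_i) - g^{S_i}(S_i \cap T) = g^{S_i}(S_i \setminus T \mid S_i \cap T)$ to rewrite this as
\begin{equation*}
v_i(S_i) - v_i(T) \;\ge\; \sum_{j \in S_i \setminus T} \hat{p}_j - \sum_{j \in T \setminus S_i} \hat{p}_j - g^{S_i}(S_i \setminus T \mid S_i \cap T). \tag{$\star$}
\end{equation*}
Next, I would apply the EE condition with the alternative bundle $T \cup S_i$. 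Since $(T \cup S_i) \cap S_i = S_i$, the gain term cancels, and monotonicity of $v_i$ lets us replace $v_i(T \cup S_i)$ by $v_i(T)$, yielding
\begin{equation*}
v_i(S_i) \;\ge\; v_i(T) - \sum_{j \in T \setminus S_i} \hat{p}_j. \tag{$\star\star$}
\end{equation*}

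With $(\star)$ and $(\star\star)$ in hand, I would consider two cases. In Case A, where $\sum_{j \in S_i \setminus T} \hat{p}_j - g^{S_i}(S_i \setminus T \mid S_i \cap T) \ge 0$, the hypothesis (\ref{eqn:ee-2pe-eq}) gives $\sum_{j \in S_i \setminus T} \check{p}_j \le \sum_{j \in S_i \setminus T} \hat{p}_j - g^{S_i}(S_i \setminus T \mid S_i \cap T)$, and substituting this into $(\star)$ and rearranging yields exactly the 2PE utility maximization inequality $v_i(S_i) - \sum_{j \in S_i \setminus T} \check{p}_j \ge v_i(T) - \sum_{j \in T \setminus S_i} \hat{p}_j$. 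In Case B, where the quantity is negative, the $\max$ equals $0$, so the hypothesis combined with nonnegativity of prices forces $\check{p}_j = 0$ for every $j \in S_i \setminus T$, and then $(\star\star)$ is exactly the 2PE inequality. Market clearance is inherited verbatim from the EE, and $\hat{p}_j \ge \check{p}_j \ge 0$ follows from the hypothesis applied to $T = S_i \setminus \{j\}$, under the standard assumption $g^{S_i} \ge 0$. The ``in particular'' part is then trivial: $\check{p} = \mathbf{0}$ satisfies the hypothesis regardless of the value of the right-hand side.

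The main obstacle is not any computation but rather recognizing why the $\max$ with $0$ in the hypothesis is unavoidable: when the endowment effect is very strong (the gain exceeds the high prices on $S_i \setminus T$), the EE condition applied directly to $T$ is too weak to recover the 2PE inequality, so the argument must instead route through $(\star\star)$, which requires the strictly larger comparison bundle $T \cup S_i$ and leverages the monotonicity of $v_i$ rather than any structural property of $g^{S_i}$.
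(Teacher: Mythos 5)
Your proof is correct and takes essentially the same approach as the paper: Case A comes from the EE inequality applied directly to $T$, and Case B comes from the EE inequality applied to $T\cup S_i$ combined with monotonicity of $v_i$. The only difference is cosmetic --- the paper routes Case A through an explicit rearrangement identity between $u^{2PE}_i$ and $u^{EE}_i$ differences, and states Case B as proving $(\mathbf{S},\mathbf{\hat{p}},\mathbf{0})$ is a 2PE, whereas you argue Case B for the given $\check{p}$ directly by observing the hypothesis forces $\check{p}_j=0$ on $S_i\setminus T$, which is slightly more careful.
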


\begin{proof}
    Let  $\mathbf{\check{p}}$ be some price vector satisfying equation (\ref{eqn:ee-2pe-eq}) for every $T \in [m]$. Let $u^{2PE}_i$ and $u^{EE}_i$ be as in the proof of proposition \ref{prop:2pe-ee}. By rearrangement we get $$u^{2PE}_i (S_i) - u^{2PE}_i (T) = u^{EE}_i (S_i) - u^{EE}_i (T) + 
\sum_{j \in S_i \backslash T} \Delta p_j -
g^{S_i}(S_i \backslash T \mid S_i \cap T).$$
Therefore, $(\mathbf{S},\mathbf{\hat{p}}, \mathbf{\check{p}})$ is a 2PE if $\sum_{j \in S_i \backslash T} \check{p}_j \le \sum_{j \in S_i \backslash T} \hat{p}_j -
g^{S_i}(S_i \backslash T \mid S_i \cap T) + \left[ u^{EE}_i (S_i) - u^{EE}_i (T) \right]$.
As $(\mathbf{S},\mathbf{\hat{p}})$ is an EE w.r.t. $g$, $u^{EE}_i (S_i) - u^{EE}_i (T) \ge 0$ for every $T \in [m]$. Hence, it suffices to require that $\sum_{j \in S_i \backslash T} \check{p}_j \le \sum_{j \in S_i \backslash T} \hat{p}_j -
g^{S_i}(S_i \backslash T \mid S_i \cap T)$ for every $T \in [m]$.

We next show that in those cases where the right-hand side of the last inequality is negative, then $(\mathbf{S},\mathbf{\hat{p}}, \mathbf{0})$ is a 2PE.


To see this, note first that for every buyer $i$ and every set $T \in [m]$ $$u^{2PE}_i(T) = v_i(T) - \sum_{j \in S_i \cap T} \check{p}_j - \sum_{j \in T \backslash S_i} \hat{p}_j = v_i(T) - \sum_{j \in T \backslash S_i} \hat{p}_j \le v_i(T \cup S_i) - \sum_{j \in T \backslash S_i} \hat{p}_j = u^{2PE}_i(S_i \cup T),$$ where the inequality follows from monotonicity. Second, since $(\mathbf{S},\mathbf{\hat{p}})$ is an EE, for every buyer $i$ and every set $T \in [m]$, $v_i(S_i) + g^{S_i}(S_i) - \sum_{j \in S_i} \hat{p}_j \ge v_i(S_i \cup T) + g^{S_i}(S_i) - \sum_{j \in S_i \cup T} \hat{p}_j$, i.e.,  $v_i(S_i) \ge v_i(S_i \cup T) - \sum_{j \in T \setminus S_i} \hat{p}_j$.
Therefore, $u^{2PE}_i(S_i) \ge u^{2PE}_i(S_i \cup T)$, and we get $u^{2PE}_i(S_i) \ge u^{2PE}_i(T)$, as desired.
\end{proof}

\subsection{A Weaker Gain Function for XOS Valuations}

\label{sec:XOS_endow}

\citet{EFF19} define a partial order over gain functions: Fix a valuation function $v$ and two gain functions, $g$ and $\hat{g}$. Given an endowed set $X$, we write that $g^X \prec \hat{g}^X$ if for all $Z \subseteq X$, $g^X(Z \mid X \setminus Z) \le \hat{g}^X(Z \mid X \setminus Z)$. We write $g \prec \hat{g}$ if $g^X \prec \hat{g}^X$ for every $X \subseteq [m]$, and we say that $g$ is \emph{weaker} than $\hat{g}$.

The Identity (ID) gain function is defined as
$g^X_{ID}(Z) = v(Z)$
\cite{BDO18}.
The Absolute Loss (AL) gain function is defined as $g^X_{AL}(Z) = v(X) - v(X \setminus Z)$ \cite{EFF19}.
\citet{EFF19} show that every market with XOS buyers admits an EE with respect to the AL gain function.

In this Section we introduce a new gain function for markets with XOS buyers, which we call \emph{Supporting Prices} (SP).
We show that for such markets, the SP gain function is weaker than AL, but stronger than ID.

Let us first recall the definition of supporting prices.

\begin{definition} [\textbf{Supporting Prices \cite{DNS10}}]
\label{def:sp}
Given a valuation $v$ and a set $X \in [m]$, the prices $\{p_j\}_{j\in X}$ are supporting prices w.r.t. $v$ and $X$ if $v(X) = \sum_{j \in X} p_j$ and for every $Z \subseteq X$, $v(Z) \ge \sum_{j \in Z}p_j$.
\end{definition}

Let $v$ be an XOS valuation function, and let $\set{p^X_j}_{j \in X}$ be the supporting prices with respect to the function $v$ and the set $X \subseteq [m]$.
The SP gain function is given by $g^X_{SP}(Z) = \sum_{j \in Z} p^X_j$, for every set $Z \subseteq X$.



\begin{proposition}
For any XOS valuation, the Supporting Prices gain function is weaker than the AL gain function and stronger than the ID gain function, i.e., $g_{ID} \prec g_{SP} \prec g_{AL}$.
\end{proposition}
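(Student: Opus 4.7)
The plan is to unfold what the partial order $\prec$ actually says and then verify it by direct computation, using only the two defining properties of supporting prices. For a gain function $g^X$ and $Z \subseteq X$, the marginal $g^X(Z \mid X \setminus Z)$ equals $g^X(X) - g^X(X \setminus Z)$. So I would begin by computing these marginals explicitly for each of the three gain functions: for ID, the marginal becomes $v(X) - v(X \setminus Z)$; for AL, it telescopes to $v(Z)$ (since $g^X_{AL}(X) = v(X)$ and $g^X_{AL}(X \setminus Z) = v(X) - v(Z)$); and for SP, it collapses to $\sum_{j \in Z} p^X_j$.

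Once these three expressions are in hand, both inequalities fall out of the two clauses of Definition~\ref{def:sp}. For $g_{SP} \prec g_{AL}$ I would invoke the ``ceiling'' clause: supporting prices satisfy $\sum_{j \in Z} p^X_j \le v(Z)$ for every $Z \subseteq X$, which is precisely the required inequality. For $g_{ID} \prec g_{SP}$ I would combine the ``exact at $X$'' clause $\sum_{j \in X} p^X_j = v(X)$ with the ceiling clause applied to $X \setminus Z$, giving
\begin{equation*}
\sum_{j \in Z} p^X_j \;=\; v(X) - \sum_{j \in X \setminus Z} p^X_j \;\ge\; v(X) - v(X \setminus Z),
\end{equation*}
which is the required bound. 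Since these hold for every $X \subseteq [m]$ and every $Z \subseteq X$, the relations $g_{ID} \prec g_{SP}$ and $g_{SP} \prec g_{AL}$ follow from the definition of $\prec$.

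There is no real obstacle here: the proposition reduces to two one-line consequences of the definition of supporting prices, and the only subtle step is recognizing that the conditional notation $g^X(Z \mid X \setminus Z)$ must be expanded before the comparison becomes transparent. The existence of supporting prices for XOS valuations (which is where the XOS assumption is used) is the classical fact of \citet{DNS10} and is invoked only implicitly, in that $g_{SP}$ is well defined exactly for XOS $v$.
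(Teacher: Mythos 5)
Your proposal is correct and follows essentially the same route as the paper: expand each conditional marginal $g^X(\cdot\mid\cdot)$ into the difference form, evaluate it explicitly for ID, SP, and AL, and then derive the two inequalities from the two clauses of the supporting-prices definition (exactness on $X$, and the per-subset upper bound). The paper's proof performs the identical computations, only written out as chains of $\ge$/$\le$ steps.
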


\begin{proof}


We first show that $g_{ID} \prec g_{SP}$.
For every $X \subseteq [m]$ and $Z \subseteq X$,
\begin{eqnarray*}
g^X_{SP}(Z \mid X \setminus Z) 
& = &
g^X_{SP}(X) - g^X_{SP}(X \setminus Z) \\
& = &
\sum_{j \in X} p^X_j - \sum_{j \in X \setminus Z} p^X_j \\
& = &
v(X) - \sum_{j \in X \setminus Z} p^X_j\\
& \ge &
v(X) - v(X \setminus Z) \\
& = &
g^X_{ID}(X) - g^X_{I}(X \setminus Z) \\
& = &
g^X_{ID}(Z \mid X \setminus Z),
\end{eqnarray*}
where the inequality follows from the definition of supporting prices.
 
We now show that $g_{SP} \prec g_{AL}$. 
For every $X \subseteq [m]$ and $Z \subseteq X$,
\begin{eqnarray*}
g^X_{SP}(Z \mid X \setminus Z) 
& = &
g^X_{SP}(X) - g^X_{SP}(X \setminus Z) \\
& = &
\sum_{j \in X} p^X_j - \sum_{j \in X \setminus Z} p^X_j \\
& = &
\sum_{j \in Z} p^X_j \\
& \le &
v(Z),
\end{eqnarray*}
where the inequality follows from the definition of supporting prices.
On the other hand,
\begin{eqnarray*}
g^X_{AL}(Z \mid X \setminus Z) 
& = &
g^X_{AL}(X) - g^X_{AL}(X \setminus Z) \\
& = &
v(X) -(v(X)-v(X \setminus (X \setminus Z))) \\
& = &
v(Z).
\end{eqnarray*}
Therefore, $g^X_{SP}(Z \mid X \setminus Z) \le g^X_{AL}(Z \mid X \setminus Z)$ for every $X \subseteq [m]$ and every $Z \subseteq X$, i.e., $g_{SP} \prec g_{AL}$.
\end{proof}

We next show that every market with XOS buyers admits an endowment equilibrium with SP gain functions.

\begin{claim}
Consider an XOS valuation profile $\vals$. Then,
\begin{enumerate}
\item
There exists a 2PE, $(\mathbf{S}, \mathbf{\hat{p}}, \mathbf{0})$, where for every $j \in S_i$, $\hat{p}_j = p^{S_i}_j$.
\item
$(\mathbf{S},\mathbf{\hat{p}})$ is an EE with SP gain functions.
\end{enumerate}
\end{claim}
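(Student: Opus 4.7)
The plan is to take $\mathbf{S}$ to be a welfare-maximizing allocation and set $\hat{p}_j = p^{S_i}_j$ for every $j \in S_i$, where $\{p^{S_i}_j\}_{j \in S_i}$ are the XOS supporting prices w.r.t. $v_i$ and $S_i$. The two parts of the claim will then follow from (a) showing that $(\mathbf{S},\mathbf{\hat{p}})$ is a conditional equilibrium, and (b) invoking the transformation proposition between 2PE and EE.

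For part 1, I would first observe individual rationality: by Definition~\ref{def:sp}, $v_i(S_i) = \sum_{j \in S_i} p^{S_i}_j = \sum_{j \in S_i}\hat{p}_j$, so $v_i(S_i) - \sum_{j \in S_i}\hat{p}_j = 0$. The main step is to establish outward stability via a standard swap argument. Suppose for contradiction that for some buyer $i$ and some $T' \subseteq [m]\setminus S_i$, $v_i(S_i \cup T') - v_i(S_i) > \sum_{j \in T'} \hat{p}_j$. Consider the reallocation in which buyer $i$ receives $S_i \cup T'$ and every other buyer $i'$ receives $S_{i'}\setminus T'$. Using the supporting-prices property $v_{i'}(S_{i'}\setminus T') \ge \sum_{j \in S_{i'}\setminus T'} p^{S_{i'}}_j = v_{i'}(S_{i'}) - \sum_{j \in T'\cap S_{i'}} \hat{p}_j$ for every $i' \neq i$, summing gives a total welfare strictly exceeding $\sum_{i'} v_{i'}(S_{i'})=\mathrm{OPT}$, contradicting welfare-maximality of $\mathbf{S}$. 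Together with market clearance (all items are in the optimal partition, or can be assigned arbitrarily), this yields that $(\mathbf{S},\mathbf{\hat{p}})$ is a CE; by Proposition~\ref{prop:2pe-ce}, $(\mathbf{S},\mathbf{\hat{p}},\mathbf{0})$ is a 2PE.

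For part 2, I would invoke Proposition~\ref{prop:2pe-ee} with $\check{p}_j = 0$, so $\Delta p_j = \hat{p}_j = p^{S_i}_j$. Computing the SP gain directly,
\begin{equation*}
g_{SP}^{S_i}(S_i\setminus T \mid S_i \cap T) = g_{SP}^{S_i}(S_i) - g_{SP}^{S_i}(S_i \cap T) = \sum_{j \in S_i} p^{S_i}_j - \sum_{j \in S_i \cap T} p^{S_i}_j = \sum_{j \in S_i \setminus T} p^{S_i}_j = \sum_{j \in S_i \setminus T} \Delta p_j,
\end{equation*}
so the hypothesis of Proposition~\ref{prop:2pe-ee} holds with equality for every $i$ and every $T \subseteq [m]$. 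Therefore $(\mathbf{S},\mathbf{\hat{p}})$ is an EE with respect to $g_{SP}$.

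The main obstacle is the outward-stability argument in part 1: one must carefully use the defining equality $v_{i'}(S_{i'}) = \sum_{j \in S_{i'}} p^{S_{i'}}_j$ \emph{only} for the buyer's own bundle, and the supporting-prices inequality $v_{i'}(Z) \ge \sum_{j \in Z} p^{S_{i'}}_j$ for the shrunk bundle $Z = S_{i'}\setminus T'$. Once this bookkeeping is done, everything else is mechanical: part 2 reduces to a one-line application of the already-proved EE/2PE correspondence.
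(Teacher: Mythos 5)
Your proof is correct, but it takes a genuinely different route for part~1 than the paper. The paper obtains the 2PE by invoking the result of \citet{CKS08} that the bid profile $b_{ij}=p^{S_i}_j$ (for $j\in S_i$, else $0$) is a PNE of the S2PA, and then applying Proposition~\ref{prop:2pe-s2pa_pne} to translate that PNE into the triplet $(\mathbf{S},\mathbf{\hat{p}},\mathbf{0})$. You instead prove from scratch, via the standard swap argument, that $(\mathbf{S},\mathbf{\hat{p}})$ is a conditional equilibrium for a welfare-maximizing $\mathbf{S}$, and then invoke Proposition~\ref{prop:2pe-ce} to convert it into the same 2PE. The two paths produce the same object; yours is more self-contained (it does not rely on the \citet{CKS08} black box or on the 2PE--S2PA correspondence), while the paper's has the expository advantage of exercising the S2PA connection developed in Section~\ref{sec:2pe_s2pa}. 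Your swap argument is sound: assuming $v_i(S_i\cup T')-v_i(S_i)>\sum_{j\in T'}\hat p_j$ for $T'\subseteq[m]\setminus S_i$, reallocating to $(S_i\cup T',\,\{S_{i'}\setminus T'\}_{i'\ne i})$ and using $v_{i'}(S_{i'}\setminus T')\ge \sum_{j\in S_{i'}\setminus T'}p^{S_{i'}}_j$ (supporting-price inequality) together with $\sum_{j\in S_{i'}}p^{S_{i'}}_j=v_{i'}(S_{i'})$ (equality only for the full bundle) and the fact that the $T'\cap S_{i'}$ partition $T'$, indeed yields welfare strictly above OPT, a contradiction. Part~2 of your argument is essentially identical to the paper's: both apply Proposition~\ref{prop:2pe-ee} with $\check p_j=0$, so $\Delta p_j=\hat p_j$, and observe that $g^{S_i}_{SP}(S_i\setminus T\mid S_i\cap T)=\sum_{j\in S_i\setminus T}p^{S_i}_j=\sum_{j\in S_i\setminus T}\Delta p_j$, so the hypothesis holds with equality.
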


\begin{proof}
\citet{CKS08} showed that in a market with XOS buyers, there always exists a S2PA PNE bid profile, where $b_{ij}$ are the supporting prices w.r.t. $v_i$ and $S_i$ if $j \in S_i$, and $0$ otherwise. 
According to Proposition \ref{prop:2pe-s2pa_pne}, the triplet $(\mathbf{S}, \mathbf{\hat{p}}, \mathbf{0})$ is a 2PE, where for every $j \in S_i$, $\hat{p}_j = p^{S_i}_j$.

By Proposition \ref{prop:2pe-ee}, if $g^{S_i}_{SP}(S_i \backslash T \mid S_i \cap T) \ge \sum_{j \in S_i \backslash T} \Delta p_j$ for every buyer $i$ and every $T \in [m]$, then $(\mathbf{S},\mathbf{\hat{p}})$ is an EE. Notice that, $g^{S_i}_{SP}(S_i \backslash T \mid S_i \cap T) = \sum_{j \in S_i \backslash T} p^{S_i}_j = \sum_{j \in S_i \backslash T} \hat{p}_j = \sum_{j \in S_i \backslash T} \Delta p_j$, where the first equality follows from the definition of the SP gain function, the second equality is given and the third equality is due to the fact that $\check{p}_j = 0$ for every $j \in [m]$. 
Therefore, $(\mathbf{S},\mathbf{\hat{p}})$ is an EE.
\end{proof}

\bibliographystyle{plainnat}
\bibliography{BibFile}

\appendix
\section{Valuation classes}
\label{sec:hetro_val}
Definitions for the valuation classes over heterogeneous items:
\begin{itemize}
	\item Unit demand: there exist $m$ values $v^1,\ldots,v^m$, s.t. $v(S) = max_{j \in S}\{v^j\}$
	\item Additive: there exist $m$ values $v^1,\ldots,v^m$, s.t. $v(S) = \sum_{j \in S}\{v^j\}$
	\item Submodular: for any $S,T \subseteq [m]$ it holds that $v(S) + v(T) \geq v(S \cup T) + v(S \cap T)$
	\item XOS: there exist vectors $v_1, \ldots, v_k \in \reals^{m}$ s.t. for any $S \subseteq [m]$ it holds that $v(S) = max_{i \in [k]}\sum_{j \in S}v_i(j)$
	\item Subadditive: for any $S,T \subseteq [m]$ it holds that $v(S) + v(T) \geq v(S\cup T)$
\end{itemize}

\section{An upper bound on the discrepancy of the optimal allocation}
\label{sec:d_upper_m}
In this section we show that for any valuation profile, every welfare-maximizing allocation can be supported in a 2PE with a discrepancy of at most $m$. The proof of Proposition \ref{prop:upper_bound} is similar to the proof of Proposition 7.1 in \cite{EFF19}.

\begin{proposition}
\label{prop:upper_bound}
Consider valuation profile $\vals$. Let $\mathbf{S}^*$ be an optimal allocation and $\hat{p}_j = v_{i(j)}(S^*_{i(j)})$ where $i(j)$ is the buyer who gets item $j$ in $\mathbf{S}^*$. Then, $(\mathbf{S}^*, \mathbf{\hat{p}}, \mathbf{0})$ is a 2PE and $D(\mathbf{S}^*) \le m$.
\end{proposition}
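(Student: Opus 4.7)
The plan has two parts: verify that $(\mathbf{S}^*,\mathbf{\hat{p}},\mathbf{0})$ is a 2PE, and then bound its discrepancy.

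For the 2PE verification, market clearance is immediate since $\mathbf{S}^*$ is a partition of $[m]$. For utility maximization, after substituting $\mathbf{\check{p}}=\mathbf{0}$ into Inequality~(\ref{eqn:2pe}), I need to show that for every buyer $i$ and every $T\subseteq[m]$,
\[
v_i(S^*_i)\;\geq\;v_i(T)-\sum_{j\in T\setminus S^*_i}\hat{p}_j\;=\;v_i(T)-\sum_{j\in T\setminus S^*_i}v_{i(j)}(S^*_{i(j)}).
\]
The key step is a swap argument using optimality of $\mathbf{S}^*$. Consider the alternative allocation $\mathbf{S}'$ where buyer $i$ is given $S^*_i\cup T$ and every other buyer $k\neq i$ is given $S^*_k\setminus T$ (this is a valid partition). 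Since $\mathbf{S}^*$ is welfare-maximal,
\[
v_i(S^*_i)+\sum_{k\neq i}v_k(S^*_k)\;\geq\;v_i(S^*_i\cup T)+\sum_{k\neq i}v_k(S^*_k\setminus T),
\]
which rearranges (using monotonicity and $v_k(S^*_k\setminus T)\geq 0$) to
\[
v_i(S^*_i\cup T)-v_i(S^*_i)\;\leq\;\sum_{k\neq i:\,S^*_k\cap T\neq\emptyset} v_k(S^*_k).
\]
The right-hand side is bounded above by $\sum_{j\in T\setminus S^*_i} v_{i(j)}(S^*_{i(j)})$, since each buyer $k\neq i$ with $S^*_k\cap T\neq\emptyset$ contributes $|S^*_k\cap T|\ge 1$ summands each equal to $v_k(S^*_k)$. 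Combined with monotonicity $v_i(T)\leq v_i(S^*_i\cup T)$, this yields exactly the required inequality.

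For the discrepancy bound, I compute directly:
\[
D(\mathbf{S}^*,\mathbf{\hat{p}},\mathbf{0})
\;=\;\frac{\sum_{j\in[m]}\hat{p}_j}{SW(\mathbf{S}^*,\vals)}
\;=\;\frac{\sum_{i\in[n]}|S^*_i|\cdot v_i(S^*_i)}{\sum_{i\in[n]}v_i(S^*_i)}
\;\leq\;m,
\]
since each $|S^*_i|\leq m$. This gives $D(\mathbf{S}^*)\leq D(\mathbf{S}^*,\mathbf{\hat{p}},\mathbf{0})\leq m$.

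The main (and only nontrivial) obstacle is the utility-maximization step, where the right swap allocation must be identified; once $\mathbf{S}'$ is chosen as above, everything reduces to optimality of $\mathbf{S}^*$, monotonicity, and the fact that high prices are inflated to the full value of the owner's bundle, so that even a single ``stolen'' item from another buyer already pays for that buyer's entire contribution.
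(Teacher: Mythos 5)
Your proof is correct and follows essentially the same route as the paper's: both rely on the swap argument comparing $\mathbf{S}^*$ to the allocation where buyer $i$ receives $S^*_i\cup T$ and everyone else loses their items in $T$, invoke optimality, apply monotonicity, drop the marginal $v_k(S^*_k)-v_k(S^*_k\setminus T)$ down to $v_k(S^*_k)$, and exploit that the multiplicity $|S^*_k\cap T|\ge 1$ for any buyer actually touched by $T$; the discrepancy bound via $\sum_i|S^*_i|v_i(S^*_i)\le m\sum_i v_i(S^*_i)$ is identical. The only difference is presentational — you argue forward from the optimality inequality while the paper runs a single chain of inequalities downward from $v_i(T)-\sum_{j\in T\setminus S^*_i}\hat p_j$.
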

\begin{proof}
We first show that  $(\mathbf{S}^*, \mathbf{\hat{p}}, \mathbf{0})$ is a 2PE. Let $T \subseteq [m]$ be some bundle. Then, 

\begin{eqnarray*}
v_i(T) - \sum_{j \in T \setminus S_i^*} \hat{p}_j
& \le & 
v_i(T\cup S_i^*) - \sum_{j \in T \setminus S_i^*} \hat{p}_j \\
& = &
v_i(S_i^*) + v_i(T \setminus S_i^*|S_i^*) - \sum_{j \in T \setminus S_i^*} v_{i(j)}(S_{i(j)}^*) \\
& = &
v_i(S_i^*) + v_i(T \setminus S_i^*|S_i^*) - \sum_{i' \ne i} |T \cap S_i^*|v_{i'}(S_{i'}^*) \\
& \le &
v_i(S_i^*) + v_i(T \setminus S_i^*|S_i^*) - \sum_{i' \ne i} |T \cap S_i^*|v_{i'}(S_{i'}^* \cap T|S_{i'}^* \setminus T) j\\
& \le &
v_i(S_i^*) + v_i(T \setminus S_i^*|S_i^*) - \sum_{i' \ne i} v_{i'}(S_{i'}^* \cap T|S_{i'}^* \setminus T) \\
& \le &
v_i(S_i^*)
\end{eqnarray*}


where the first and second inequality follows by monotonicity, the third inequality follows since equality holds whenever $|T \cap S_i^*| \le 1$, and strict inequality holds otherwise. The last inequality is due to $\mathbf{S}^*$ optimality. Thus $(\mathbf{S}^*, \mathbf{\hat{p}}, \mathbf{0})$ is a 2PE.
Next, we bound the discrepancy of the optimal allocation,
$$D(\mathbf{S}^*) \leq D(\mathbf{S}^*, \mathbf{\hat{p}}, \mathbf{0}) = \frac{\sum_{i \in [n]}\sum_{j \in S^*_i} \hat{p}_j}{SW(\mathbf{S}^*, \vals)} = \frac{\sum_{i \in [n]}\sum_{j \in S^*_i} v_i(S^*_i)}{SW(\mathbf{S}^*, \vals)} \leq$$ $$\frac{m\sum_{i \in [n]}v_i(S^*_i)}{SW(\mathbf{S}^*, \vals)} = m.$$
\end{proof}

\section{Missing proofs and lemmas for section \ref{sec:vals_id}}
\label{sec:miss_valis_id}


The following lemma shows that for every submodular valuation $v$, the $(k,r)$-max-forward-slope and $(k,r)$-min-backward-slope are realized at $l'=1$, hence they do not depend on the horizon $r$.

\begin{lemma}
\label{lem:sm-mos-mbs}
For every submodular symmetric valuation
\begin{enumerate}
\item
$\overrightarrow{\Delta}_{v}(k,r) = \overrightarrow{\Delta}_{v}(k,1) = v(k+1) - v(k)$, for every $0 \le k < m$ and every $1 \le r \le m-k$.

\item
$\overleftarrow{\Delta}_{v}(k,r) = \overleftarrow{\Delta}_{v}(k,1) = v(k) - v(k-1)$, for every $0 < k \le m$ and every $1 \le r \le k$.

\end{enumerate}
\end{lemma}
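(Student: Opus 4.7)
The plan is to prove each of the two claims using the defining property of submodular symmetric valuations, namely that the marginal increments $v(k+1)-v(k)$ are non-increasing in $k$. The key observation is that each slope in the definition of $\overrightarrow{\Delta}_v(k,r)$ and $\overleftarrow{\Delta}_v(k,r)$ is an \emph{average} of consecutive marginal increments, so monotonicity of the marginals translates directly into the claimed extremality at $l'=1$.

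For the first claim, I would fix $0 \le k < m$ and $1 \le r \le m-k$, and rewrite
\[
\frac{v(k+l)-v(k)}{l} \;=\; \frac{1}{l}\sum_{j=1}^{l} \bigl(v(k+j)-v(k+j-1)\bigr).
\]
By submodularity, the sequence $v(k+j)-v(k+j-1)$ is non-increasing in $j$, so its first $l$ terms all are at most the first term $v(k+1)-v(k)$. Hence the average is bounded above by $v(k+1)-v(k)$, with equality when $l=1$. Taking the maximum over $l\in\{1,\dots,r\}$ therefore yields $\overrightarrow{\Delta}_v(k,r)=v(k+1)-v(k)$, which is exactly $\overrightarrow{\Delta}_v(k,1)$.

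The second claim is completely symmetric. For fixed $0<k\le m$ and $1\le r\le k$, write
\[
\frac{v(k)-v(k-l)}{l} \;=\; \frac{1}{l}\sum_{j=1}^{l}\bigl(v(k-j+1)-v(k-j)\bigr).
\]
Again by submodularity the marginals $v(k-j+1)-v(k-j)$ are non-increasing in the index $k-j$, i.e., \emph{non-decreasing} as $j$ grows, so every term in the sum is at least $v(k)-v(k-1)$. Thus the average is at least $v(k)-v(k-1)$, with equality at $l=1$, and taking the minimum over $l$ gives the claim.

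There is no real obstacle here: the proof is a direct two-line computation in each direction, and the only care needed is to keep track of indices so that submodularity is invoked in the correct direction (decreasing marginals going forward, and correspondingly larger marginals when averaging backwards). No results beyond the definitions of $\overrightarrow{\Delta}_v$, $\overleftarrow{\Delta}_v$, and symmetric submodularity are required.
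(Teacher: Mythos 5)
Your proof is correct and is essentially the same argument as the paper's: both telescope the difference into a sum of marginal increments and invoke submodularity (non-increasing marginals) to bound that sum by $l$ times the first (resp.\ last) marginal, yielding that $l=1$ realizes the maximum (resp.\ minimum).
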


\begin{proof}
For every $l \ge 1$, it holds that
$$v(k+l) - v(k) = v(1 \mid k) + v(1 \mid k+1) + \ldots + v(1 \mid k+l-1) \le l \cdot v(1 \mid k)$$
where the inequality follows from submodularity. By rearranging we get that
$$ v(k+1) - v(k) \ge \frac{v(k+l) - v(k)}{l}$$
It follows that $l=1$ realizes $\overrightarrow{\Delta}_{v}(k,r)$ for every $k$, $r$.

Similarly, for every $1 \le l \le k$, it holds that
$$v(k) - v(k-l) = v(1 \mid k-l) + v(1 \mid k-l+1) + \ldots + v(1 \mid k-1) \ge l \cdot v(1 \mid k-1)$$
where the inequality follows from submodularity. By rearranging we get that
$$v(k) - v(k-1) \le \frac{v(k)-v(k-l)}{l}$$
It follows that $l=1$ realizes  $\overleftarrow{\Delta}_{v}(k,r)$ for every $k$, $r$.
\end{proof}

The following lemma shows that the the max-forward-slope and min-backward-slope of the SM-closure always equal to the slope of the corresponding triangle.


\begin{lemma}
\label{lem:tri-sm-slope}
For every $0 \le l < |I_v|-2$

\begin{enumerate}
    \item For every $i_{l} \le k < i_{l+1}$ $\overrightarrow{\Delta}_{\tilde{v}}(k) = \alpha_l$
    \item For every $i_{l} < k \le i_{l+1}$ $\overleftarrow{\Delta}_{\tilde{v}}(k) = \alpha_l$
\end{enumerate}
\end{lemma}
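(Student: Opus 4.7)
The plan is to reduce the slope quantities on $\tilde v$ to adjacent marginals and then identify those marginals on the discrete interval $[i_l,i_{l+1}]$.

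The first step is straightforward: since $\tilde v$ is the SM-closure, it is by definition a submodular symmetric valuation, so Lemma \ref{lem:sm-mos-mbs} applies to $\tilde v$ and gives $\overrightarrow{\Delta}_{\tilde v}(k) = \tilde v(k+1) - \tilde v(k)$ and $\overleftarrow{\Delta}_{\tilde v}(k) = \tilde v(k) - \tilde v(k-1)$. It therefore suffices to establish the structural claim that on $\{i_l,i_l+1,\ldots,i_{l+1}\}$, $\tilde v$ is affine with slope $\alpha_l = \frac{v(i_{l+1})-v(i_l)}{i_{l+1}-i_l}$: once this is in hand, for $i_l\le k<i_{l+1}$ both $k,k+1$ lie in this interval, which yields part (1), and symmetrically for $i_l<k\le i_{l+1}$ both $k-1,k$ lie there, which yields part (2).

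For the structural claim, one direction is easy. Because $\tilde v$ is submodular on identical items (hence concave in the index), and because $i_l,i_{l+1}$ are intersection indices so that $\tilde v(i_l)=v(i_l)$ and $\tilde v(i_{l+1})=v(i_{l+1})$, concavity gives $\tilde v(k) \ge v(i_l) + \alpha_l(k-i_l)$ for every $k\in[i_l,i_{l+1}]$, i.e.\ $\tilde v$ lies on or above the chord. For the reverse inequality I will use the minimality of the SM-closure: define $w$ to equal $\tilde v$ outside $(i_l,i_{l+1})$ and to equal the linear chord $w(k)=v(i_l)+\alpha_l(k-i_l)$ on $[i_l,i_{l+1}]$, and show that $w$ is a submodular function that still dominates $v$. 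Minimality of $\tilde v$ then forces $\tilde v \le w$ on the interval, and combined with the lower bound above this gives equality.

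The main obstacle lies in verifying the two properties of $w$. Submodularity will follow from the fact that the marginals of $\tilde v$ on $[i_l,i_{l+1}]$ form a non-increasing sequence whose average is exactly $\alpha_l$, so the leftmost marginal is $\ge\alpha_l$ and the rightmost is $\le\alpha_l$; replacing them all by the constant $\alpha_l$ therefore splices cleanly into the non-increasing marginal sequence of $\tilde v$ at both endpoints. The inequality $w\ge v$ reduces to showing $v(k)\le v(i_l)+\alpha_l(k-i_l)$ for every $k\in(i_l,i_{l+1})$, which I will prove by contradiction: if some $k^*$ violates it, then the two-piece linear function through $(i_l,v(i_l)),(k^*,v(k^*)),(i_{l+1},v(i_{l+1}))$ has left slope $>\alpha_l$ and right slope $<\alpha_l$, and combining it with $\tilde v$ outside the interval yields a submodular function dominating $v$ that intersects $v$ at $k^*$, contradicting the assumption that $i_l$ and $i_{l+1}$ are \emph{consecutive} intersection indices. (This is the standard fact that the SM-closure of a symmetric valuation is the upper convex hull of the points $\{(k,v(k))\}_{k\in[m]}$, whose vertices are exactly the intersection indices in $I_v$ and whose edges are line segments of slope $\alpha_l$.)
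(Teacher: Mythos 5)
You follow the same route as the paper: apply Lemma~\ref{lem:sm-mos-mbs} to the submodular valuation $\tilde v$ so that $\overrightarrow{\Delta}_{\tilde v}(k)$ and $\overleftarrow{\Delta}_{\tilde v}(k)$ collapse to the adjacent marginals $\tilde v(k+1)-\tilde v(k)$ and $\tilde v(k)-\tilde v(k-1)$, and then use that $\tilde v$ is affine with slope $\alpha_l$ on $\{i_l,\ldots,i_{l+1}\}$. The paper takes this affine structure for granted as a known property of the SM-closure (the concave envelope, whose vertices are precisely the intersection indices); you invoke the same fact, which is fine.

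Where your writeup goes beyond the paper is in attempting to re-derive the affine structure from scratch, and that attempt has a gap. To show $w\ge v$ you argue by contradiction: if $v(k^*)$ exceeds the chord, you take the two-piece linear function through $(i_l,v(i_l))$, $(k^*,v(k^*))$, $(i_{l+1},v(i_{l+1}))$, splice it with $\tilde v$ outside, and claim the result is a submodular function dominating $v$ that touches $v$ at $k^*$. The splice is indeed submodular (the same boundary-marginal argument you give for $w$ works here), but it need not dominate $v$: the two-piece function only agrees with $v$ at $i_l,k^*,i_{l+1}$ and can dip below $v$ at other interior points. For instance, with values $0,3,4,4.1,6$ on $\{i_l,\ldots,i_{l+1}\}$ and $k^*=i_l+2$, the two-piece function is $0,2,4,5,6$, which is strictly below $v(i_l+1)=3$, so no contradiction is obtained. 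Choosing $k^*$ to maximize $\frac{v(k)-v(i_l)}{k-i_l}$ over $(i_l,i_{l+1}]$ (or inducting on the interval length), or simply citing the concave-envelope characterization of the SM-closure as the paper does, would close this gap; the rest of your argument for the lemma is sound.
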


\begin{proof}
By Lemma \ref{lem:sm-mos-mbs}, 
for every $0 \le k < m$,
$\overrightarrow{\Delta}_{\tilde{v}}(k) = \tilde{v}(k+1) - \tilde{v}(k)$.
Since the valuation $\tilde{v}$ is linear in the range $(i_{l}, i_{l+1})$, the max-forward-slope does not change inside the triangle range, and it equals to:
$$\overrightarrow{\Delta}_{\tilde{v}}(k) = \frac{\tilde{v}(i_{l+1}) - \tilde{v}(i_{l})}{i_{l+1} - i_{l}} = \frac{v(i_{l+1}) - v(i_{l})}{i_{l+1} - i_{l}} = \alpha_l$$
for every $i_{l} \le k < i_{l+1}$.

Similarly, by Lemma \ref{lem:sm-mos-mbs}, for every $0 < k \le m$ $\overleftarrow{\Delta}_{\tilde{v}}(k) = \tilde{v}(k) - \tilde{v}(k-1)$. Since the valuation $\tilde{v}$ is linear in the range $(i_{l}, i_{l+1})$, the min-backward-slope does not change inside the triangle range, and it equals to:
$$\overleftarrow{\Delta}_{\tilde{v}}(k) = \frac{\tilde{v}(i_{l+1}) - \tilde{v}(i_{l})}{i_{l+1} - i_{l}} = \frac{v(i_{l+1}) - v(i_{l})}{i_{l+1} - i_{l}} = \alpha_l$$
for every $i_{l} < k \le i_{l+1}$.
\end{proof}

The following lemma shows that the max-forward-slope of the valuation $v$ at $k = i_{l}$ equals to the slope of the corresponding triangle, and that the max-forward-slope of any $k \in T_{{l}}$ is realized at $l' \le i_{l+1}-k$.

\begin{lemma}
\label{lem:tri-v-slope}
For every triangle $T_{{l}}$, 

\begin{enumerate}
\item
\label{lem:tri-v-slope-1}
$\overrightarrow{\Delta}_{v}(i_{l},i_{l+1}-i_{l}) = \alpha_l$.


\item
\label{lem:tri-v-slope-2}
$\overrightarrow{\Delta}_{v}(i_{l})$ is realized at $l' = i_{l+1}-i_{l}$. 



\item
\label{lem:tri-v-slope-3}
For every $i_{l} \le k < i_{l+1}$, $\overrightarrow{\Delta}_{v}(k)$ is realized at $l' \le i_{l+1}-k$.
\end{enumerate}

\end{lemma}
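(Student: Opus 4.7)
The plan is to prove the three parts in order, exploiting the relationship between $v$ and its submodular closure $\tilde{v}$ together with Lemma~\ref{lem:tri-sm-slope}. For part~\ref{lem:tri-v-slope-1}, the key facts are: $v \le \tilde{v}$ pointwise; $v(i_l) = \tilde{v}(i_l)$ and $v(i_{l+1}) = \tilde{v}(i_{l+1})$ by the definition of intersection indices; and $\tilde{v}$ is linear on $[i_l, i_{l+1}]$ with slope $\alpha_l$. For every $1 \le r \le i_{l+1} - i_l$ this yields
\[
\frac{v(i_l + r) - v(i_l)}{r} \;\le\; \frac{\tilde{v}(i_l + r) - \tilde{v}(i_l)}{r} \;=\; \alpha_l,
\]
with equality at $r = i_{l+1} - i_l$, so $\overrightarrow{\Delta}_v(i_l, i_{l+1} - i_l) = \alpha_l$.

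For part~\ref{lem:tri-v-slope-2} I extend this bound to the full horizon $1 \le r \le m - i_l$. For $r > i_{l+1} - i_l$ I will use that $\tilde{v}$ is symmetric submodular, so its discrete marginals are non-increasing; telescoping, $\tilde{v}(i_l + r) - \tilde{v}(i_l)$ consists of $i_{l+1} - i_l$ marginals equal to $\alpha_l$ followed by further marginals each at most $\alpha_l$, so the total is at most $r \alpha_l$. Combined with $v(i_l + r) \le \tilde{v}(i_l + r)$ and $v(i_l) = \tilde{v}(i_l)$, the slope stays $\le \alpha_l$ there as well. For $r < i_{l+1} - i_l$, the point $i_l + r$ sits strictly between two consecutive intersection indices, so $v(i_l + r) < \tilde{v}(i_l + r)$ and the inequality is strict. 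Hence $\overrightarrow{\Delta}_v(i_l) = \alpha_l$, with $r = i_{l+1} - i_l$ the minimum realizer.

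For part~\ref{lem:tri-v-slope-3} I argue by contradiction. The case $k = i_l$ follows from part~\ref{lem:tri-v-slope-2}, and the case $i_{l+1} = m$ is vacuous since every realizer satisfies $l' \le m - k = i_{l+1} - k$; so assume $i_l < k < i_{l+1} < m$ and that the minimum realizer $r^*$ of $\overrightarrow{\Delta}_v(k)$ satisfies $r^* > i_{l+1} - k$. Put $a = i_{l+1} - k \ge 1$ and $b = r^* - a \ge 1$; decomposing $v(k + r^*) - v(k)$ across $i_{l+1}$ displays $\overrightarrow{\Delta}_v(k)$ as the $(a,b)$-weighted average of
\[
\beta_1 = \frac{v(i_{l+1}) - v(k)}{a}, \qquad \beta_2 = \frac{v(i_{l+1} + b) - v(i_{l+1})}{b}.
\]
Using $v(k) \le v(i_l) + (k-i_l)\alpha_l$ (since $k$ lies in $T_l$ and $v \le \tilde{v}$) together with $v(i_{l+1}) = v(i_l) + (i_{l+1} - i_l)\alpha_l$ gives $\beta_1 \ge \alpha_l$. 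Applying part~\ref{lem:tri-v-slope-2} at the intersection index $i_{l+1}$ gives $\beta_2 \le \overrightarrow{\Delta}_v(i_{l+1}) = \alpha_{l+1}$, and the submodularity of $\tilde{v}$ forces $\alpha_{l+1} \le \alpha_l$. So $\beta_1 \ge \beta_2$, the weighted average is at most $\beta_1$, and $r = a$ also realizes the maximum; since $a < r^*$, this contradicts the minimality of $r^*$.

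The main technical obstacle is the decomposition in part~\ref{lem:tri-v-slope-3}: the chain $\beta_1 \ge \alpha_l \ge \alpha_{l+1} \ge \beta_2$ has to be assembled from submodularity of $\tilde{v}$ together with control of $v$ at interior points of $T_l$, and the closing step uses the \emph{minimality} of $r^*$ (not merely that it is a maximizer) to rule out the tie case $\beta_1 = \beta_2 = \overrightarrow{\Delta}_v(k)$.
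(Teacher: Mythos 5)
Your proof is correct and follows the same underlying approach as the paper: establish $\overrightarrow{\Delta}_v(i_l, i_{l+1}-i_l) = \alpha_l$ by comparing $v$ to its SM-closure $\tilde v$ inside the triangle, then extend the horizon using the fact that triangle slopes decrease ($\alpha_{l+1}\le\alpha_l$) and $v\le\tilde v$. Your treatment of part~(\ref{lem:tri-v-slope-3}) --- the contradiction argument with the explicit $(a,b)$-weighted-average decomposition across $i_{l+1}$ and the chain $\beta_1\ge\alpha_l\ge\alpha_{l+1}\ge\beta_2$ --- is a more detailed spelling-out of what the paper compresses into ``for the same argument,'' but it is the same idea rather than a genuinely different route.
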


\begin{proof}
As $i_{l+1}$ is the first point after $i_{l}$ in which $v(k) = \tilde{v}(k)$, we have that for every point $i_{l} < k < i_{l+1}$, $v(k) < \tilde{v}(k)$. Moreover, since $\tilde{v}$ is linear in the triangle, the $(i_l, i_{l+1}-i_{l})$-max-forward-slope of $v$, is realized at $l' = i_{l+1}-i_{l}$.
That is, 
$$\overrightarrow{\Delta}_{v}(i_{l},i_{l+1}-i_{l}) = \max_{l = 1, 2, \ldots, i_{l+1}-i_{l}} \set{\frac{v(i_{l}+l)-v(i_{l})}{l}} = \frac{v(i_{l+1})-v(i_{l})}{i_{l+1}-i_{l}} = \alpha_l$$
where the last equality follows from the definition of $\alpha_l$.

As submodular functions are convex, the slope of the triangles is non decreasing, and therefore the slope of all the following triangles after $T_{{l}}$ are at most $\alpha_l$. Since $v(k) \le \tilde{v}(k)$ for every $0 \le k \le m$, the $(k,r)$-max-forward-slope of $v$ at point $k=i_{l}$ cannot increase beyond $\alpha_l$ for $r \ge i_{l+1}-i_{l}$, i.e., $\overrightarrow{\Delta}_{v}(i_{l}) = \overrightarrow{\Delta}_{v}(i_{l},i_{l+1}-i_{l})$.
For the same argument, we conclude that $\overrightarrow{\Delta}_{v}(k) = \overrightarrow{\Delta}_{v}(k,i_{l+1}-k)$, for every $i_{l} \le k < i_{l+1}$.
That is, $\overrightarrow{\Delta}_{v}(k)$ is realized at $l' \le i_{l+1}-k$ for $k \in T_{{l}}$, and at $l' = i_{l+1}-i_{l}$ for $k = i_{l}$.
\end{proof}

The following lemma shows that the min-backward-slope of the valuation $v$ at $k = i_{l+1}$ equals to the slope of triangle $T_{l}$, and that the min-backward-slope of any $i_l < k \le i_{l+1}$ is realized at $l' \le k - i_{l}$.

\begin{lemma}
\label{lem:tri-v-slope_back}
For every triangle $T_{{l}}$, 

\begin{enumerate}
\item
\label{lem:tri-v-slope-1_back}
$\overleftarrow{\Delta}_{v}(i_{l+1},i_{l+1}-i_{l}) = \alpha_l$.

\item
\label{lem:tri-v-slope-2_back}
$\overleftarrow{\Delta}_{v}(i_{l+1})$ is realized at $l' = i_{l+1}-i_{l}$. 

\item
\label{lem:tri-v-slope-3_back}
For every $i_{l} < k \le i_{l+1}$, $\overleftarrow{\Delta}_{v}(k)$ is realized at $l' \le k - i_{l}$.
\end{enumerate}

\end{lemma}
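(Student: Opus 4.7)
The plan is to mirror the proof of Lemma~\ref{lem:tri-v-slope} in the backward direction, leaning on three structural facts about the SM-closure: (i) $v(k)=\tilde{v}(k)$ iff $k\in I_v$, and $v(k)<\tilde{v}(k)$ strictly otherwise; (ii) $\tilde{v}$ is linear on $[i_l,i_{l+1}]$ with slope $\alpha_l$, so $\tilde{v}(k)=v(i_{l+1})-(i_{l+1}-k)\alpha_l$ there; and (iii) submodularity of $\tilde{v}$ forces the triangle slopes $\alpha_{l'}$ to be non-increasing in $l'$, so every unit marginal of $\tilde{v}$ at an index strictly left of $i_l$ is at least $\alpha_l$.

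For part~(\ref{lem:tri-v-slope-1_back}), I would observe that for each $1\le l'<i_{l+1}-i_l$ the point $i_{l+1}-l'$ lies strictly inside the open triangle interval, so by (i) and (ii),
\[
\frac{v(i_{l+1})-v(i_{l+1}-l')}{l'}\;>\;\frac{v(i_{l+1})-\tilde{v}(i_{l+1}-l')}{l'}\;=\;\alpha_l,
\]
while $l'=i_{l+1}-i_l$ yields the ratio $\alpha_l$ exactly (using $v(i_l)=\tilde{v}(i_l)$). This gives part~(\ref{lem:tri-v-slope-1_back}) and identifies $l'=i_{l+1}-i_l$ as the minimizing index within $\{1,\dots,i_{l+1}-i_l\}$. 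For part~(\ref{lem:tri-v-slope-2_back}) it remains to rule out that any $l'>i_{l+1}-i_l$ (so that $i_{l+1}-l'<i_l$) lowers the ratio. I would split
\[
v(i_{l+1})-v(i_{l+1}-l')\;=\;(i_{l+1}-i_l)\alpha_l\;+\;\bigl[v(i_l)-v(i_{l+1}-l')\bigr],
\]
and bound the bracket below by $\tilde{v}(i_l)-\tilde{v}(i_{l+1}-l')$ (using $v(i_l)=\tilde{v}(i_l)$ and $v\le\tilde{v}$), which by (iii) is at least $(l'-(i_{l+1}-i_l))\alpha_l$. So the ratio stays $\ge\alpha_l$, and $l'=i_{l+1}-i_l$ is confirmed as the smallest realizing index.

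For part~(\ref{lem:tri-v-slope-3_back}) with $i_l<k\le i_{l+1}$, I would compare an arbitrary $l'>k-i_l$ against the candidate $l''=k-i_l$. The desired inequality $\tfrac{v(k)-v(i_l)}{k-i_l}\le \tfrac{v(k)-v(k-l')}{l'}$ is equivalent via the standard mediant manipulation to
\[
\frac{v(i_l)-v(k-l')}{l'-(k-i_l)}\;\ge\;\frac{v(k)-v(i_l)}{k-i_l}.
\]
The right-hand side is at most $\alpha_l$ because $v(k)\le\tilde{v}(k)=v(i_l)+(k-i_l)\alpha_l$, and the left-hand side is at least $\alpha_l$ by repeating the bracket estimate from part~(\ref{lem:tri-v-slope-2_back}). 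Hence the minimum defining $\overleftarrow{\Delta}_v(k)$ is attained at some $l'\le k-i_l$.

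The only real obstacle is pinning down structural fact~(iii) cleanly: that the triangle slopes $\alpha_{l'}$ are monotone non-increasing in $l'$, so every backward marginal of $\tilde{v}$ taken from $i_l$ is at least $\alpha_l$. This is exactly the concave-envelope property of the SM-closure in the symmetric case; once it is in hand, each of the three parts is a short algebraic consequence that exactly parallels the forward-direction argument of Lemma~\ref{lem:tri-v-slope}.
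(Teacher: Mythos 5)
Your proposal is correct and follows essentially the same route as the paper: for the in-triangle indices use $v(k)<\tilde{v}(k)$ and linearity of $\tilde{v}$ to pin the minimum at $l'=i_{l+1}-i_l$, then use monotonicity of the triangle slopes (Lemma~\ref{lem:mon-tri-slopes}) to rule out improvements from looking further back. The paper's proof is terser (and contains a minor typo, saying ``non decreasing'' where it means the slopes are non-increasing in $l$); your explicit bracket/mediant estimates make the same argument rigorous without changing its substance.
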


\begin{proof}
Since $\tilde{v}$ is the SM-Closure of $v$, for every $i_{l} < k < i_{l+1}$, $v(k) < \tilde{v}(k)$. Since $\tilde{v}$ is linear in $T_{l}$ we have that the $(i_{l+1}, i_{l+1} - i_l)$-min-backward-slope is realized at $l' = i_{l+1} - i_l$. That is:
$$\overleftarrow{\Delta}_{v}(i_{l+1},i_{l+1}-i_{l}) = \min_{l = 1, 2, \ldots, i_{l+1}-i_{l}} \set{\frac{v(i_{l+1})-v(i_{l+1} - l)}{l}} = \frac{v(i_{l+1})-v(i_l)}{i_{l+1}-i_{l}} = \alpha_l$$
where the last equality follows from the definition of $\alpha_l$.

As submodular functions are convex, the slope of the triangles is non decreasing, and therefore the slope of all the triangle that come before $T_{l}$ are at least $\alpha_l$. Thus, the $(i_{l+1},r)$-min-backward-slope cannot decrease beyond $\alpha_l$ for $r \ge i_{l+1}-i_{l}$, i.e., $\overleftarrow{\Delta}_{v}(i_{l+1}) = \overleftarrow{\Delta}_{v}(i_{l+1},i_{l+1}-i_{l})$.
For the same argument, we conclude that $\overleftarrow{\Delta}_{v}(k) = \overleftarrow{\Delta}_{v}(k,k - i_{l})$, for every $i_{l} < k \le i_{l+1}$.
That is, $\overleftarrow{\Delta}_{v}(k)$ is realized at $l' \le k - i_{l}$ for $i_l < k \le i_{l+1}$, and at $l' = i_{l+1}-i_{l}$ for $k = i_{l+1}$.
\end{proof}

The following lemma shows that at the intersecting indices, both the max-forward-slope and min-backward-slope of $v$ and $\tilde{v}$, equal to the slope of the corresponding triangle.

\begin{lemma}
\label{lem:tri-slope}
For every $i_{l} \in I_v$, 
\begin{enumerate}
    \item $\overrightarrow{\Delta}_{v}(i_{l}) = \overrightarrow{\Delta}_{\tilde{v}}(i_{l}) = \alpha_l$.
    \item $\overleftarrow{\Delta}_{v}(i_{l+1}) = \overleftarrow{\Delta}_{\tilde{v}}(i_{l+1}) = \alpha_l$.
\end{enumerate}
\end{lemma}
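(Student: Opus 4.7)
The plan is to assemble this lemma directly from the previously established results about triangles and slopes of the SM--closure. Both parts are symmetric (forward vs.\ backward), and each reduces to two ingredients: the behavior of $\tilde v$ inside the triangle $T_l$ (handled by Lemma \ref{lem:tri-sm-slope}) and the behavior of $v$ at the endpoints of $T_l$ (handled by Lemmas \ref{lem:tri-v-slope} and \ref{lem:tri-v-slope_back}).

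First I would prove part~(1). For the SM--closure piece, I would invoke Lemma \ref{lem:tri-sm-slope}(1) at $k = i_l$, which is the left endpoint of $T_l$, to conclude $\overrightarrow{\Delta}_{\tilde v}(i_l) = \alpha_l$. For the valuation $v$ itself, I would apply Lemma \ref{lem:tri-v-slope}(2), which states that $\overrightarrow{\Delta}_v(i_l)$ is realized at $l' = i_{l+1} - i_l$; combined with Lemma \ref{lem:tri-v-slope}(1), which asserts $\overrightarrow{\Delta}_v(i_l, i_{l+1}-i_l) = \alpha_l$, this gives $\overrightarrow{\Delta}_v(i_l) = \alpha_l$. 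Chaining the two equalities yields $\overrightarrow{\Delta}_v(i_l) = \overrightarrow{\Delta}_{\tilde v}(i_l) = \alpha_l$, which is exactly part~(1).

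Part~(2) follows by a completely analogous argument at the right endpoint of $T_l$, i.e.\ at $k = i_{l+1}$. For the SM--closure, apply Lemma \ref{lem:tri-sm-slope}(2) with $k = i_{l+1}$ to get $\overleftarrow{\Delta}_{\tilde v}(i_{l+1}) = \alpha_l$. For $v$, combine Lemma \ref{lem:tri-v-slope_back}(2), which says $\overleftarrow{\Delta}_v(i_{l+1})$ is realized at $l' = i_{l+1} - i_l$, with Lemma \ref{lem:tri-v-slope_back}(1), which gives $\overleftarrow{\Delta}_v(i_{l+1}, i_{l+1}-i_l) = \alpha_l$. This yields $\overleftarrow{\Delta}_v(i_{l+1}) = \overleftarrow{\Delta}_{\tilde v}(i_{l+1}) = \alpha_l$.

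Since the lemma is essentially a corollary of four already-established lemmas, there is no substantive obstacle; the only care needed is to make sure the indices match the conventions of Definitions \ref{def:fdelta} and \ref{def:bdelta} (the forward slope is defined at $k < m$, the backward slope at $k > 0$) and to handle the degenerate case $\alpha_l = 0$ for the final (flat) triangles, which poses no issue because all of the cited lemmas were stated for arbitrary $T_l$ including the degenerate ones.
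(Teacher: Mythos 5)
Your proof is correct and takes essentially the same approach as the paper, which simply states that the lemma "follows directly from Lemmas \ref{lem:tri-sm-slope}, \ref{lem:tri-v-slope} and \ref{lem:tri-v-slope_back}"; you have just unpacked that one-line citation into the explicit chain of equalities.
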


\begin{proof}
The proof follows directly from Lemmas \ref{lem:tri-sm-slope}, \ref{lem:tri-v-slope} and \ref{lem:tri-v-slope_back}.
\end{proof}

The following lemma shows that the slopes of the triangles are monotonically decreasing.

\begin{lemma}
\label{lem:mon-tri-slopes}
$\alpha_0 \ge \alpha_1 \ge \ldots \ge \alpha_{|I_v|-2}$.
\end{lemma}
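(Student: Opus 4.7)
The plan is to deduce monotonicity of the triangle slopes directly from submodularity of the SM-closure $\tilde{v}$, together with the earlier lemmas identifying the slopes $\alpha_l$ with discrete marginals of $\tilde{v}$.

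First I would recall two facts already established. By Lemma~\ref{lem:tri-sm-slope}, for every $0 \le l < |I_v|-1$ and every $i_l \le k < i_{l+1}$ we have $\overrightarrow{\Delta}_{\tilde{v}}(k) = \alpha_l$. On the other hand, by Lemma~\ref{lem:sm-mos-mbs} applied to the submodular function $\tilde{v}$, the max-forward-slope is realized at $l'=1$, so $\overrightarrow{\Delta}_{\tilde{v}}(k) = \tilde{v}(k+1) - \tilde{v}(k)$. Combining these two identifications, the discrete marginal of $\tilde{v}$ is constant on each half-open interval $[i_l, i_{l+1})$ and equals $\alpha_l$ there.

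Next, fix any $l$ with $0 \le l \le |I_v|-3$ and pick $k = i_l$ and $k' = i_{l+1}$. Then $k < k'$, and by the previous paragraph $\tilde{v}(k+1) - \tilde{v}(k) = \alpha_l$ and $\tilde{v}(k'+1) - \tilde{v}(k') = \alpha_{l+1}$. Submodularity of $\tilde{v}$ implies that the function $k \mapsto \tilde{v}(k+1) - \tilde{v}(k)$ is non-increasing in $k$, so $\alpha_l \ge \alpha_{l+1}$. Chaining this over all consecutive $l$ gives the desired inequality $\alpha_0 \ge \alpha_1 \ge \cdots \ge \alpha_{|I_v|-2}$.

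No serious obstacle is expected: the argument is essentially a one-line consequence of the fact that $\tilde{v}$ is submodular and that, by Lemma~\ref{lem:tri-sm-slope}, its discrete marginal is exactly $\alpha_l$ on $[i_l, i_{l+1})$. The only point worth a brief remark is the case of degenerate triangles with $\alpha_l = 0$, which are handled identically since Lemma~\ref{lem:tri-sm-slope} applies regardless of whether the triangle is degenerate, and the resulting marginal of $\tilde{v}$ in such a triangle is $0$, which is consistent with the non-increasing sequence (it can only be followed by further zero slopes, as slopes are non-negative).
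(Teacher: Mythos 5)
Your proof is correct and matches the paper's approach: the paper also cites Lemma~\ref{lem:sm-mos-mbs}, Lemma~\ref{lem:tri-sm-slope}, and the submodularity of $\tilde{v}$ as the three ingredients, stating that the lemma "follows directly" from them. Your write-up simply makes explicit the step that was left implicit, namely identifying $\alpha_l$ with the marginal $\tilde{v}(i_l+1)-\tilde{v}(i_l)$ and invoking the non-increasing-marginals characterization of submodularity.
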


\begin{proof}
The proof follows directly from Lemma \ref{lem:sm-mos-mbs}, \ref{lem:tri-sm-slope} and the fact that $\tilde{v}$ is submodular.
\end{proof}

We next show that the max-forward-slope (respectively, min-backward-slope) of the left-most (resp., right-most) point of a triangle is the minimal max-forward-slope (resp., maximal min-backward-slope) among all triangle points.


\begin{claim}
\label{clm:min-slope-in-tri}
For every triangle $T_{{l}}$, for every $i_{l} < k < i_{l+1}$,
\begin{enumerate}
    \item $\overrightarrow{\Delta}_{v}(i_{l}) < \overrightarrow{\Delta}_{v}(k)$
    \item $\overleftarrow{\Delta}_{v}(k) < \overleftarrow{\Delta}_{v}(i_{l+1})$
\end{enumerate}

\end{claim}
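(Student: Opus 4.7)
The plan is to exploit two facts that are already established in the excerpt: (i) by Lemma~\ref{lem:tri-slope}, $\overrightarrow{\Delta}_{v}(i_l) = \alpha_l$ and $\overleftarrow{\Delta}_{v}(i_{l+1}) = \alpha_l$; and (ii) since $i_l, i_{l+1}$ are \emph{adjacent} intersection indices, every interior point $i_l < k < i_{l+1}$ satisfies $v(k) < \tilde v(k)$, while $\tilde v$ is linear with slope $\alpha_l$ on the whole segment $[i_l, i_{l+1}]$ (so $\tilde v(i_l) = v(i_l)$, $\tilde v(i_{l+1}) = v(i_{l+1})$, and $\tilde v(k) = v(i_l) + (k-i_l)\alpha_l$).

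For part~(1), I would pick the specific length $l' = i_{l+1}-k$ in Definition~\ref{def:fdelta} and lower bound $\overrightarrow{\Delta}_v(k)$ by the corresponding ratio. Using that $v(i_{l+1}) = \tilde v(i_{l+1})$ and $v(k) < \tilde v(k)$, I get
\[
\overrightarrow{\Delta}_v(k) \;\geq\; \frac{v(i_{l+1})-v(k)}{i_{l+1}-k} \;>\; \frac{\tilde v(i_{l+1})-\tilde v(k)}{i_{l+1}-k} \;=\; \alpha_l \;=\; \overrightarrow{\Delta}_v(i_l),
\]
which gives the strict inequality. Symmetrically, for part~(2), I would use the length $l' = k - i_l$ in Definition~\ref{def:bdelta}, upper bounding $\overleftarrow{\Delta}_v(k)$ by the ratio over this length. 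Using $v(i_l) = \tilde v(i_l)$ and $v(k) < \tilde v(k)$,
\[
\overleftarrow{\Delta}_v(k) \;\leq\; \frac{v(k)-v(i_l)}{k-i_l} \;<\; \frac{\tilde v(k)-\tilde v(i_l)}{k-i_l} \;=\; \alpha_l \;=\; \overleftarrow{\Delta}_v(i_{l+1}),
\]
as required.

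There is no real obstacle here: the only subtlety is making sure the chosen $l'$ is admissible, i.e.\ that $1 \leq i_{l+1}-k$ and $1 \leq k-i_l$, which both hold because $k$ is strictly interior to the triangle, and making sure the strict inequality $v(k)<\tilde v(k)$ is invoked correctly (it holds precisely because $k \notin I_v$, as $i_l,i_{l+1}$ are consecutive intersection indices). The rest is a two-line calculation in each direction.
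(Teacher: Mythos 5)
Your proof is correct and follows essentially the same route as the paper's: both rely on (i) $v(k)<\tilde v(k)$ at interior points, (ii) $v=\tilde v$ at the endpoints, (iii) linearity of $\tilde v$ on the triangle, and (iv) Lemma~\ref{lem:tri-slope} to identify $\overrightarrow{\Delta}_v(i_l)=\overleftarrow{\Delta}_v(i_{l+1})=\alpha_l$. The only stylistic difference is that you bound $\overrightarrow{\Delta}_v(k)$ (resp.\ $\overleftarrow{\Delta}_v(k)$) directly from below (resp.\ above) by the single ratio at $l'=i_{l+1}-k$ (resp.\ $l'=k-i_l$), which follows immediately from the max/min in Definitions~\ref{def:fdelta} and \ref{def:bdelta}, whereas the paper writes the chain starting from $\overrightarrow{\Delta}_v(i_l)$ and invokes Lemmas~\ref{lem:tri-v-slope} and \ref{lem:tri-v-slope_back} to close the last link with an equality; your one-sided bound is a slight streamlining that avoids that citation but is otherwise the same argument.
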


\begin{proof}
Since $i_{l}$ and $i_{l+1}$ are two consecutive intersecting indices, for every point $i_{l} < k < i_{l+1}$, $v(k) < \tilde{v}(k)$. Therefore, 
$$\overrightarrow{\Delta}_{v}(i_{l}) = \overrightarrow{\Delta}_{\tilde{v}}(i_{l}) = \overrightarrow{\Delta}_{\tilde{v}}(k) = \frac{\tilde{v}(i_{l+1})-\tilde{v}(k)}{i_{l+1}-k} < \frac{v(i_{l+1})-v(k)}{i_{l+1}-k} \le \overrightarrow{\Delta}_{v}(k,i_{l+1}-k) = \overrightarrow{\Delta}_{v}(k),$$
where the first equality is due to Lemma \ref{lem:tri-slope}, the second equality is due to Lemma \ref{lem:tri-sm-slope}, the last inequality follows from Definition \ref{def:fdelta} and the last equality follows from Lemma \ref{lem:tri-v-slope}.

In addition to that we have, 
$$\overleftarrow{\Delta}_{v}(i_{l+1}) = \overleftarrow{\Delta}_{\tilde{v}}(i_{l+1}) = \overleftarrow{\Delta}_{\tilde{v}}(k) = \frac{\tilde{v}(k)-\tilde{v}(i_{l})}{k-i_{l}} > \frac{v(k)-v(i_{l})}{k-i_{l}} \ge \overleftarrow{\Delta}_{v}(k,k-i_{l}) = \overleftarrow{\Delta}_{v}(k),$$
where the first equality is due to Lemma \ref{lem:tri-slope}, the second equality is due to Lemma \ref{lem:tri-sm-slope}, the last inequality follows from Definition \ref{def:bdelta} and the last equality follows from Lemma \ref{lem:tri-v-slope_back}.
\end{proof}

From Claim \ref{clm:min-slope-in-tri}, Lemma \ref{lem:tri-slope} and Lemma \ref{lem:mon-tri-slopes}, we conclude that the minimum value of the max-forward-slope up to a point $k$, equals to the slope of the last triangle that precedes $k$, i.e.,
\begin{corollary}
\label{cor:min-slope}
For every $0 < k \le m$, let $i_j \in I_v$ be the largest intersecting index that is smaller than $k$. It holds that $\min_{0 \le k' < k} \set{\overrightarrow{\Delta}_{v}(k')} = \alpha_j$.
\end{corollary}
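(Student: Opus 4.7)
The plan is to use the triangle decomposition induced by the SM-closure: the indices $\{0,1,\ldots,m-1\}$ are partitioned across the triangles $T_0, T_1, \ldots, T_{|I_v|-2}$, so minimizing $\overrightarrow{\Delta}_v(k')$ over $k' \in \{0, 1, \ldots, k-1\}$ reduces to minimizing within each triangle that intersects this range and then taking the overall minimum.

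First, I would identify which triangles can contain a valid $k'$. Since $i_j$ is the largest intersecting index strictly smaller than $k$, we have $i_j < k \le i_{j+1}$, so every $k'$ with $0 \le k' < k$ lies in some triangle $T_l$ with $l \in \{0, 1, \ldots, j\}$. Importantly, $i_j$ itself is among the valid values of $k'$ (since $i_j < k$), so the triangle $T_j$ is non-trivially represented.

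Next, I would show that the minimum of $\overrightarrow{\Delta}_v$ restricted to indices inside a single triangle $T_l$ is exactly $\alpha_l$, attained at the left endpoint $i_l$. This is immediate from Lemma \ref{lem:tri-slope} (which gives $\overrightarrow{\Delta}_v(i_l) = \alpha_l$) together with Claim \ref{clm:min-slope-in-tri} (which gives the strict inequality $\overrightarrow{\Delta}_v(i_l) < \overrightarrow{\Delta}_v(k')$ for every interior point $i_l < k' < i_{l+1}$). Thus, taking the minimum over $k' \in \{0, 1, \ldots, k-1\}$ amounts to taking the minimum of $\alpha_0, \alpha_1, \ldots, \alpha_j$.

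Finally, I invoke Lemma \ref{lem:mon-tri-slopes}, which tells us that $\alpha_0 \ge \alpha_1 \ge \ldots \ge \alpha_j$, so the minimum of this finite sequence is $\alpha_j$. Combining the previous observations, $\min_{0 \le k' < k}\{\overrightarrow{\Delta}_v(k')\} = \alpha_j$, achieved at $k' = i_j$. There is no real obstacle here: the result is essentially a bookkeeping consequence of Claim \ref{clm:min-slope-in-tri}, Lemma \ref{lem:tri-slope}, and Lemma \ref{lem:mon-tri-slopes}; the only point that deserves explicit care is verifying that $i_j$ itself lies in the admissible range $\{0,1,\ldots,k-1\}$, which follows directly from the definition of $i_j$ as the largest intersecting index strictly below $k$.
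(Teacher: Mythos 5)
Your proposal is correct and follows exactly the same route the paper takes: the paper states the corollary as a direct consequence of Claim \ref{clm:min-slope-in-tri}, Lemma \ref{lem:tri-slope}, and Lemma \ref{lem:mon-tri-slopes} without spelling out the details, and your argument simply fills in that bookkeeping (triangle-by-triangle minimization, then monotonicity of the slopes). The one point you rightly flag as needing care, that $i_j$ itself lies in the admissible range, is indeed the only non-automatic step, and you handle it correctly.
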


\begin{proof} [Proof of Lemma \ref{lem:intgr-lem}]
Consider the SM-closure, $\tilde{v}$.  
Notice that the valuation at any intersecting index $i_l \in I_v$ equals to the sum of previous triangle increases, i.e. $v(i_l) = \sum_{j=0}^{l-1} (i_{j+1} - i_{j}) \cdot \alpha_j$.

Let $i_{l'} \in I_v$ be the largest intersecting index that is less than $k$.
If $k = i_{l'+1}$, 
then $v(k) = v(i_{l'+1}) = \sum_{j=0}^{l'} (i_{j+1} - i_{j}) \cdot \alpha_j \ge \alpha_{l'} \cdot \sum_{j=0}^{l'} (i_{j+1} - i_{j}) = \alpha_{l'} \cdot i_{l'+1} = \min_{0 \le k' < k} \set{\overrightarrow{\Delta}_{v}(k')} \cdot k$, where the first inequality follows from Lemma \ref{lem:mon-tri-slopes}, the third equality is due to telescoping sum, and the last equality follows from Corollary \ref{cor:min-slope}.

If $k \neq i_{l'+1}$, then $i_{l'} < k < i_{l'+1}$ and $k \in T_{{l'}}$. 
Let $i_{l''} \in I_v$ be the largest intersecting index that is less than $i_{l'+1} - k$. Notice that $i_{l''} \le i_{l'}$.
Also, 
\begin{eqnarray}
\nonumber
v(i_{l'+1} - k) 
& \le & 
v(i_{l''}) + ((i_{l'+1} - k) - i_{l''}) \cdot \alpha_{l''} \\
\label{eqn:intgr-lem-1}
& = &
\sum_{j=0}^{l''-1} (i_{j+1} - i_{j}) \cdot \alpha_j  + ((i_{l'+1} - k) - i_{l''}) \cdot \alpha_{l''}
\end{eqnarray}
and,
\begin{eqnarray}
\nonumber
v(i_{l'+1}) 
& = &
\sum_{j=0}^{l'} (i_{j+1} - i_{j}) \cdot \alpha_j \\
\label{eqn:intgr-lem-2}
& = &
\sum_{j=0}^{l''-1} (i_{j+1} - i_{j}) \cdot \alpha_j + (i_{l''+1} - i_{l''}) \cdot \alpha_{l''} + \sum_{j=l''+1}^{l'} (i_{j+1} - i_{j}) \cdot \alpha_j
\end{eqnarray}
Using subadditivity together with Equations (\ref{eqn:intgr-lem-1}) and (\ref{eqn:intgr-lem-2}) we get, 
\begin{eqnarray*}
v(k) 
& \ge &
v(i_{l'+1}) - v(i_{l'+1} - k) \\
& \ge &
((i_{l''+1} - i_{l''}) - ((i_{l'+1} - k) - i_{l''})) \cdot \alpha_{l''} + \sum_{j=l''+1}^{l'} (i_{j+1} - i_{j}) \cdot \alpha_j \\
& = &
(i_{l''+1} - (i_{l'+1} - k)) \cdot \alpha_{l''} + \sum_{j=l''+1}^{l'} (i_{j+1} - i_{j}) \cdot \alpha_j \\
& \ge &
(i_{l''+1} - (i_{l'+1} - k)) \cdot \alpha_{l'}  + \sum_{j=l''+1}^{l'} (i_{j+1} - i_{j}) \cdot \alpha_{l'} \\
& = &
\left( 
i_{l''+1} - (i_{l'+1} - k) + \sum_{j=l''+1}^{l'} (i_{j+1} - i_{j})
\right) 
\cdot \alpha_{l'} \\
& = & 
k \cdot \alpha_{l'} \\
& = &
k \cdot \min_{0 \le k' < k} \set{\overrightarrow{\Delta}_{v}(k')}
\end{eqnarray*}
where the last Inequality follows from Lemma \ref{lem:mon-tri-slopes} and the last equality follows from Corollary \ref{cor:min-slope}.
\end{proof}


\begin{figure}[h!]
\begin{center}
\includegraphics[width=16cm,height=4cm]{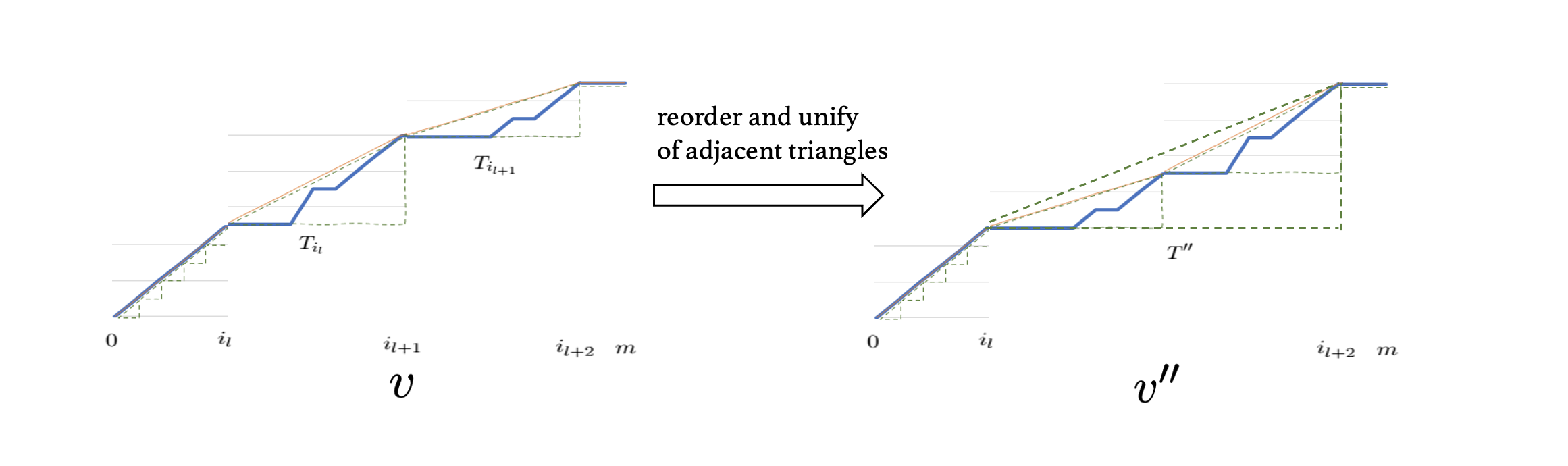}
\end{center}
\caption{Reorder and union of adjacent triangles: $v$ has 9 triangles.
$v''$ is composed of the same triangles, but with triangles $T_{{l}}$ and $T_{{l+1}}$ in the opposite order. Hence, these two triangles are unified to one larger triangle, $T''$.}
\label{fig:tri-switch}
\end{figure}

\begin{lemma}
\label{lem:sort2tri}
Let $v: [m] \rightarrow R^+$ be some valuation and $\tilde{v}$ its SM-closure.
Let $v'':[m] \rightarrow R^+$ be a valuation
s.t. $v''(k) = v(k)$ for every $0 \le k < i_{l}$ and every $i_{l+2} \le k \le m$, while in the range $i_{l} \le k < i_{l+2}$, $v''(k)$ is composed of the triangles $T_{{l}}$ and $T_{{l+1}}$ of the function $v$ in the opposite order (see Figure \ref{fig:tri-switch}). Then,
\begin{enumerate}

\item
\label{lem:sort2tri-1}
$v''(i_{l}) = v(i_{l})$ and $v''(i_{l+2}) = v(i_{l+2})$.

\item
\label{lem:sort2tri-2}
$v''$ is monotone.


\item
\label{lem:sort2tri-3}
$v''$ has only one right triangle, $T''$, in the range $i_{l} \le k \le i_{l+2}$, with slope $\alpha''$, where $\alpha_{l+1}\le \alpha'' \le \alpha_{l}$.

\item
\label{lem:sort2tri-4}
$\overrightarrow{\Delta}_{v''}(i_{l})$ is realized at $i_{l+2}$.

\item
\label{lem:sort2tri-5}
For every $0 \le k < m$, $\overrightarrow{\Delta}_{v}^s(k) \le \overrightarrow{\Delta}_{v''}^s(k)$.

\end{enumerate}
\end{lemma}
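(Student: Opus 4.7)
The plan is to verify each claim in turn, relying on an explicit description of $v''$. Outside the range $[i_l, i_{l+2}]$, $v''$ agrees with $v$. Inside, $v''$ consists of (i) the shape of $T_{l+1}$ translated so its base sits at $(i_l, v(i_l))$, occupying positions $[i_l, i_l+(i_{l+2}-i_{l+1})]$; and (ii) the shape of $T_l$ translated to start at $i_l+(i_{l+2}-i_{l+1})$ with base value $v(i_l) + \alpha_{l+1}(i_{l+2}-i_{l+1}) = v(i_l)+v(i_{l+2})-v(i_{l+1})$, occupying positions $[i_l+(i_{l+2}-i_{l+1}), i_{l+2}]$. Part 1 is then immediate: summing the two triangle heights telescopes to $v(i_{l+2}) - v(i_l)$. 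Part 2 (monotonicity) holds because each translated triangle is monotone (inherited from $v$ restricted to $T_l$ and $T_{l+1}$) and the two pieces match at the junction $i_l+(i_{l+2}-i_{l+1})$.

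For Part 3, I define $\alpha'' = \frac{v(i_{l+2})-v(i_l)}{i_{l+2}-i_l}$, which is a weighted average of $\alpha_l$ and $\alpha_{l+1}$ with weights proportional to the triangle lengths; Lemma \ref{lem:mon-tri-slopes} then gives $\alpha_{l+1} \le \alpha'' \le \alpha_l$. I would then show $v''(k) \le v(i_l) + \alpha''(k-i_l)$ for every $k \in [i_l, i_{l+2}]$, with strict inequality in the interior when $\alpha_l > \alpha_{l+1}$. Within the translated $T_{l+1}$ portion, the bound $v''(k) - v(i_l) \le \alpha_{l+1}(k-i_l) \le \alpha''(k-i_l)$ follows from the triangle's hypotenuse bound. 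Within the translated $T_l$ portion, substituting $k' = k - (i_{l+2}-i_{l+1})$ and using $v(k')-v(i_l) \le \alpha_l(k'-i_l)$, the inequality reduces after algebra to $k' \le i_{l+1}$, which holds. Compatibility with the slopes of the neighboring triangles $T_{l-1}$ (slope $\ge \alpha_l \ge \alpha''$) and $T_{l+2}$ (slope $\le \alpha_{l+1} \le \alpha''$) then shows that $\tilde{v''}$ is linear with slope $\alpha''$ in $[i_l, i_{l+2}]$, making $i_l$ and $i_{l+2}$ consecutive intersection indices of $v''$ and giving a single triangle $T''$.

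Part 4 follows from Part 3 together with behavior beyond $i_{l+2}$. The slope from $i_l$ to $i_{l+2}$ in $v''$ equals $\alpha''$, while the slope from $i_l$ to any $k \in (i_l, i_{l+2})$ is strictly less than $\alpha''$ (by the interior strict inequality just proved); for any $k > i_{l+2}$, the slope $(v(k)-v(i_l))/(k-i_l)$ is a weighted average of $\alpha''$ on $[i_l,i_{l+2}]$ and the slope from $i_{l+2}$ to $k$, which is at most $\overrightarrow{\Delta}_v(i_{l+2}) = \alpha_{l+2} \le \alpha''$ by Lemmas \ref{lem:tri-slope} and \ref{lem:mon-tri-slopes}. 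Hence the realizer is $i_{l+2}$.

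The main obstacle is Part 5. The clean first step is to observe that $v$ and $v''$ agree outside $[i_l, i_{l+2}]$, and by Lemma \ref{lem:tri-v-slope}(3) the max-forward-slope at any $k \notin [i_l, i_{l+2})$ is realized by an index inside the same triangle of $v$ (resp.\ $v''$) that contains $k$, so $\overrightarrow{\Delta}_v(k) = \overrightarrow{\Delta}_{v''}(k)$ for all such $k$. It therefore suffices to compare the two multisets $\{\overrightarrow{\Delta}_v(k) : k \in [i_l, i_{l+2})\}$ and $\{\overrightarrow{\Delta}_{v''}(k) : k \in [i_l, i_{l+2})\}$ and show the latter majorizes the former. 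My plan is a threshold-counting argument: for every $\tau \ge 0$, show that the number of $k \in [i_l, i_{l+2})$ with $\overrightarrow{\Delta}_{v''}(k) > \tau$ is at least the corresponding count for $v$. For $\tau \ge \alpha_l$, counts in both functions are controlled by how many indices inside the steep triangle exceed $\tau$, and the translated $T_l$ in $v''$ preserves this count exactly. For $\alpha'' \le \tau < \alpha_l$, every index in the translated $T_l$ (of length $i_{l+1}-i_l$) contributes, matching the contribution of $T_l$ in $v$. For $\alpha_{l+1} \le \tau < \alpha''$, the indices in the translated $T_{l+1}$ in $v''$ can look forward through $T_l$, so their max-forward-slope is at least $\alpha''>\tau$, giving a count of $i_{l+2}-i_l$, whereas in $v$ only the $T_l$-indices contribute $i_{l+1}-i_l$ counts; this is where the inequality becomes strict. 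For $\tau < \alpha_{l+1}$, all indices contribute in both. This threshold-by-threshold bookkeeping yields the sorted-vector domination.
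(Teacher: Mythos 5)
Your arguments for Parts 1--4 are essentially sound and track the paper's proof, though you are more explicit in the algebra for Part 3 (showing $v''(k) \le v(i_l) + \alpha''(k-i_l)$ directly; the paper simply asserts that the reordering leaves only two intersection points). Your check for Part 3, reducing to $a(\alpha_{l+1}-\alpha_l)(b-s)\le 0$, is correct.

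The issue is Part 5. You correctly begin by observing that $\overrightarrow{\Delta}_v(k)=\overrightarrow{\Delta}_{v''}(k)$ for $k\notin[i_l,i_{l+2})$, reducing the claim to a multiset comparison over $[i_l,i_{l+2})$. But your threshold bookkeeping contains concrete false statements. You write that for $\tau\ge\alpha_l$, ``counts in both functions are controlled by how many indices inside the steep triangle exceed $\tau$,'' and that for $\alpha_{l+1}\le\tau<\alpha''$, ``in $v$ only the $T_l$-indices contribute.'' Neither is true: $T_{l+1}$ can contain indices $k$ with $\overrightarrow{\Delta}_v(k)$ far above $\alpha_l$ (e.g.\ if $T_{l+1}$ is flat then jumps). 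For any $\tau<\alpha_l$, in fact, by Claim~\ref{clm:min-slope-in-tri} every interior index of $T_{l+1}$ has $\overrightarrow{\Delta}_v(k)>\alpha_{l+1}$, and for many shapes this exceeds $\tau$. Your final conclusions in each range are still correct, but only by accident --- the counts on the $v$ side are bounded above by $i_{l+2}-i_l$ and the $v''$ counts are exactly $i_{l+2}-i_l$ --- and the stated justification would not survive scrutiny.

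The cleaner route, which the paper takes and which repairs your argument, is a pointwise bijection: map each $k\in T_l$ to $k+(i_{l+2}-i_{l+1})$ in the translated $T_l$, and each $k\in T_{l+1}$ to $k-(i_{l+1}-i_l)$ in the translated $T_{l+1}$. For a $T_l$-index, the forward window inside $T''$ from its image has exactly the same increments as the forward window inside $T_l$ from $k$ in $v$ (and by Lemma~\ref{lem:tri-v-slope}(3), applied to $T''$ using your Part~3, the max-forward-slope of the image is realized inside that window), so $\overrightarrow{\Delta}_{v''}$ of the image equals $\overrightarrow{\Delta}_v(k)$. For a $T_{l+1}$-index, the forward window from its image strictly contains the shifted copy of the forward window from $k$ in $v$, so the max can only go up: $\overrightarrow{\Delta}_{v''}$ of the image $\ge\overrightarrow{\Delta}_v(k)$. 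A bijection under which each image dominates immediately gives domination of the sorted vectors, which is Part 5. If you want to keep the threshold formulation, you must include the $T_{l+1}$ contributions on both sides in every range and observe that the shifted $T_{l+1}$ indices dominate pointwise --- which is just the bijection argument again.
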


\begin{proof}
It is easy to see that $v''(i_{l}) = v(i_{l})$ and $v''(i_{l+2}) = v(i_{l+2})$, by the definition on $v''$.
Moreover, as $v$ is monotone, every \emph{slice} of $v$ is also monotone. Therefore, changing the order of some slices of $v$ keeps the monotonicity property, which leads to the conclusion that $v''$ is monotone.

Consider the range $k \in [i_{l}, i_{l+2}]$.
From Lemma \ref{lem:tri-sm-slope}, $\alpha_l = \overrightarrow{\Delta}_{\tilde{v}}(i_{l}) \ge \overrightarrow{\Delta}_{\tilde{v}}(i_{l+1}) = \alpha_{l+1}$, 
where the inequality follows from Lemma \ref{lem:sm-mos-mbs} and the fact that $\tilde{v}$ is submodular.
As in this range $v''$ is composed of the same triangles but in the opposite order, $v''$ intersect with its SM-closure in only two points: the first point, $i_{l}$,  and the last point, $i_{l+2}$. Therefore, $v''$ has only one right triangle, $T''$, in the range $k \in [i_{l}, i_{l+2}]$ and it is easy to see that the slope $\alpha''$ satisfies $\alpha_{l+1}\le \alpha'' \le \alpha_l$.
Moreover, according to Lemma \ref{lem:tri-v-slope} (\ref{lem:tri-v-slope-2}), $\overrightarrow{\Delta}_{v''}(i_{l})$ is realized at $i_{l+2}$.


Now lets look at triangle $T_{{l}}$. 
According to Lemma \ref{lem:tri-v-slope} (\ref{lem:tri-v-slope-3}), the max-forward-slope of every $k \in T_{{l}}$ in $v$ is realized at $l' \le i_{l+1} - i_{l}$ and by definition is determined only by the triangle points that comes after it. In $v''$, $T_{{l}}$ is part of a larger triangle, $T''$, but as it is located at the end of $T''$, there is no change in the max-forward-slope of $T_{{l}}$'s points when moving from $v$ to $v''$.
The above is not true for triangle $T_{{l+1}}$, which is located at the beginning of $T''$ in $v''$. Hence, the max-forward-slope of every $k \in T_{{l+1}}$ in $v''$ is influenced by the points in $T''$ that come after it, including the points of $T_{{l}}$. However, the max-forward-slope of every $k \in T_{{l+1}}$ can only increase when moving from $v$ to $v''$, as the max-forward-slope is a maximum over a larger set.
Moreover, according to Lemma \ref{lem:tri-v-slope}, the slope of all points $0 \le k < i_{l}$ and $i_{l+2} \le k \le m$ stays the same.

Therefore, $\overrightarrow{\Delta}_{v}^s(k) \le \overrightarrow{\Delta}_{v''}^s(k)$ for every $0 \le k < m$.
\end{proof}

\begin{proof} [Proof of Theorem \ref{thm:slv-le-lf}]
We prove this theorem by induction. We will denote $v_i$ as the valuation $v_i:[i] \rightarrow R^+$ and $f_i$ as its flat function.
The base case is $m = 1$, where $v_1$ is some valuation.
In this case, $v_1(k) = f_1(k)$ for every $0 \le k \le m = 1$. Therefore, $\overrightarrow{\Delta}_{v_1}^s(k) \le \overrightarrow{\Delta}_{f_1}(k)$ for every $0 \le k < m = 1$.

Assume that $\overrightarrow{\Delta}_{v_m}^s(k) \le \overrightarrow{\Delta}_{f_m}(k)$ for every $0 \le k < m$ for any valuation $v_m$. We need to show that $\overrightarrow{\Delta}_{v_{m+1}}^s(k) \le \overrightarrow{\Delta}_{f_{m+1}}(k)$ for every $0 \le k < m+1$ for any valuation $v_{m+1}$.

Consider a valuation $v_{m+1}$, it's SM-closure $\tilde{v}_{m+1}$.
Let $i_{l}, i_{l+1}$  be two consecutive intersecting indices in $I_{v_{m+1}}$ and let 
${T_{{l}}},T_{{l+1}} \in T_{v_{m+1}}$ be their corresponding two adjacent triangles. 

By Lemma \ref{lem:tri-sm-slope}, $\alpha_l = \overrightarrow{\Delta}_{\tilde{v}_{m+1}}(i_{l}) \ge \overrightarrow{\Delta}_{\tilde{v}_{m+1}}(i_{l+1}) = \alpha_{l+1}$, 
where the inequality follows from Lemma \ref{lem:sm-mos-mbs} and the fact that $\tilde{v}_{m+1}$ is submodular. That is, the list of triangle slopes is non-increasing, $\alpha_0 \ge \alpha_1 \ge \ldots \ge \alpha_{|I_{v_{m+1}}|-2}$.
For ease of presentation, we refer to the last intersecting index, which is $i_{|I_{v_{m+1}}|-1}$ as $i_{-1}$, and to the $d^{th}$ intersecting index from the end, i.e., $i_{|I_{v_{m+1}}|-d}$, as $i_{-d}$.

We now want to create a new function $v^1_{m+1}$ from the function $v_{m+1}$ by \emph{reorder and unify} of the last two triangles $T_{{-3}},T_{{-2}} \in T_{v_{m+1}}$ as described in Lemma \ref{lem:sort2tri}. There is no change in the other points $k \notin T_{{-3}},T_{{-2}}$, i.e. $v^1_{m+1}(k) = v_{m+1}(k)$ for every $i_{0} \le k < i_{-3}$.
Let $T_{{-3},{-2}}$ be the new triangle created after the reorder of unify of triangles $T_{{-3}}$ and $T_{{-2}}$, with slope $\alpha_{{-3},{-2}}$.  
According to Lemma \ref{lem:sort2tri},
$v^1_{m+1}$ is monotone,
$v^1_{m+1}(m+1) = v_{m+1}(m+1)$, 
$v^1_{m+1}(i_{-3}) = v_{m+1}(i_{-3})$,
$\alpha_{-2} \le \alpha_{{-3},{-2}} \le \alpha_{-3}$ and
$\overrightarrow{\Delta}_{v_{m+1}}^s(k) \le \overrightarrow{\Delta}_{v_{m+1}^1}^s(k)$ for every $0 \le k < m+1$.
Note that $v^1_{m+1}$ has $|T_{v_{m+1}}|-1$ triangles, i.e. one triangle less than $v_{m+1}$.
We can now create a new function $v^2_{m+1}:[m+1] \rightarrow R^+$ from the function $v^1_{m+1}$ by reorder and unify of the last two triangles of $v^1_{m+1}$. 
Similar to the previous arguments, $v^2_{m+1}$ is monotone and has $| T_{v_{m+1}}|-2$ triangles, i.e. one triangle less than $v^1_{m+1}$ and $\overrightarrow{\Delta}_{v_{m+1}^1}^s(k) \le \overrightarrow{\Delta}_{v_{m+1}^2}^s(k)$ for every $0 \le k < m+1$. We can repeat this process till the last function, $v^t_{m+1}$, has only one triangle (see Figure \ref{fig:concave-v}).

\begin{figure}[h!]
\begin{center}
\includegraphics[width=16cm,height=4cm]{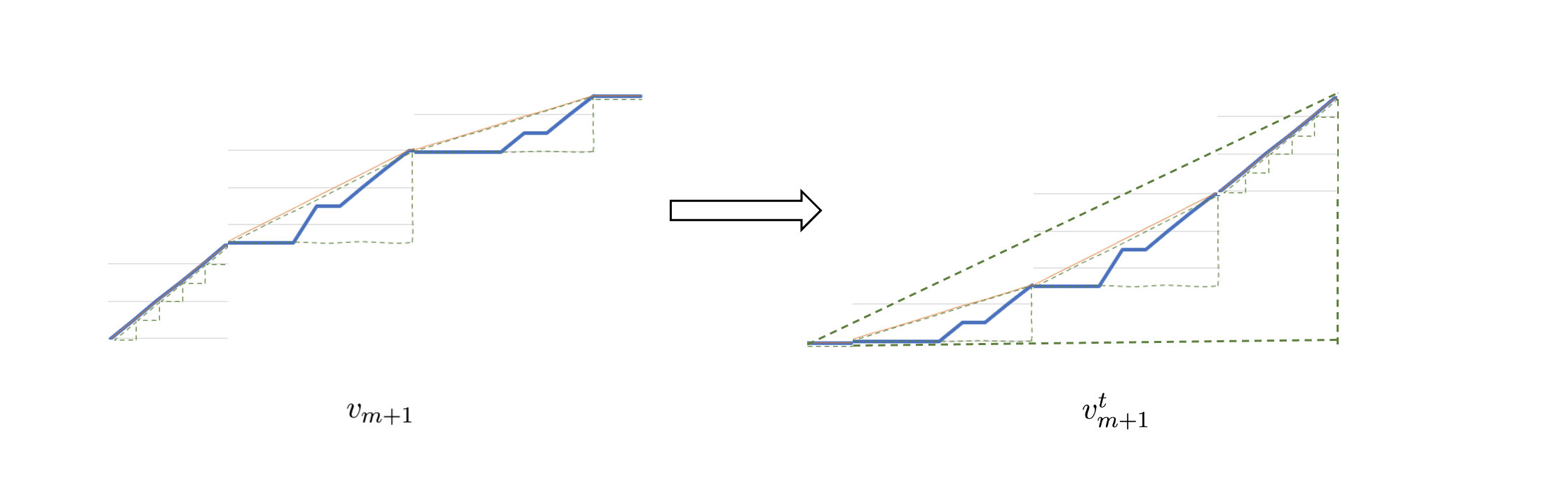}
\end{center}
\caption{The function $v_{m+1}$ with $|T_{v_{m+1}}|-1$ right triangles and the function $v^t_{m+1}$ with only one right triangle, after $|T_{v_{m+1}}|-2$ operations of \emph{reorder and union}.}
\label{fig:concave-v}
\end{figure}

Using Lemma \ref{lem:sort2tri} (\ref{lem:sort2tri-1}), we get that $v_{m+1}(0) = v^1_{m+1}(0) = v^2_{m+1}(0) = \ldots = v^t_{m+1}(0) = 0$ and $v_{m+1}(m+1) = v^1_{m+1}(m+1) = v^2_{m+1}(m+1) = \ldots = v^t_{m+1}(m+1)$, and therefore the flat function, $f_{m+1}$, which corresponds to the function $v_{m+1}$, and the flat function, $f_{m+1}^t$, which corresponds to the function $v^t_{m+1}$ are the same, i.e., $f_{m+1} = f_{m+1}^t$. Moreover, since $\overrightarrow{\Delta}_{v_{m+1}^1}^s(k) \le \overrightarrow{\Delta}_{v_{m+1}^2}^s(k) \le \ldots \le \overrightarrow{\Delta}_{v_{m+1}^t}^s(k)$ for every $0 \le k < m+1$, it is sufficient to show is that $\overrightarrow{\Delta}_{v_{m+1}^t}^s(k) \le \overrightarrow{\Delta}_{f_{m+1}^t}(k)$ for every $0 \le k < m+1$.

First notice that according to Lemma \ref{lem:sort2tri} (\ref{lem:sort2tri-4}), $\overrightarrow{\Delta}_{v^t_{m+1}}(0)$ is realized at point $m+1$.
Therefore, 
\begin{eqnarray}
\label{eqn:slv-le-lf-1}
\overrightarrow{\Delta}_{v^t_{m+1}}(0) = \frac{v^t_{m+1}(m+1)-v^t_{m+1}(0)}{m+1} = \frac{v^t_{m+1}(m+1)}{m+1} = \overrightarrow{\Delta}_{f^t_{m+1}}(0)
\end{eqnarray}
where the last equality follows from Observation \ref{obs:flat-LMFS}.

Let $v^t_m: [m] \rightarrow R^+$ be the following monotone set function: 
$$v^t_m(k) = 
	\begin{cases}
		0               & k = 0 \\
		v^t_{m+1}(k+1)   & 1 \le k \le m.
	\end{cases}$$
and let $f^t_m$ be its corresponding flat function. By the definition of $v^t_m$, we have that for every $2 \le k \le m$, 
\begin{eqnarray}
\label{eqn:slv-le-lf-2}
\overrightarrow{\Delta}_{v^t_{m+1}}(k) = \overrightarrow{\Delta}_{v^t_{m}}(k-1)
\end{eqnarray}
and for every $1 \le k \le m$, 
\begin{eqnarray}
\label{eqn:slv-le-lf-3}
\overrightarrow{\Delta}_{f^t_{m+1}}(k) = \overrightarrow{\Delta}_{f^t_{m}}(k-1)
\end{eqnarray}

As $v^t_{m+1}(k) = v^t_m(k-1)$ for every $2 \le k \le m+1$ and $v^t_{m+1}(1) \ge 0 = v^t_m(0)$, we have that,
\begin{eqnarray}
\label{eqn:slv-le-lf-4}
\overrightarrow{\Delta}_{v^t_{m+1}}(1) \le \overrightarrow{\Delta}_{v^t_{m}}(0)
\end{eqnarray}

By the induction assumption, we know that $\overrightarrow{\Delta}_{v_{m}^t}^s(k) \le \overrightarrow{\Delta}_{f_{m}^t}(k)$ for every $0 \le k < m$, and together with Equations (\ref{eqn:slv-le-lf-1}), (\ref{eqn:slv-le-lf-2}), (\ref{eqn:slv-le-lf-3}) and (\ref{eqn:slv-le-lf-4}), we get that $\overrightarrow{\Delta}_{v_{m+1}^t}^s(k) \le \overrightarrow{\Delta}_{f_{m+1}^t}(k)$ for every $0 \le k < m+1$.

\end{proof}

\section{Missing proofs for section \ref{sec:2pe_id}}
\label{miss_proof_2pe_id}

\begin{proof} [Proof of Proposition \ref{prop:u2pe_necess_cond}]

Let $k_i = |S_i|$, $T \subseteq [m]$, $t = |T|$, $t_i = |T \cap S_i|$. 
For identical items, Inequality (\ref{eqn:2pe}) can be written as 
\begin{eqnarray*}
v_i(k_i) - (k_i-t_i) \cdot \check{p}^{(i)} 
& \ge &
v_i(t) - \sum_{i' \neq i} t_{i'} \cdot \hat{p}^{(i')}
\end{eqnarray*}
for every $i \in [n]$ and every $t, t_1, t_2, \ldots, t_n$, such that $0 \le t_i \le k_i$ and $t = \sum_{i \in [n]} t_i$.
Alternatively,  
\begin{eqnarray}
\label{eqn:um_u2pe}
v_i(k_i) - v_i(t) 
& \ge &
(k_i-t_i) \cdot \check{p}^{(i)}  - \sum_{i' \neq i} t_{i'} \cdot \hat{p}^{(i')} 
\end{eqnarray}

Necessity: Assume that Inequality (\ref{eqn:um_u2pe}) holds.
Substituting $t = k_i$ and $t_i = k_i-1$ in Inequality (\ref{eqn:um_u2pe}), we get $\check{p}^{(i)} \le \hat{p}^{(i')}$ for every $i' \neq i$. Together with the assumption that $\check{p_j} \le \hat{p_j}$ for every item $j$, 
condition (\ref{um_cond_1}) follow.
Substituting 
$t_i = t \le k_i$ and $t_{i'} = 0$ for every $i' \neq i$ in Inequality (\ref{eqn:um_u2pe}), we get
$$\check{p}^{(i)} 
 \le  
\frac{v_i(k_i) - v_i(t)}{(k_i-t)},~~~ \textrm{for every $i \in [n]$ and every $0 \le t < k_i$}$$
which is equivalent to
$$\check{p}^{(i)} \le \min_{t = 0, 1, \ldots k_i-1} \set{ \frac{v_i(k_i) - v_i(t)}{(k_i-t)}} =
 \min_{l = 1, 2, \ldots k_i} \set{ \frac{v_i(k_i) - v_i(k_i-l)}{l}} = \overleftarrow{\Delta}_{v_i}(k_i) = \overleftarrow{\Delta}_{v_i}(|S_i|)$$
 for every $i \in [n]$, which is precisely condition (\ref{um_cond_2}. Condition (\ref{um_cond_3}) is obtained by applying $t_i = k_i < t$ 
 in Inequality (\ref{eqn:um_u2pe}). That is,
 \begin{eqnarray}
 \label{eqn:um_u2pe_1}
  \sum_{i' \neq i} t_{i'} \cdot \hat{p}^{(i')} \ge v_i(t) - v_i(k_i)
 \end{eqnarray}
 for every $i \in [n]$ and every $k_i < t \le m$.

Sufficiency: Assume that conditions (\ref{um_cond_1}), (\ref{um_cond_2}) and (\ref{um_cond_3}) of the proposition hold.
Fix buyer $i$, and some $t$. One can easily verify that, since $\check{p}^{(i)} \le \hat{p}^{(i')}$ for every $i' \neq i$, the right hand side of Inequality (\ref{eqn:um_u2pe}) attains its highest value when $t_i = min \set{k_i,t}$, and $\sum_{i' \neq i} t_{i'} = t - t_i$.  
Therefore, it suffices to satisfy Inequality (\ref{eqn:um_u2pe}) for this case. If $t \le k_i$, then $t_i = t$, $t_{i'} = 0$ for every $i' \neq i$, and if $t > k_i$ then $t_i = k_i$, $\sum_{i' \neq i} t_{i'} = t - k_i$. Therefore, it remains to show that: 

\begin{eqnarray}
v_i(k_i) - v_i(t) 
& \ge &
	\begin{cases}
		(k_i-t) \cdot \check{p}^{(i)} & 0 \le t \le k_i \\
		- \sum_{i' \neq i} t_{i'} \cdot \hat{p}^{(i')} &k_i < t \le m
	\end{cases} \\
\end{eqnarray}
for every $i \in [n]$ and every $t, t_1, t_2, \ldots, t_n$, s.t. $t_i = min \set{k_i,t}$, $t = \sum_{i' \in [n]} t_{i'}$ and for every $i' \in [n]$, $0 \le t_{i'} \le k_{i'}$.
One can easily verify that the above Inequality holds whenever conditions (\ref{um_cond_2}) and (\ref{um_cond_3}) of the proposition hold.



\end{proof}

\begin{proof} [Proof of Proposition \ref{prop:2pe_ecp_n_suff_cond}]
From condition (\ref{u2pe_cond_5}) we get market clearance.
From condition (\ref{u2pe_cond_4}), for every $i$ and $i' \neq i$, we have: 
$$\hat{p}^{(i')} \ge \overrightarrow{\Delta}_{v_i}(|S_{i}|) = \overrightarrow{\Delta}_{v_i}(k_i) \ge \frac{v_i(t) - v_i(k_i)}{t-k_i}$$
for every $k_i < t \le m-k_i$,
and therefore, 
$$\sum_{i' \neq i} t_{i'} \cdot \hat{p}^{(i')} \ge \sum_{i' \neq i} t_{i'} \cdot \frac{v_i(t) - v_i(k_i)}{t-k_i} = \frac{v_i(t) - v_i(k_i)}{t-k_i} \cdot \sum_{i' \neq i} t_{i'} = v_i(t) - v_i(k_i)$$
for every $t$ and $i$. That is, condition (\ref{um_cond_3}) of Proposition \ref{prop:u2pe_necess_cond} holds and together with conditions (\ref{um_cond_1}), (\ref{um_cond_2}) and (\ref{um_cond_3}) of the current proposition, all the conditions of Proposition \ref{prop:u2pe_necess_cond} hold. Therefore, utility maximization holds and hence $(\mathbf{S}, \mathbf{\hat{p}}, \mathbf{\check{p}})$ is a U-2PE.
\end{proof}

\section{A simpler lower bound}
\label{lb_simple}

\begin{theorem}
\label{thm:simple_ex}
There exists a market with identical items and 2 subadditive buyers that admits no 2PE with discrepancy smaller than $\frac{6}{5}$.
\end{theorem}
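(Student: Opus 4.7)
The plan is to exhibit an explicit market with $m$ identical items and two symmetric subadditive valuations $v_1,v_2:\{0,\dots,m\}\to\reals^+$, and then verify by direct enumeration that every 2PE for it has discrepancy at least $6/5$. Since there are only $m+1$ candidate allocations (parametrized by the size $k$ of buyer~$1$'s bundle) and, by Proposition~\ref{prop:2pe_ident_same_customer_prices}, we may restrict attention to uniform prices within each bundle, the whole argument reduces to a finite case analysis.

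I would first reduce to U-2PEs via Proposition~\ref{prop:2pe_ident_same_customer_prices} and then specialize the necessary-and-sufficient conditions of Proposition~\ref{prop:u2pe_necess_cond} to $n=2$ buyers over identical items: the equilibrium constraints collapse to $\hat{p}^{(i)}\ge\overrightarrow{\Delta}_{v_{3-i}}(k_{3-i})$, $\check{p}^{(i)}\le\overleftarrow{\Delta}_{v_i}(k_i)$, and $\check{p}^{(i)}\le\min_{j}\hat{p}^{(j)}$ for $i\in\{1,2\}$. For each allocation $(k,m-k)$ the discrepancy
\[
D(k,m-k)=\frac{k(\hat{p}^{(1)}-\check{p}^{(1)})+(m-k)(\hat{p}^{(2)}-\check{p}^{(2)})}{v_1(k)+v_2(m-k)}
\]
is a piecewise-linear function of the four prices, and its minimum subject to the above constraints admits a closed form in terms of the slopes $\overrightarrow{\Delta}_{v_i}(k_i)$ and $\overleftarrow{\Delta}_{v_i}(k_i)$. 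I would compute these slopes from the explicit construction of $v_1,v_2$ for every $k$, evaluate the minimum discrepancy $D^{*}(k,m-k)$, and verify that $D^{*}(k,m-k)\ge 6/5$ for every $k\in\{0,\dots,m\}$.

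The main obstacle is the design of $v_1,v_2$ so that no allocation falls below the $6/5$ threshold. The delicate regime is the ``middle'' allocations where both $\overleftarrow{\Delta}_{v_i}(k_i)=0$: there the low prices are forced to zero and $D^{*}$ reduces to $\bigl(k\,\overrightarrow{\Delta}_{v_2}(m-k)+(m-k)\,\overrightarrow{\Delta}_{v_1}(k)\bigr)/(v_1(k)+v_2(m-k))$, while subadditivity simultaneously caps quantities such as $\overrightarrow{\Delta}_{v}(m-1)\le v(1)$, so the jumps of $v_1,v_2$ must be placed carefully. A natural candidate, mirroring in miniature the three-piece staircase used to prove Theorem~\ref{ex:d_138}, has steep slopes at the two endpoints of the domain and a flatter plateau in between, with lengths calibrated so that the $(0,m)$, the $(1,m-1)$, and the symmetric middle allocations all saturate the $6/5$ bound. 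Verifying subadditivity together with the case analysis for such a construction is the technical crux.
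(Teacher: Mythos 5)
Your high-level plan matches the paper's: reduce to U-2PEs via Proposition~\ref{prop:2pe_ident_same_customer_prices}, specialize the necessary-and-sufficient slope conditions of Propositions~\ref{prop:u2pe_necess_cond} and~\ref{prop:2pe_ecp_n_suff_cond} to $n=2$, and turn the problem into a finite minimization over the $m+1$ allocations $(k,m-k)$. That reduction is exactly right and is what the paper uses.

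However, the proposal stops short of a proof. The theorem is an existence claim, so the burden is to exhibit a concrete market and actually carry out the verification; you explicitly identify that step (``the design of $v_1,v_2$ \ldots\ the technical crux'') and then leave it undone. There is no valuation written down, no subadditivity check, and no table of slope values or discrepancies, so nothing is actually proved. Moreover, the speculative intuition you give about where the bound would bind does not match the example the paper constructs: the paper takes two \emph{identical} buyers over $m=27$ items with the staircase $v(k)=1,2,3,4,5,6$ on the blocks $[1,4]$, $[5,10]$, $[11,16]$, $[17,22]$, $[23,26]$, $\{27\}$, and the minimum discrepancy $\nicefrac{6}{5}$ is attained at the off-center allocation $(6,21)$, not at $(0,m)$, $(1,m-1)$, or the symmetric middle split (those give $\nicefrac{81}{22}$, $\nicefrac{29}{24}$, and $\nicefrac{47}{36}$, all strictly above $\nicefrac{6}{5}$). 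So even as a design heuristic, ``calibrate the endpoints and the middle to saturate the bound simultaneously'' would steer you to the wrong construction. To turn this into a proof you would need to produce an explicit subadditive $v$, compute $\overrightarrow{\Delta}_v(k)$ and $\overleftarrow{\Delta}_v(k)$ for every $k$, and check that $\min_k D^{*}(k,m-k)\ge\nicefrac{6}{5}$; that is the entire content the paper supplies in its table and is precisely what is missing here.
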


\begin{proof}
Consider a setting with two buyers with the following identical subadditive valuation over 27 identical items:
\begin{eqnarray*}
v_1(k) = v_2(k) = 
	\begin{cases}
		1 & 1 \le k \le 4 \\
		2 & 5 \le k \le 10 \\
		3 & 11 \le k \le 16 \\
		4 & 17 \le k \le 22 \\
		5 & 23 \le k \le 26 \\
		6 & k = 27 \\
	\end{cases} \\
\end{eqnarray*}

To find the minimum possible discrepancy of any 2PE for this setting, we need to consider all 14 possible allocations (up to symmetry). 
Table \ref{table:d2} gives the values of the max-forward-slope and min-backward-slope of each allocation for each buyer. One can now use Proposition \ref{prop:2pe_ecp_n_suff_cond} to  calculate the maximum $\mathbf{\check{p}}$ and minimum $\mathbf{\hat{p}}$ and the corresponding minimum discrepancy for every allocation. 
One can verify that the minimum discrepancy over all allocations is achieved for $(S_1,S_2) = (6,21)$, with discrepancy  $\frac{6}{5}$.

\begin{table}[h!]
\centering
\begin{tabular} { c | c | c | c | c | c | c | c | c | c | c  }

  $k_1$ & $k_2$ & $\overrightarrow{\Delta}_{v}(k_1)$ & $\overleftarrow{\Delta}_{v}(k_1)$ & $\overrightarrow{\Delta}_{v}(k_2)$ & $\overleftarrow{\Delta}_{v}(k_2)$ & $\hat{p}^{(1)}$ & $\hat{p}^{(2)}$ & $\check{p}^{(1)}$ & $\check{p}^{(2)}$ & $d$\\ [0.5ex]
  \hline
  0 & 27 & $1$ & $n.a.$ & $n.a.$ & $\frac{2}{11}$ & $n.a.$ & $1$ & $n.a.$ & $\frac{2}{11}$ & $\frac{81}{22}$ \\ [0.5ex]
  
  1 & 26 & $\frac{1}{4}$ & $\frac{1}{4}$ & $1$ & $0$ & $1$ & $\frac{1}{4}$ & $\frac{1}{4}$ & $0$ & $\frac{29}{24}$\\ [0.5ex]
  
  2 & 25 & $\frac{1}{3}$ & $0$ & $\frac{1}{2}$ & $0$ & $\frac{1}{2}$ & $\frac{1}{3}$ & $0$ & $0$ & $\frac{14}{9}$ \\ [0.5ex]
  
  3 & 24 & $\frac{1}{2}$ & $0$ & $\frac{1}{3}$ & $0$ & $\frac{1}{3}$ & $\frac{1}{2}$ & $0$ & $0$ & $\frac{13}{6}$\\ [0.5ex]
  
  4 & 23 & $1$ & $0$ & $\frac{1}{4}$ & $\frac{1}{6}$ &  $\frac{1}{4}$ & $1$ & $0$ & $\frac{1}{6}$ & $\frac{121}{36}$\\ [0.5ex]
  
  5 & 22 & $\frac{2}{11}$ & $\frac{1}{4}$ & $1$ & $0$ & $1$ & $\frac{2}{11}$ & $\frac{2}{11}$ & $0$ & $\frac{89}{66}$ \\ [0.5ex]
  
  6 & 21 & $\frac{1}{5}$ & $0$ & $\frac{1}{2}$ & $0$ & $\frac{1}{2}$ & $\frac{1}{5}$ & $0$ & $0$ & $\frac{6}{5}$\\ [0.5ex]
  
  7 & 20 & $\frac{1}{4}$ & $0$ & $\frac{1}{3}$ & $0$ & $\frac{1}{3}$ & $\frac{1}{4}$ & $0$ & $0$ & $\frac{11}{9}$ \\ [0.5ex]
  
  8 & 19 & $\frac{1}{3}$ & $0$ & $\frac{1}{4}$ & $0$ & $\frac{1}{4}$ & $\frac{1}{3}$ & $0$ & $0$ & $\frac{25}{18}$ \\ [0.5ex]
  
  9 & 18 & $\frac{1}{2}$ & $0$ & $\frac{2}{9}$ & $0$ & $\frac{2}{9}$ & $\frac{1}{2}$ & $0$ & $0$ & $\frac{11}{6}$ \\ [0.5ex]
  
  10 & 17 & $1$ & $0$ & $\frac{1}{5}$ & $\frac{1}{6}$ & $\frac{1}{5}$ & $1$ & $0$ & $\frac{1}{6}$ & $\frac{97}{36}$ \\ [0.5ex]
  
  11 & 16 & $\frac{3}{16}$ & $\frac{1}{6}$ & $1$ & $0$ & $1$ & $\frac{3}{16}$ & $\frac{1}{6}$ & $0$ & $\frac{73}{36}$ \\ [0.5ex]
  
  12 & 15 & $\frac{1}{5}$ & $0$ & $\frac{1}{2}$ & $0$ & $\frac{1}{2}$ & $\frac{1}{5}$ & $0$ & $0$ & $\frac{3}{2}$ \\ [0.5ex]
  
  13 & 14 & $\frac{1}{4}$ & $0$ & $\frac{1}{3}$ & $0$ & $\frac{1}{3}$ & $\frac{1}{4}$ & $0$ & $0$ & $\frac{47}{36 }$ \\ [0.5ex]
  
\end{tabular}
\caption{The first six columns contain the values of the max-forward-slope and min-backward-slope of each allocation and each buyer. Based on these values, the next 5 columns contain the values of the minimum and maximum high and low prices respectively and the minimum discrepancy for the row's allocation.}
\label{table:d2}
\end{table}

\end{proof}



\newpage







        

























\end{document}